%% file: arxiv.tex
\documentclass[12pt,twoside]{article}
\usepackage{fancyhdr}
\PassOptionsToPackage{hyphens}{url}\usepackage[bookmarks=false]{hyperref}
\usepackage{amsfonts,epsfig,graphicx}
\usepackage{afterpage}
\usepackage{amsmath,amssymb,amsthm} 
\usepackage{mathtools}
\usepackage{fullpage}
\usepackage[T1]{fontenc} 
\usepackage{epsf} 
\usepackage{graphics} 
\usepackage{amsfonts,amsmath}
\usepackage{natbib} 
\usepackage{psfrag,xspace}
\usepackage{color,etoolbox}
\usepackage{comment}

\usepackage[dvipsnames,table]{xcolor}
\hypersetup{colorlinks,
citecolor=RoyalBlue,
urlcolor=RoyalBlue,
linkcolor=RoyalBlue}

\usepackage[
    letterpaper,
    margin=1in,
    ]{geometry}

\theoremstyle{plain}
\newtheorem{theorem}{Theorem}
\newtheorem{lemma}{Lemma}
\newtheorem{proposition}{Proposition}
\newtheorem{corollary}{Corollary}
\newtheorem{assumption}{Assumption}
\theoremstyle{definition}
\newtheorem{definition}{Definition}
\newtheorem{example}{Example}

\newtheorem{game}{Game}
\theoremstyle{remark}
\newtheorem{remark}{Remark}

\usepackage{stix2}

\usepackage{subcaption}
\usepackage{makecell}
\usepackage{multirow}
\usepackage{outlines}
\usepackage{cancel}
\usepackage{footnote}
\usepackage{pifont}
\usepackage{rotating}
\usepackage{dsfont}
\usepackage[shortlabels]{enumitem}
\usepackage{accents}
\usepackage{booktabs}
\usepackage{setspace}

\input{contents/macros}

\usepackage{etoolbox}
\newtoggle{compact}
\togglefalse{compact}

\newcommand{\arxiv}{1}

\title{%
\vspace{-2em}
{\bf Betting on Bets:\\ Anytime-Valid Tests for Stochastic Dominance}
\vspace{0.5em}
}
\author{%
    {\bf Sebastian Arnold}\thanks{Co-first-author (equal contribution). The names of these authors are ordered alphabetically.} \\
    CWI \\
    \email{sebastian.arnold@cwi.nl} 
    \and
    {\bf Yo Joong Choe}\footnotemark[1] \\
    INSEAD \\
    \email{yojoong.choe@insead.edu} 
    \and
    {\bf Marco Scarsini} \\
    LUISS \\
    \email{marco.scarsini@luiss.it} 
    \and
    {\bf Ilia Tsetlin} \\
    INSEAD \\
    \email{ilia.tsetlin@insead.edu}
    \vspace{1em}
}
\date{\normalsize July 31, 2026}

\begin{document}

\onehalfspacing

\maketitle

\vspace{-2em}
\begin{abstract}
    \input{contents/abstract}

\end{abstract}

\clearpage

\input{contents/body}


\subsection*{Acknowledgements}
\input{contents/acknowledgements}

\bibliography{contents/references}
\bibliographystyle{apalike}

\clearpage

\appendix
\begin{center}\section*{Appendix}\end{center}
\input{contents/appendix}

\end{document}

%% file: contents/macros.tex
\newcommand{\email}[1]{\href{mailto:#1}{\textcolor{black}{\texttt{#1}}}}

\newcommand{\inparen}[1]{\left(#1\right)}
\newcommand{\incurly}[1]{\left\{#1\right\}}
\newcommand{\insquare}[1]{\left[#1\right]}

\DeclareMathOperator*{\indsymb}{\mathds{1}}
\newcommand{\indicator}[1]{\indsymb\inparen{#1}}

\DeclareMathOperator*{\argmax}{argmax}

\newcommand{\N}{\mathbb{N}}

\newcommand{\Q}{\mathbb{Q}}
\newcommand{\R}{\mathbb{R}}

\newcommand{\calF}{\mathcal{F}}
\newcommand{\calG}{\mathcal{G}}
\newcommand{\calH}{\mathcal{H}}

\newcommand{\calN}{\mathcal{N}}

\newcommand{\calP}{\mathcal{P}}
\newcommand{\calQ}{\mathcal{Q}}

\newcommand{\calU}{\mathcal{U}}

\newcommand{\calZ}{\mathcal{Z}}

\newcommand{\Esymb}{\mathbb{E}}
\newcommand{\Psymb}{\mathbb{P}}
\newcommand{\Qsymb}{\mathbb{Q}}

\newcommand{\prob}[1]{\Psymb\left({#1}\right)}

\newcommand{\ex}[1]{\Esymb\left[{#1}\right]}
\newcommand{\Ex}[2]{\Esymb_{{#1}}\left[{#2}\right]}

\newcommand{\bbQ}{\mathbb{Q}}

\newcommand{\frakB}{\mathfrak{B}}

\newcommand{\e}{\mathfrak{e}}

\newcommand{\fbet}{\lambda}  
\newcommand{\up}{\mathsf{UP}}

\newcommand{\sd}{\preceq}
\newcommand{\notsd}{\npreceq}

\newcommand{\fsd}{\preceq_{1}}
\newcommand{\notfsd}{\npreceq_{1}}
\newcommand{\ssd}{\preceq_{2}}

\newcommand{\tsd}{\preceq_{3}}

\newcommand{\ksd}{\preceq_{k}}

\newcommand{\fsdstrict}{\prec_{1}}

\newcommand{\ssdstrict}{\prec_{2}}

\newcommand{\tsdstrict}{\prec_{3}}

\newcommand{\diff}{\,\mathrm{d}}
\renewcommand{\epsilon}{\varepsilon}
\newcommand{\bX}{\boldsymbol{X}}
\newcommand{\bY}{\boldsymbol{Y}}

%% file: contents/abstract.tex
How can we monitor, in real time, whether one uncertain prospect has any upside over another? 
To answer this question, we develop a novel family of sequential, anytime-valid tests for \emph{stochastic dominance (SD),} a classical and popular notion for comparing entire distribution functions. 
The problem is distinct from that of testing mean dominance, and it is particularly useful when comparing distributions with similar means or with ordinal outcomes. 
We first derive powerful, nonparametric e-processes that quantify evidence against the null hypothesis that one prospect is stochastically dominated by another. 
For first-order SD, these e-processes are based on mixtures of growth-rate optimal e-variables, yielding a test of power one that retains validity under continuous monitoring. 
We then generalize the approach to sequential testing for higher-order SD and other integral stochastic orders. 
Empirically, we find that the tests are competitive in power with classical, non-anytime-valid SD tests.
Our real-world application examines a controversial phenomenon in baseball analytics, known as the ``third-time-through-the-order (3TTO) penalty,'' viewed as a monitoring problem. 
We close by sketching the complementary problem of testing whether a prospect has a \emph{definite} upside, formalizing conditions under which we can derive a powerful anytime-valid test.

%% file: contents/body.tex
\section{Introduction}\label{sec:introduction}

\emph{Stochastic dominance (SD),} also known as \emph{stochastic ordering,} is a classical framework for comparing an arbitrary pair of random variables~\citep[e.g.,][]{muller2002comparison,shaked2007stochastic}. 
In its simplest form, first-order SD (FSD) states that the CDF of one variable lies entirely below that of another; higher-order SD notions generalize this idea to iterated integrals of the CDFs.
Yet, SD is far more complex and nonparametric than dominance in means; in fact, no finite number of moment inequalities implies FSD or higher-order SD~\citep{brockett1992risk}.
SD is a cornerstone in economics and decision theory~\citep[e.g.,][]{fishburn1970utility,levy1992stochastic,muller2002comparison}, where a decision maker (DM) is tasked with comparing uncertain prospects under general classes of utility functions. 

Recent advances in the theory of \emph{sequential, anytime-valid inference (SAVI)} based on \emph{e-processes}~\citep{shafer2011test,grunwald2019safe,ramdas2020admissible,ramdas2022game} motivate us to revisit the problem of SD testing. 
While most classical SD tests are fixed-sample tests with asymptotic validity~\citep[e.g.,][]{mcfadden1989testing,barrett2003consistent,linton2010improved}, the SAVI framework warrants a new look at SD testing because it is particularly well-suited to this problem. 
First, SD testing is popular in applications where the data are inherently sequential and the goal is to compare and order uncertain prospects in the future based on past data.
This makes it desirable to use SAVI methods, which provably control the type I error under optional stopping (at adaptively chosen sample sizes). 
Second, e-processes are natural candidates for handling the composite and nonparametric nature of SD testing problems; as we shall see in this paper, these methods do not require any parametric assumptions on the distributions being compared. 
We will also see, particularly in Section~\ref{sec:testing_non_SD_null}, that SD testing itself poses unique challenges to the existing SAVI stack, as it will require both constructing new e-processes and rethinking the roles of null and alternative hypotheses.

To ground our discussion in real-world applications, consider the problem of testing whether a new prospect (treatment) $Y$ has any upside, relative to a benchmark (control) option $X$. 
This happens in a variety of contexts: a developer piloting a new subscription plan; the FDA approving a new drug for the market; a sports team manager deciding whether to take out a player in the later stages of a game; and so on. 
The DM needs to know whether the unknown prospect $Y$ deserves a look over the benchmark $X$ (e.g., the baseline subscription plan, the efficacy of an existing drug, and the player's usual performance earlier in the game). 
These random variables may be discrete or continuous, but we do not know their parametric forms. 
Over time, the DM wants to monitor the data and collect evidence on whether $Y$ has an upside and should be adopted as a viable option. 

The core statistical problem common to all of the above scenarios is to test for an ``upside'' of one uncertain outcome over another. 
We formalize this as testing whether $Y$ is stochastically dominated by $X$.
Denoting $\sd$ as an SD relation in some (e.g., usual) order, we test
\begin{equation}\label{eq:testing_sd_null}
    \calH_0: Y \sd X \quad \text{vs.} \quad \calH_1: Y \notsd X\,.
\end{equation} 
Rejecting the null would inform the DM that treatment $Y$ has an upside over $X$ and they should consider it as a contender. 
To address this problem in a rigorous and flexible manner, we develop e-processes and sequential tests that are
\begin{enumerate}[label=(\alph*)]
    \item \textbf{Anytime-valid:} The e-process is valid at \emph{any} adaptively chosen sample size, resulting in a sequential test that remains valid under continuous monitoring;
    \item \textbf{(Asymptotically) powerful:} The e-process construction is based on a mixture of growth-rate optimal (GRO) e-variables, such that under any alternative, it enjoys exponential growth and yields a sequential test with asymptotic power one;
    \item \textbf{Nonparametric:} The methods do not rely on parametric assumptions on the distributions of $X$ and $Y$, requiring only the i.i.d.~assumption (when testing FSD); 
    \item \textbf{Robust to cross-dependence:} These properties do not rely on independence between the two random variables, including when they are antimonotonic. 
\end{enumerate}

\begin{figure}[t]
    \centering
    \includegraphics[width=0.8\textwidth]{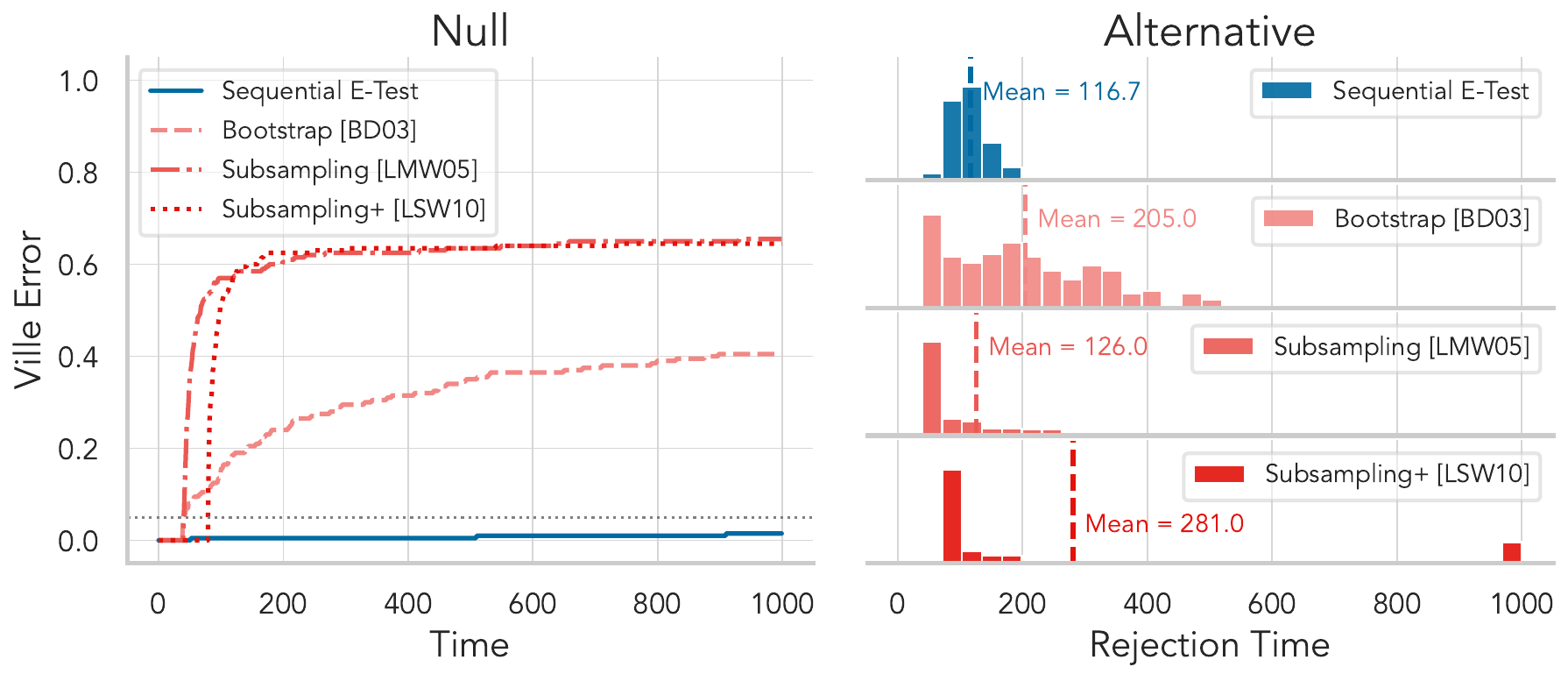}
    \caption{\emph{Sequential e-tests are anytime-valid, unlike classical resampling-based tests, and achieve competitive power.} 
    In the null case (left), we see that the \emph{cumulative} type I error under continuous monitoring (``Ville error'') is only controlled by the sequential e-test. 
    Under the alternative (right), we plot the histogram of first rejection times under continuous monitoring, along with the mean. 
    The rejection times of the e-test are no worse on average than the best-performing baseline and have a smaller spread.}
    \label{fig:comparison_classical_tests}
\end{figure}

Figure~\ref{fig:comparison_classical_tests} depicts the main methodology proposed in this paper (a ``sequential e-test''), in comparison with classical, non-anytime-valid methods for testing FSD using bootstrapping \citep[BD03]{barrett2003consistent}, subsampling \citep[LMW05]{linton2005consistent}, and ``improved'' subsampling \citep[LSW10]{linton2010improved}, on simulated data (additional details are deferred to Section~\ref{sec:simulations}). 
The left panel clearly shows that, under the null, all classical methods fail to control the type I error under continuous monitoring (Ville error), while the sequential e-test remains valid. 
The right panel shows that, under the alternative, the distribution of first rejection times under monitoring is comparable to (at the very least, not dominated by) the best-performing classical method for this case. 

\emph{Outline.}
After reviewing the background and related work in Section~\ref{sec:background_and_related_work}, we study the sequential SD testing problem~\eqref{eq:testing_sd_null} in Section~\ref{sec:FSD} for first-order SD and discuss extensions to higher-order SD.
Section~\ref{sec:simulations} contains simulations demonstrating the validity and power of the proposed methods; Section~\ref{sec:real_data_exp} applies them to examine the controversial ``3TTO penalty'' in Major League Baseball.
In Section~\ref{sec:testing_non_SD_null}, we then consider the reverse problem of affirming an SD relation before we conclude in Section~\ref{sec:discussion}. 
The Supplementary Material contains omitted proofs, detailed summaries of related work, additional experiments, and further extensions.

\section{Background and related work}\label{sec:background_and_related_work}

\subsection{Notation and background on statistical hypothesis testing}\label{sec:notation}

Throughout the paper, unless specified otherwise, we consider an i.i.d.\ sequence $(X_1,Y_1), (X_2,Y_2), \ldots$ of paired random variables (rv's) with common support $\calZ \subseteq \mathbb{R}$, equipped with the Borel $\sigma$-algebra, and write $(X,Y)$ for a generic pair distributed as $(X_1,Y_1)$. 
We assume that all random quantities are defined on an underlying measurable space $(\Omega,\calF)$ and denote by $\frakB$ the family of all probability measures on $(\Omega,\calF)$. 
For $t\in \N$, we let $(X,Y)^t=((X_i,Y_i))_{i=1}^t$, and define the data-filtration $\calF_t= \sigma((X,Y)^t)\subseteq \calF$.

We call a subset $\calP \subseteq \frakB$ a \emph{statistical hypothesis}. As usual, we denote by $\calH_0 \subseteq \frakB$ the \emph{null hypothesis}, and by $\calH_1 \subseteq \frakB$ the \emph{alternative hypothesis}. 
For any ${\Psymb}\in \frakB$, we let $\Psymb_{(X,Y)}$ be the distribution of $(X,Y)$ under ${\Psymb}$, that is the pushforward measure of ${\Psymb}$ under $(X,Y)$. Likewise, we denote by $\Psymb_X$ and $\Psymb_Y$ the marginal distributions of $X$ and $Y$ under ${\Psymb}$, respectively. 
Whenever the dependence on $\Psymb$ can be suppressed, we denote the CDFs of $X$ and $Y$ simply by $F_X$ and $F_Y$. Finally, we let $\Pi(\calZ)$ be the family of probability measures (mixtures) on $\calZ$. 

\subsection{Preliminaries on e-variables/processes and SAVI}

A nonnegative rv $E$ is called an \emph{e-variable} for $\calH_0 \subseteq \frakB$ if $\Ex{\Psymb}{E} \leq 1$ for all ${\Psymb} \in \calH_0$. 
Likewise, adapted, nonnegative rv's $(E_t)_{t\in \N}$ are called \emph{sequential e-variables} (for $\calH_0$) if $\Esymb_{\Psymb}[E_t\mid \calF_{t-1}] \leq 1$ for all $t\in \N$ and for all $\Psymb\in\calH_0$. 
The running product of a sequence of sequential e-variables (for $\calH_0$) forms a \emph{test (super)martingale (SM)}, that is, a nonnegative (super)martingale with respect to any $\Psymb\in \calH_0$ starting at one.
Often, a test SM is interpreted as the wealth process of a skeptic playing a betting game (see Remark~\ref{remark:betting_on_bets}).

The key tool for anytime-valid testing is Ville's inequality \citep{ville1939etude}, a time-uniform extension of Markov's inequality that controls the probability a test SM ever exceeds a threshold.
\begin{lemma}[Ville's inequality]\label{lem:Ville}
    Let $(M_t)_{t\in \N}$ be a test (super)martingale for $\calH_0$. Then, 
    \begin{equation}\label{eq:Ville}
        \Psymb\left(\exists t\in \N: M_t \geq \frac{1}{\alpha}\right) \leq \alpha, \quad \forall \Psymb \in \calH_0, \forall \alpha \in (0,1). 
    \end{equation}
\end{lemma}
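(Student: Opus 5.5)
The plan is to fix an arbitrary $\Psymb\in\calH_0$ and $\alpha\in(0,1)$ and reduce the time-uniform statement to a single application of Markov's inequality at a stopping time. Since the event in question and the supermartingale property are both defined relative to the fixed $\Psymb$, everything below is for this one measure. By definition of a test supermartingale, $(M_t)$ is nonnegative, adapted to $(\calF_t)$, and satisfies $M_0=1$ with $\Esymb_{\Psymb}[M_t\mid\calF_{t-1}]\le M_{t-1}$. A test martingale is a special case, so it suffices to treat supermartingales.

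Define the first crossing time $\tau=\inf\{t\in\N: M_t\ge 1/\alpha\}$, with $\inf\emptyset=\infty$; this is a stopping time with respect to $(\calF_t)$. The event $\{\exists t: M_t\ge 1/\alpha\}$ equals $\{\tau<\infty\}$, which is the increasing union of the events $\{\tau\le N\}$ over $N\in\N$. By continuity of measure, it therefore suffices to show $\Psymb(\tau\le N)\le\alpha$ for every fixed $N$.

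For fixed $N$, consider the bounded stopping time $\tau\wedge N$. The key step is the optional stopping inequality for supermartingales, $\Esymb_{\Psymb}[M_{\tau\wedge N}]\le \Esymb_{\Psymb}[M_0]=1$. To prove it directly, write $M_{\tau\wedge N}=M_0+\sum_{t=1}^N \indicator{\tau\ge t}(M_t-M_{t-1})$. Note that $\{\tau\ge t\}=\{\tau\le t-1\}^c\in\calF_{t-1}$. Conditioning each summand on $\calF_{t-1}$ then gives $\Esymb_{\Psymb}[\indicator{\tau\ge t}(M_t-M_{t-1})]\le 0$. Integrability is harmless here: each $M_t$ is nonnegative with $\Esymb_{\Psymb}[M_t]\le 1$ by iterating the supermartingale inequality.

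The conclusion then follows from Markov's inequality, using nonnegativity and the fact that $M_{\tau\wedge N}\ge 1/\alpha$ on $\{\tau\le N\}$:
\[
\Psymb(\tau\le N)\le \Psymb\left(M_{\tau\wedge N}\ge \tfrac{1}{\alpha}\right)\le \alpha\,\Esymb_{\Psymb}[M_{\tau\wedge N}]\le\alpha .
\]
Letting $N\to\infty$ gives the claim. Since $\Psymb\in\calH_0$ was arbitrary, the bound holds uniformly over the composite null.

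The only mildly delicate point is the optional stopping step, namely the measurability of $\{\tau\ge t\}$ with respect to $\calF_{t-1}$ and the handling of integrability. I would expect no real obstacle beyond this, since bounding the stopping time by $N$ avoids any uniform-integrability issues.
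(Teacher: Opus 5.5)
Your proof is correct and complete. You stop at the first crossing time $\tau$, apply optional stopping at the bounded time $\tau\wedge N$ via the decomposition with the $\calF_{t-1}$-measurable indicators $\indicator{\tau\ge t}$, then use Markov's inequality and continuity of measure as $N\to\infty$; this is the standard proof, and $\tau\ge 1$ holds automatically since $M_0=1<1/\alpha$. The paper does not prove this lemma itself but cites it as a classical result of Ville (1939), so your argument does not miss anything from the paper's treatment.
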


Importantly, there are composite null hypotheses for which it is not possible to construct non-trivial test SMs~\citep[e.g.,][]{ramdas2021testing,henzi2023rank}. 
However, in such situations it may still be possible to find non-trivial e-processes. 
An \emph{e-process} for a composite hypothesis $\calH_0$ is a nonnegative adapted stochastic process $(E_t)_{t \in \N}$ such that $E_\tau$ is an e-variable, for any (possibly infinite) stopping time $\tau$, that is, $\Ex{\Psymb}{E_\tau}\leq 1$ for all $\Psymb \in \calH_0$. 
This notion of validity at arbitrary stopping times is referred to as \emph{anytime-validity}. 
Equivalently, an adapted process $(E_t)_{t \in \N}$ is an e-process for $\calH_0$ if it is upper bounded by a test SM, for each $\Psymb \in \calH_0$. 
Hence, any test SM is also an e-process, and Ville's inequality continues to hold for e-processes \citep{ramdas2022game}.
For any $\alpha\in (0,1)$ and e-process $(E_t)_{t \in \N}$ for $\calH_0$, by defining $\phi_t = \indicator{E_t \ge 1/\alpha}, t\in \N,$ we directly obtain a valid \emph{level-$\alpha$ sequential test}, that is, an adapted sequence of tests controlling the \emph{Ville error} (cumulative type I error), $\Psymb(\exists t\geq 1: \phi_t=1)$, at $\alpha$, under any $\Psymb\in \calH_0$.

\subsection{Preliminaries on stochastic dominance}\label{subsec:preliminaries_on_SD}

\emph{First-order SD} is the canonical SD relation, also known as the \emph{usual stochastic order}. 

\begin{definition}\label{def:FSD}
Given rv's $X$ and $Y$, we say that $X$ \emph{dominates $Y$ in the first order} if 
\begin{equation}\label{eq:def_FSD}
    F_X(z)\leq F_Y(z), \quad \forall z\in \calZ\,,
\end{equation}
and write in this case $Y \fsd X$ (or $\Psymb_Y \fsd \Psymb_X$ when the underlying law $\Psymb$ is specified).  
\end{definition}

FSD defines a partial order on the space of probability distributions: it is reflexive, antisymmetric and transitive, but it is not a total order as two distributions are incomparable whenever their corresponding CDFs cross. 
It is notably distinct from the complete order based on means --- in particular, $Y \fsd X$ implies $\Ex{\Psymb}{Y} \leq \Ex{\Psymb}{X}$ (if moments exist) but not vice versa.  
In practice, we do not have access to the distribution functions $F_X$ and $F_Y$, and we will have to estimate their relation in some way to draw inference on FSD. 

The following Lemma gives a well-known equivalent characterization for FSD. 
For a proof, see, e.g., Section 1.A of~\citet{shaked2007stochastic}. 
\begin{lemma}
\label{lem:FSD_equivalences}
For rv's $X$ and $Y$, we have $Y \fsd X$ if and only if $\Esymb[u(Y)] \leq \Esymb[u(X)]$ for all measurable and increasing functions $u:\calZ \to \R$ for which the expectations exist. 
\end{lemma}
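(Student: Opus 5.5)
We prove the two directions separately, working only with the marginal laws $\Psymb_X$ and $\Psymb_Y$. The tool for both is indicator functions of upper sets, $u_z=\indicator{\cdot>z}$. These are bounded, measurable and increasing, so their expectations always exist.

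\emph{Sufficiency.} Assume $\Esymb[u(Y)]\le\Esymb[u(X)]$ for every admissible increasing $u$. Fix $z\in\calZ$ and take $u=u_z$. This gives $\Psymb(Y>z)\le\Psymb(X>z)$, that is, $1-F_Y(z)\le 1-F_X(z)$, so $F_X(z)\le F_Y(z)$. Since $z$ was arbitrary, \eqref{eq:def_FSD} holds and $Y\fsd X$.

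\emph{Necessity.} Assume $F_X\le F_Y$ on $\calZ$. I would use a quantile coupling. Let $U$ be uniform on $(0,1)$ and set $\tilde X=F_X^{-1}(U)$ and $\tilde Y=F_Y^{-1}(U)$, with generalized inverses $F^{-1}(p)=\inf\{z:F(z)\ge p\}$. Then $\tilde X\sim\Psymb_X$ and $\tilde Y\sim\Psymb_Y$. The pointwise inequality $F_X\le F_Y$ means $\{z:F_X(z)\ge p\}\subseteq\{z:F_Y(z)\ge p\}$ for each $p$. Taking infima gives $F_Y^{-1}(p)\le F_X^{-1}(p)$, hence $\tilde Y\le\tilde X$ surely. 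For any increasing $u$ we then have $u(\tilde Y)\le u(\tilde X)$ pointwise. Taking expectations, and using that they exist by assumption and depend only on the marginals, yields $\Esymb[u(Y)]=\Esymb[u(\tilde Y)]\le\Esymb[u(\tilde X)]=\Esymb[u(X)]$.

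The two quantile variables take values in $\calZ$ almost surely, so $u$ is evaluated where it is defined. Note that the coupling ignores the actual joint law of $(X,Y)$, which is legitimate because the inequality concerns marginals only.

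\emph{Main obstacle.} The delicate step is the coupling. We must verify that $F^{-1}(U)\sim F$, which follows from the equivalence $F^{-1}(p)\le z\iff p\le F(z)$, a consequence of right-continuity of $F$. We must also handle $\calZ$ being a general subset of $\R$, where the CDFs are only compared on $\calZ$. Since both laws are supported on $\calZ$, the CDFs are constant between consecutive support points. So the comparison on $\calZ$ extends to all of $\R$: for $z\notin\calZ$, use the largest point of $\calZ\cap(-\infty,z]$, or a limit if none is attained.

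Alternatively, one can avoid the coupling for nonnegative $u$ via the layer-cake identity $\Esymb[u(Y)]=\int_0^\infty\Psymb(u(Y)>s)\diff s$. Each set $\{u>s\}$ is an upper interval, whose probability is ordered by $F_X\le F_Y$ (including closed upper intervals, via left limits). General integrable $u$ then follows by splitting into positive and negative parts.
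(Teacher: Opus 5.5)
Your proof is correct and follows essentially the standard argument behind the paper's citation of \citet{shaked2007stochastic}, since the paper gives no proof of its own: upper-set indicators $\indicator{\cdot>z}$ give sufficiency, and the quantile coupling $F_Y^{-1}(U)\le F_X^{-1}(U)$, i.e.\ the coupling characterization in their Section~1.A, gives necessity. Your extra step extending $F_X\le F_Y$ from $\calZ$ to all of $\R$ is also sound, because both laws put full mass on $\calZ$; it is needed because Definition~\ref{def:FSD} compares the CDFs only on $\calZ$, while the generalized inverses take infima over $\R$.
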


Lemma~\ref{lem:FSD_equivalences} characterizes FSD with ``utility functions'': under the expected-utility framework~\citep{savage1951theory}, $Y \fsd X$ means that any utility-maximizing DM who prefers more to less would prefer $X$ over $Y$. 
In this view, SD establishes a preference ordering between prospects for \emph{any} DM with a non-decreasing, but otherwise unspecified, utility function. 
A similar characterization holds for higher-order SD, which we discuss in Section~\ref{sec:testing_kSD_summary}.

\subsection{Related work on testing stochastic dominance}\label{sec:related_work_testing_SD}

Many modern SD testing methods were developed in the context of economics and decision theory, where utility-theoretic characterizations of SD are well-studied~\citep[e.g.,][]{fishburn1970utility,levy1992stochastic,mosler1993stochastic,muller2002comparison}. 
Of particular relevance to our discussion on \emph{testing} SD is \citet{mcfadden1989testing}'s asymptotic test for FSD, based on the \emph{one-sided Kolmogorov-Smirnov (KS) statistic}:
\begin{equation}\label{eqn:ks_stat}
    \hat{D}^\mathsf{KS} = \sup_{z \in \calZ}\, [\hat{F}_X(z) - \hat{F}_Y(z)]\,,
\end{equation}
where $\hat{F}_X$ and $\hat{F}_Y$ denote the empirical CDFs for $X$ and $Y$, respectively. 
Under the i.i.d.~assumption and additional regularity conditions on the CDFs (e.g., continuity), \citet{mcfadden1989testing} derives the limiting distribution for the one-sided KS statistic. 

The KS-based approach has since been improved and generalized to higher-order SD and non-standard dominance testing~\citep[e.g.,][]{kaur1994testing,abadie2002bootstrap,barrett2003consistent,donald2016improving,zhuang2024tests}.
It is common to make regularity assumptions on CDFs, including that of bounded support, when testing higher-order SD~\citep[e.g.,][]{barrett2003consistent,donald2016improving}. 
The main approach we discuss in Section~\ref{sec:FSD} can be viewed as an anytime-valid and non-asymptotic generalization of the one-sided KS test for FSD, whose validity does not rely on any regularity condition on the CDFs. 
For higher-order SD, the validity and power of our general approach only depend on having a finite lower bound. 

Aside from the KS statistic, a parallel line of work~\citep{linton2005consistent,linton2010improved} has popularized the \emph{one-sided Cram\'er-von Mises statistic}, a mixture of squared positive differences between the CDFs:
$ \hat{D}^\mathsf{mix}_\psi = \int_\calZ \insquare{(\hat{F}_X(z) - \hat{F}_Y(z))_+}^2  \psi(z) dz\,,$
where $\psi$ is a mixture on $\calZ$. 
See~\citet{whang2019econometric} for a recent review. 
These methods rely on more specific regularity conditions and validity under resampling (e.g., bootstrapping), both of which we avoid entirely in this work.

All aforementioned methods are only asymptotically valid at \emph{pre-specified} sample sizes; many also rely on regularity conditions to validate a resampling procedure. 
In \emph{anytime-valid} SD testing, a classical reference is \citet[][DR]{darling1968some}, who derive a nonparametric sequential test for first-order SD, alongside a two-sided KS test, under independent data streams. 
In the modern SAVI literature, \citet[][SR]{shekhar2023nonparametric} develop a general testing-by-betting framework for integral probability metrics, improving upon DR's two-sided KS test in terms of power. 
Notably, their early draft (arxiv.v1) mentions an extension to SD testing, although it was not pursued further and subsequently removed from their published manuscript. 
We position our work as the first dedicated study of sequential SD testing in its various forms, providing a powerful, robust, and broadly applicable framework.
At its core is an explicit derivation of (novel) growth-rate optimal e-variables for FSD. The framework accommodates flexible choices of mixture strategies that achieve asymptotic power one, and it subsumes one of SR's key ideas as a special case (the ``greedy'' mixture).

There are two more related SAVI papers to note. 
First, as a corollary to their general construction of time-uniform CDF bands, \citet{howard2022sequential} propose a sequential test for FSD. 
Second, building on the theory of reverse submartingales, \citet{manole2023martingale} derive time-uniform lower and upper confidence bounds for the two-sided KS statistic, which can be adapted to its one-sided version \eqref{eqn:ks_stat}. 
Whereas the first paper primarily studies time-uniform CDF estimation, the second focuses on convex divergence estimation, with FSD testing appearing only as a special case in both.
We will see in simulations that these sequential tests are generally far more conservative than tests specifically designed for SD.

\section{Anytime-valid SD testing via e-processes}\label{sec:FSD}

In this section, we derive e-processes and anytime-valid tests for the null hypothesis that one prospect stochastically dominates another.
While our framework generalizes to testing higher-order SD and other integral stochastic orders~\citep{feng2025integral}, we first focus our attention on testing first-order SD, and we return to the general case at the end of the section. 

\subsection{The FSD testing problem}\label{subsec:formulation_of_FSD_null}

We consider testing the null hypothesis that $Y$ is dominated by $X$ in the first order.  
First, we embed Definition~\ref{def:FSD} into a hypothesis testing problem by defining the \emph{FSD null} as
\begin{equation}\label{eq:def_FSD_null}
    \calH_0 = \{\Psymb \in \frakB \mid \Psymb_Y \sd_1 \Psymb_X \} =\{\Psymb \in \frakB \mid \Psymb(X\leq z) \leq \Psymb(Y\leq z),\, \forall z\in \calZ\}\,.
\end{equation}
The FSD null forms an intersection null, and may be written as $\calH_0 =\bigcap_{z \in \calZ}\calH_0 (z)$,
for 
\begin{equation}\label{eq:FSD_null_at_thresholds}
 \calH_0 (z)
 =\left\{\Psymb \in \frakB \mid \Ex{\Psymb}{\indicator{X \leq z} - \indicator{Y \leq z}}\leq 0\right\}, \quad z\in \calZ\,.
\end{equation}

\begin{lemma}\label{lem:FSD-null-is-convex}
    The FSD null~\eqref{eq:def_FSD_null} is convex.
\end{lemma}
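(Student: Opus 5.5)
The plan is to use the intersection representation $\calH_0=\bigcap_{z\in\calZ}\calH_0(z)$ from \eqref{eq:FSD_null_at_thresholds}. An arbitrary intersection of convex sets is convex, so it suffices to show that each $\calH_0(z)$ is a convex subset of $\frakB$.

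Here convexity means closure under mixtures of probability measures on $(\Omega,\calF)$. If $\Psymb,\Qsymb\in\frakB$ and $\lambda\in[0,1]$, then $\lambda\Psymb+(1-\lambda)\Qsymb$ is again a probability measure on $(\Omega,\calF)$, so it lies in $\frakB$.

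Fix $z\in\calZ$ and set $g_z=\indicator{X\leq z}-\indicator{Y\leq z}$. This is a bounded measurable function on $\Omega$, with values in $\{-1,0,1\}$, so its expectation exists under every measure in $\frakB$. The key step is that expectation is affine in the measure: for any bounded measurable $g$,
\begin{equation*}
\Esymb_{\lambda\Psymb+(1-\lambda)\Qsymb}[g]=\lambda\,\Ex{\Psymb}{g}+(1-\lambda)\,\Ex{\Qsymb}{g}.
\end{equation*}
This holds for indicators by the definition of the mixture measure, and it extends to $g_z$ by linearity. One could also avoid integrals entirely by writing $\Ex{\Psymb}{g_z}=\Psymb(X\le z)-\Psymb(Y\le z)$ and using linearity of the mixture on the events $\{X\le z\}$ and $\{Y\le z\}$.

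Consequently, if $\Ex{\Psymb}{g_z}\le 0$ and $\Ex{\Qsymb}{g_z}\le 0$, then the mixture's expectation is a convex combination of two nonpositive numbers and is therefore $\le 0$. Hence $\lambda\Psymb+(1-\lambda)\Qsymb\in\calH_0(z)$, so $\calH_0(z)$ is convex. Intersecting over all $z\in\calZ$ gives convexity of $\calH_0$.

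There is no real obstacle in this argument. The only point needing care is that the mixture acts on measures on $\Omega$, not on the joint laws of the random variables. Since $X$ and $Y$ are fixed measurable maps, the pushforward satisfies $(\lambda\Psymb+(1-\lambda)\Qsymb)_{X}=\lambda\Psymb_X+(1-\lambda)\Qsymb_X$, and similarly for $Y$. So the marginal CDFs mix linearly, which is exactly what the direct computation above uses.
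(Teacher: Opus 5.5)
Your proof is correct and is essentially the paper's argument. The paper fixes $z$, notes that the mixture's probabilities of $\{X\le z\}$ and $\{Y\le z\}$ are the corresponding convex combinations, and concludes the inequality directly; you package the same computation as convexity of each $\calH_0(z)$ followed by closure of convex sets under intersection, which is a purely organizational difference.
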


For completeness, we provide a proof in Supplement~\ref{sec:proofs}, where all other omitted proofs can also be found. 
Since any e-variable for $\calH_0$ is readily an e-variable for its convex hull, Lemma~\ref{lem:FSD-null-is-convex} implies that we can test the FSD null without having to test a much larger null. 

We remark that the FSD null~\eqref{eq:def_FSD_null} is defined solely in terms of the marginal distributions of $X$ and $Y$.
The following toy example illustrates how dependence between $X$ and $Y$ may influence the difficulty of achieving power against this null when testing it sequentially. 
\begin{example}
\label{ex:anticorr}
Consider $\Psymb \in \frakB$ with 
$\mathbb{P}_X=\frac{1}{2}\delta_{0}+\frac{1}{2}\delta_{{2/3}}$ and 
$\mathbb{P}_Y=\frac{1}{2}\delta_{1/3}+\frac{1}{2}\delta_{1}$, such that $X \fsd Y$ and $\Psymb \notin \calH_0$. 
These marginals can be realized by the \emph{comonotonic coupling} $\mathbb{P}_{(X,Y)} =\frac{1}{2}\delta_{(0, 1/3)} + \frac{1}{2}\delta_{(2/3, 1)}$ or by the \emph{antimonotonic coupling} $\mathbb{P}_{(X,Y)} = \frac{1}{2}\delta_{(0, 1)} + \frac{1}{2}\delta_{(2/3, 1/3)}$. 
We expect to reject $\calH_0$ faster in the former case, where $X_t\leq Y_t$ for all $t \in \N$, than in the second case, where $\mathbb{P}(X_t<Y_t)=\mathbb{P}(X_t>Y_t)=1/2$. 
\end{example}

\subsection{Testing the SD null hypothesis at thresholds}

We first construct e-variables for testing the FSD null for a fixed $z\in \calZ$ in a single round. 
While their construction is straightforward, these e-variables will serve as building blocks for deriving powerful e-processes that sequentially test $\calH_0$.
\begin{lemma}[The building-block e-variable]\label{lem:building-block-e-variables}
    For any $z \in \calZ$,  
    \begin{equation}\label{eq:def_coin_betting_e-variable}
        S(\lambda,z) \coloneqq
        1 + \lambda [\indicator{X \leq z} - \indicator{Y \leq z}]
    \end{equation}
    is an e-variable for $\calH_0(z)$ at \eqref{eq:FSD_null_at_thresholds}, for any $\lambda \in [0,1]$.
\end{lemma}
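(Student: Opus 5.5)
The plan is to check the two defining properties of an e-variable for $\calH_0(z)$ directly: nonnegativity of $S(\lambda,z)$ and $\Ex{\Psymb}{S(\lambda,z)}\leq 1$ for every $\Psymb\in\calH_0(z)$. Both follow from elementary facts about indicators, so the argument is short. No earlier lemma is needed beyond the definition of $\calH_0(z)$ in \eqref{eq:FSD_null_at_thresholds}.

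For nonnegativity, write $D_z \coloneqq \indicator{X\leq z}-\indicator{Y\leq z}$. Since both indicators take values in $\{0,1\}$, we have $D_z\in\{-1,0,1\}$ pointwise on $\Omega$. Hence for $\lambda\in[0,1]$,
\[
S(\lambda,z)=1+\lambda D_z\geq 1-\lambda\geq 0 .
\]
This is exactly where the restriction $\lambda\leq 1$ enters: a larger $\lambda$ would allow $S(\lambda,z)<0$ on the event $\{X>z,\,Y\leq z\}$. The restriction $\lambda\geq 0$ is what aligns the bet with the direction of the one-sided null.

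For the expectation bound, $D_z$ is bounded, so $S(\lambda,z)$ is integrable under every $\Psymb\in\frakB$. By linearity,
\[
\Ex{\Psymb}{S(\lambda,z)}=1+\lambda\,\Ex{\Psymb}{D_z}.
\]
If $\Psymb\in\calH_0(z)$, then $\Ex{\Psymb}{D_z}\leq 0$ by definition, and $\lambda\geq 0$, so $\Ex{\Psymb}{S(\lambda,z)}\leq 1$.

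No step is a genuine obstacle. The only points needing care are the sign and range constraints on $\lambda$: the lower bound $\lambda\geq 0$ is used in the expectation step and the upper bound $\lambda\leq 1$ in the nonnegativity step. It is also worth remarking that the argument never uses independence of $X$ and $Y$. Only the marginal probabilities $\Psymb(X\leq z)$ and $\Psymb(Y\leq z)$ enter the expectation, which is consistent with the claimed robustness to cross-dependence.
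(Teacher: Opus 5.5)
Your proof is correct and follows the paper's argument: it checks nonnegativity, using $D_z\in\{-1,0,1\}$ together with $\lambda\leq 1$. It then bounds the expectation by linearity, using $\lambda\geq 0$ and the defining inequality of $\calH_0(z)$; the paper states these same two facts in a single line.
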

\begin{proof}
    $S(\lambda,z)\geq 0$, for any $\lambda \in [0,1]$, and $\Ex{\Psymb}{S(\lambda,z)}\leq 1$ for any $\Psymb\in \calH_0(z)$. 
\end{proof}
E-variables of the form~\eqref{eq:def_coin_betting_e-variable} appear ubiquitously in the literature on testing the means of bounded rv's. 
Their origins go back at least to optimal gambling strategies for coin betting due to \citet{kelly1956new}, where $\lambda$ is the gambler's choice on how much wealth to place on the coin outcome~\citep[see, e.g.,][]{orabona2023tight,waudbysmith2020estimating}.
\citet{Eugenio_JAR} recently showed that the family $\{S(\lambda,z) \mid \lambda \in [0,1]\}$ are the only admissible e-values for the hypothesis $\calH_0(z)$, in the sense that for any e-variable $E$ for $\calH_0(z)$, there exists $\lambda \in [0,1]$ such that $E\leq S(\lambda,z)$ almost surely. 
This implies that it suffices to focus on e-variables of the form \eqref{eq:def_coin_betting_e-variable}, reducing the problem of finding powerful e-variables to choosing suitable betting parameters $\lambda \in [0,1]$. 

\begin{remark}[Betting on bets]\label{remark:betting_on_bets}
    The building block e-variable~\eqref{eq:def_coin_betting_e-variable} has a particularly natural interpretation in the testing-by-betting language~\citep{shafer2019game}. 
    Given two uncertain prospects $X$ and $Y$ (say, stock and bond returns), suppose a forecaster claims that the probability of $X$ falling below a threshold $z$ (say, 0\%) is smaller than that of $Y$. 
    He then proposes a \emph{double-or-nothing-or-push} bet, on what are essentially ``bets'' themselves, with three possible outcomes. 
    If $X \leq z$ but $Y > z$, then the money is doubled; if $X > z$ but $Y \leq z$, then the money is lost; if both fall above or below $z$, then nothing happens.
    The skeptic, endowed with a dollar to spare on this bet, chooses to invest a $\lambda$-fraction of her money on this bet and keep $(1-\lambda)$ in her pocket. 
    Then, denoting the ternary outcome as $D(z) = \indicator{X \leq z} - \indicator{Y \leq z}$, the e-variable~\eqref{eq:def_coin_betting_e-variable} is precisely the money she ends up with after a single round of this bet: $S(\lambda, z) = 1 + \lambda D(z) = (1 - \lambda) + \lambda [1 + D(z)]$. 
    Over repeated rounds, the skeptic's wealth quantifies evidence against the forecaster's claim, which corresponds to $\calH_0(z)$ in~\eqref{eq:FSD_null_at_thresholds}. 
\end{remark}

We will repeatedly use the lemma below, which follows from the fact that $\calH_0$ forms an intersection hypothesis and that e-variables are closed under taking mixtures~\citep[see, e.g.,][]{vovk2021evalues}. 
\begin{lemma}\label{lem:finite_mixtures_are_sufficient_to_test_FSD_null}
    Given any $\psi \in \Pi(\calZ)$, the mixture $\int S(\lambda(z), z) \diff \psi(z)$ is an e-variable for $\calH_0$, where for each $z \in \calZ$, $S(\lambda(z), z)$ is the building-block e-variable~\eqref{eq:def_coin_betting_e-variable} for any $\lambda(z)\in [0,1]$.
\end{lemma}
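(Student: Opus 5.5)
The plan is to check the two defining properties of an e-variable directly for the mixture $M_\psi \coloneqq \int S(\lambda(z), z) \diff \psi(z)$: nonnegativity and expectation at most one under every $\Psymb \in \calH_0$. The only extra care needed is to make sure the integral is a well-defined random variable, so that Fubini--Tonelli applies.

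First, measurability. I would assume, as is implicit in the statement, that $z \mapsto \lambda(z)$ is Borel measurable; for finite mixtures this is automatic. The map $(\omega, z) \mapsto \indicator{X(\omega) \leq z} - \indicator{Y(\omega) \leq z}$ is jointly measurable, because $\{(\omega,z) : X(\omega) \leq z\}$ is the preimage of the closed set $\{(x,z): x \le z\}$ under the measurable map $(\omega,z)\mapsto (X(\omega),z)$. Hence $(\omega,z)\mapsto S(\lambda(z),z)(\omega)$ is jointly measurable. It also takes values in $[0,2]$, since $\lambda(z)\in[0,1]$ and the bracket lies in $\{-1,0,1\}$. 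So $M_\psi$ is a well-defined $[0,2]$-valued random variable, and in particular it is nonnegative.

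Second, the expectation bound. Fix $\Psymb \in \calH_0$. Since $\calH_0 = \bigcap_{z} \calH_0(z)$, we have $\Psymb \in \calH_0(z)$ for every $z \in \calZ$. Lemma~\ref{lem:building-block-e-variables} then gives $\Ex{\Psymb}{S(\lambda(z),z)} \le 1$ for every $z$; explicitly, this expectation equals $1 + \lambda(z)\bigl(\Psymb(X\le z) - \Psymb(Y\le z)\bigr) \le 1$. The integrand is nonnegative and bounded, so Tonelli's theorem lets us exchange the expectation with the $\psi$-integral:
\[
\Ex{\Psymb}{M_\psi} = \int \Ex{\Psymb}{S(\lambda(z),z)} \diff\psi(z) \le \int 1 \diff \psi(z) = 1,
\]
where the last step uses that $\psi$ is a probability measure.

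No step is a genuine obstacle. The only delicate point is the joint measurability that justifies Tonelli, and that is handled by the measurability of $\lambda$ together with the closed-set argument above.
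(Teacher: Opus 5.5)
Your proposal is correct and takes the same route as the paper, which justifies the lemma in one line: $\calH_0=\bigcap_{z}\calH_0(z)$, so each $S(\lambda(z),z)$ is an e-variable under every $\Psymb\in\calH_0$, and e-variables are closed under mixtures. Your Tonelli exchange is that mixture-closure step written out. Your joint-measurability remark, with $z\mapsto\lambda(z)$ assumed Borel, supplies a detail the paper leaves implicit.
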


By Lemma~\ref{lem:finite_mixtures_are_sufficient_to_test_FSD_null}, to test the global $\calH_0$, it suffices to come up with good e-variables $S(\lambda(z),z)$ at all thresholds and average the obtained evidence. 
In betting terms, the skeptic now splits the initial capital across thresholds $z \in \calZ$ (each representing a bet) according to weights $\psi$.

The next theorem shows how this idea translates directly when data arrives sequentially. 
For $z \in \calZ$ and $\lambda \in [0,1]$, we let $S_t(\lambda,z) \coloneqq S(\lambda, z; X_t, Y_t) = 1 + \lambda [\indicator{X_t \leq z} - \indicator{Y_t \leq z}]$.

\begin{theorem}
\label{thm:Test-supermartingale-for-the-FSD_null} 
Let $(\psi_t)_{t\in \N} \subseteq \Pi(\calZ)$, and $(\lambda_t(z))_{t\in \N} \subseteq [0,1]$, $z\in \calZ$, be predictable. Then,
 \begin{equation}\label{eq:mixture_seq_e_variable}
    S_t \coloneqq S_t(\lambda_t, \psi_t) \coloneqq \int_{\calZ} S_t(\lambda_t(z), z) \diff\psi_t( z), \quad t\in \N,
\end{equation}
are sequential e-variables for $\calH_0$, and its running product $E_t = \prod_{\ell=1}^t S_\ell,\, t\in \N,$ forms a test SM for $\calH_0$. 
In particular, $(E_t)_{t\in \N}$ yields a level-$\alpha$ sequential test for $\calH_0$ at any $\alpha \in (0,1)$.
\end{theorem}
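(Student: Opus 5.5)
The plan is to verify directly the two defining properties of sequential e-variables: adaptedness with nonnegativity, and a conditional expectation bound of at most one under every $\Psymb \in \calH_0$. The test SM property then follows by the tower property, and the sequential test follows from Ville's inequality (Lemma~\ref{lem:Ville}).

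\emph{Nonnegativity and measurability.} Since $\lambda_t(z)\in[0,1]$ and $\indicator{X_t \leq z} - \indicator{Y_t \leq z}\in\{-1,0,1\}$, each integrand satisfies $S_t(\lambda_t(z),z)\in[0,2]$. Because $\psi_t$ is a probability measure, this gives $S_t\in[0,2]$. For the integral to be well defined, I will assume that $(z,\omega)\mapsto \lambda_t(z)$ is jointly measurable with respect to $\mathcal{B}(\calZ)\otimes\calF_{t-1}$, and that $\psi_t$ is a $\calF_{t-1}$-measurable kernel. This is implicit in ``predictable'' and can be stated as a standing convention. Under these conditions $S_t$ is $\calF_t$-measurable, and $E_t$ is adapted and nonnegative with $E_0=1$.

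\emph{Conditional expectation bound.} This is the main step. Fix $\Psymb\in\calH_0$. Since $(X_t,Y_t)$ is independent of $\calF_{t-1}$ and has the same law as $(X,Y)$, and the integrand is bounded, Fubini's theorem for kernels (conditioning on $\calF_{t-1}$, under which $\lambda_t(\cdot)$ and $\psi_t$ are fixed) lets us exchange the conditional expectation and the $\psi_t$-integral:
\[
\Esymb_{\Psymb}[S_t\mid\calF_{t-1}] = \int_\calZ \Bigl(1+\lambda_t(z)\,\Esymb_{\Psymb}\bigl[\indicator{X\le z}-\indicator{Y\le z}\bigr]\Bigr)\diff\psi_t(z).
\]
By \eqref{eq:FSD_null_at_thresholds}, $\Psymb\in\calH_0(z)$ for every $z$, so each inner expectation is $\le 0$. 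With $\lambda_t(z)\ge0$, the integrand is at most one, and integrating against the probability measure $\psi_t$ gives a bound of one. This is the sequential analogue of Lemmas~\ref{lem:building-block-e-variables} and~\ref{lem:finite_mixtures_are_sufficient_to_test_FSD_null}. I expect the only delicate point to be the measure-theoretic justification of this interchange, which I would handle with the regular-conditional / independence argument: write the expectation as an integral over the law of $(X_t,Y_t)$ with $\calF_{t-1}$-measurable quantities frozen, then apply Tonelli to the nonnegative, bounded integrand.

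\emph{Conclusion.} We have $\Esymb_{\Psymb}[E_t\mid\calF_{t-1}]=E_{t-1}\Esymb_{\Psymb}[S_t\mid\calF_{t-1}]\le E_{t-1}$, so $(E_t)$ is a nonnegative supermartingale starting at one, that is, a test SM for every $\Psymb\in\calH_0$. Lemma~\ref{lem:Ville} then yields $\Psymb(\exists t: E_t\ge1/\alpha)\le\alpha$. Hence $\phi_t=\indicator{E_t\ge 1/\alpha}$ is a level-$\alpha$ sequential test.
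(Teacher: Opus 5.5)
Your proposal is correct and follows essentially the same route as the paper. The paper's proof invokes Lemma~\ref{lem:finite_mixtures_are_sufficient_to_test_FSD_null} together with predictability of $(\psi_t)$ and $(\lambda_t(z))$ to get the conditional bound, then takes the running product and applies Ville's inequality. Your version just makes explicit what that two-line argument leaves implicit: the standing i.i.d.\ assumption that lets you freeze $\calF_{t-1}$-measurable quantities, joint measurability of $\lambda_t(\cdot)$ and $\psi_t$, and the Fubini interchange.
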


\begin{proof}
$(S_t)_{t\in \N}$ are sequential e-variables by Lemma \ref{lem:finite_mixtures_are_sufficient_to_test_FSD_null} and predictability of $(\psi_t)_{t\in \N}$, and $(\lambda_t(z))_{t\in \N}$. 
Their running product forms thus a test SM satisfying Ville's inequality~\eqref{eq:Ville}. 
\end{proof}

Following Remark~\ref{remark:betting_on_bets}, we can interpret Theorem \ref{thm:Test-supermartingale-for-the-FSD_null} as initiating a protocol for betting games. 
\begin{game}[Testing SD by betting]\label{game:test_fsd}
    Let $E_0=1$ be the initial capital. For round $t = 1, 2, \dotsc$: 
        \begin{enumerate}[(a)]
            \item DM announces a mixture $\psi_t\in \Pi({\calZ})$ and the bets $\lambda_t(z) \in [0,1]$, for $z\in \calZ$.
            \item Reality announces the paired outcome $(x_t, y_t)$.
            \item DM's wealth multiplies by $S_t = S(\lambda_t, \psi_t; x_t, y_t)$ defined in~\eqref{eq:mixture_seq_e_variable}: $E_t= E_{t-1} \cdot S_t$. 
        \end{enumerate}
\end{game}
The DM now plays a sequence of betting games as a \emph{skeptic}, starting with unit capital. 
They can choose a different mixture at each round and bet any fraction of money earned from the previous rounds. 
Over time, DM's total wealth serves as evidence against the FSD null~\eqref{eq:def_FSD_null}. 
Theorem~\ref{thm:Test-supermartingale-for-the-FSD_null} establishes the anytime-validity of interpreting this wealth as statistical evidence across time: regardless of the choice of mixtures and betting strategies, the skeptic is not expected to make money \emph{ever} if the claim of dominance were true. 

\addtocounter{example}{-1}
\begin{example}[continued]
In the antimonotonic case in Example~\ref{ex:anticorr}, we do not expect to collect evidence at the threshold $z=1/3$, since with probability $1/2$, we will observe $X_t=2/3$ and $Y_t=1/3$, in which case $S_t(\lambda_t, 1/3) \leq 1$ for any $\lambda_t>0$. 
The DM can instead update the mixture predictably to avoid betting on this threshold (where the CDFs of $X$ and $Y$ touch) and instead bet on other thresholds, namely $z=0$ and $z=2/3$.
\end{example}

\subsection{Choosing log-optimal betting strategies}

Theorem \ref{thm:Test-supermartingale-for-the-FSD_null} shows that the running product $(E_t)_{t\in \N}$ forms a test SM for $\calH_0$, for any choice of predictable mixtures and bets. 
We now discuss sensible choices which allow the DM in Game \ref{game:test_fsd} to reject the null as fast as possible. 
We first characterize the optimal betting parameter $\lambda^\star(z)$ with respect to some fixed threshold $z\in \calZ$ and propose a natural empirical predictable plug-in version of it. 
Later, we propose how to (sequentially) average evidence over the thresholds to obtain an anytime-valid test with asymptotic power one.

\subsubsection{The growth-rate optimal (GRO) betting strategy and an e-process}\label{subsec:GRO_fixed_threshold}

For a fixed $z \in \calZ$, we want to choose $\lambda$ such that $S(\lambda, z)$ grows as fast as possible under the alternative
$\calH_1(z) = \calH_0(z)^c
    = \{\Qsymb \in \frakB \mid \Qsymb(X>z) < \Qsymb(Y>z)\}.
$
For $\Qsymb \in \calH_1(z)$, we choose $\lambda\in [0,1]$ to maximize the \emph{expected growth rate}, or \emph{e-power} \citep{shafer2011test,grunwald2019safe,vovk2024nonparametric,larsson2025numeraire}:
\begin{equation}\label{eq:power_target_FSD}
g(\lambda) = \Ex{\mathbb{Q}}{\log S(\lambda, z)}= p(z)\log(1+\lambda) + q(z)\log(1-\lambda) \,, \quad \lambda\in [0,1],
\end{equation} 
where $p(z)=\Qsymb(X\leq z<Y)$ and $q(z)=\Qsymb(Y\leq z<X)$. 

\begin{proposition}[GRO e-variable for $\calH_0(z)$ vs.\ $\calH_1(z)$]\label{ppn:GRO_betting_parameter_FSD}
    For any $\Qsymb \in \calH_1(z)$, the parameter
    \begin{equation}\label{eq:GRO_lambda}
        \lambda^\star(z) 
    = \frac{\Qsymb(X \leq z < Y) - \Qsymb(Y \leq z < X)}{\Qsymb(X \leq z < Y) + \Qsymb(Y \leq z < X)} 
    = \frac{p(z)-q(z)}{p(z)+q(z)}\in (0,1]
    \end{equation}
    yields the GRO e-variable $S(\lambda^\star(z), z)$ for testing $\calH_0(z)$ against the simple alternative $\{\Qsymb\}$. 
\end{proposition}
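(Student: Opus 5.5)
The plan is to reduce the GRO problem to maximizing the concave function $g$ in \eqref{eq:power_target_FSD} over $\lambda\in[0,1]$. We then argue that optimality within the family $\{S(\lambda,z)\}$ is enough. For that we invoke the admissibility result of \citet{Eugenio_JAR} quoted after Lemma~\ref{lem:building-block-e-variables}: every e-variable $E$ for $\calH_0(z)$ satisfies $E\le S(\lambda,z)$ a.s.\ for some $\lambda\in[0,1]$. Hence $\Ex{\Qsymb}{\log E}\le \sup_{\lambda\in[0,1]} g(\lambda)$, and it suffices to show that $\lambda^\star(z)$ attains this supremum. To get the explicit form of $g$, note that $D(z)=\indicator{X\le z}-\indicator{Y\le z}$ takes the value $+1$ on $\{X\le z<Y\}$, $-1$ on $\{Y\le z<X\}$, and $0$ otherwise. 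Therefore $\log S(\lambda,z)$ equals $\log(1+\lambda)$, $\log(1-\lambda)$, or $0$ on these events, which gives $g(\lambda)=p\log(1+\lambda)+q\log(1-\lambda)$ with $p=p(z)$, $q=q(z)$.

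Next we use the alternative. $\Qsymb\in\calH_1(z)$ means $\Qsymb(X\le z)>\Qsymb(Y\le z)$. Subtracting the common mass $\Qsymb(X\le z, Y\le z)$ gives $p>q\ge0$, so $p+q>0$ and $\lambda^\star=(p-q)/(p+q)\in(0,1]$.

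There are two cases.

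If $q>0$, then $g$ is strictly concave on $[0,1)$ with $g(\lambda)\to-\infty$ as $\lambda\to1$. Setting $g'(\lambda)=p/(1+\lambda)-q/(1-\lambda)=0$ gives $p(1-\lambda)=q(1+\lambda)$, whose solution is $\lambda=(p-q)/(p+q)\in(0,1)$. By concavity this is the unique maximizer.

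If $q=0$, then $g(\lambda)=p\log(1+\lambda)$ is increasing, so the maximum is at $\lambda=1=\lambda^\star$. Here $S(1,z)=1+D(z)$ vanishes only on a $\Qsymb$-null event, so the e-power is finite and equals $p\log 2$.

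The main subtlety is justifying that restricting to the family $S(\lambda,z)$ loses nothing. If one prefers not to rely fully on the cited admissibility result, a direct route is to verify the numéraire/KKT condition for the GRO e-variable. The condition is $\Ex{\Psymb}{E/S(\lambda^\star,z)}\le1$ for all $\Psymb\in\calH_0(z)$, which should be checked in the dual form $\Ex{\Qsymb}{E/S(\lambda^\star,z)}\le 1$ for every e-variable $E$. Under $\Qsymb$, $S^\star=1+\lambda^\star D$ equals $2p/(p+q)$ on $\{D=1\}$, $2q/(p+q)$ on $\{D=-1\}$, and $1$ on $\{D=0\}$. Then $\Ex{\Qsymb}{E/S^\star}$ equals $\Ex{\Psymb'}{E}$, where $\Psymb'$ is $\Qsymb$ with the events $\{D=1\}$ and $\{D=-1\}$ each reweighted to mass $(p+q)/2$ (this uses $S^\star>0$ a.s.). Under $\Psymb'$ we have $\Ex{\Psymb'}{D}=0$, so $\Psymb'\in\calH_0(z)$ and $\Ex{\Psymb'}{E}\le1$. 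This is the step I would check most carefully, including the degenerate case $q=0$, where $\Qsymb(D=-1)=0$ and the reweighting creates new mass. In that case I would just lean on the admissibility result from \citet{Eugenio_JAR}.
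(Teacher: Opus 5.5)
Your argument is correct and matches the paper's: the paper also uses $p(z)-q(z)=\Qsymb(X\le z)-\Qsymb(Y\le z)>0$ to get strict concavity of $g$ with unique maximizer $\lambda^\star(z)\in(0,1]$, with the reduction to the family $\{S(\lambda,z)\}$ resting on the admissibility result quoted before the proposition, and your split into $q(z)>0$ and $q(z)=0$ just spells out the boundary case. Your optional num\'eraire check is the reverse-information-projection remark that follows the proposition (the operative condition is only $\Ex{\Qsymb}{E/S(\lambda^\star(z),z)}\le 1$ for all e-variables $E$, not the $\Psymb$-form you wrote first), and it also closes when $q(z)=0$: adding mass $p(z)/2$ on $\{Y\le z<X\}$ to your sub-probability $\Psymb'$ gives a member of $\calH_0(z)$, and since $E\ge 0$ this can only increase the expectation of $E$.
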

\begin{proof} 
   For any $\Qsymb \in \calH_1(z)$, we have $p(z)+q(z)\geq p(z)-q(z) = \Qsymb(X\leq z) - \Qsymb(Y\leq z) >0$. It follows that $g(\lambda)$ in~\eqref{eq:power_target_FSD} is strictly concave with a unique maximizer $ \lambda^\star(z) \in (0,1]$.
\end{proof}

\begin{remark}
    Given any $\Qsymb \in \calH_1(z)$, denote $\theta(z) = \Qsymb(X \leq z, Y \leq z)$ as the (lower-tail) \emph{coupling} probability between the marginals $\Qsymb_X$ and $\Qsymb_Y$ at threshold $z$. 
    Since $p(z) = \Qsymb(X \leq z) - \theta(z)$ and $q(z) = \Qsymb(Y \leq z) - \theta(z)$, the GRO bet~\eqref{eq:GRO_lambda} for $\calH_0(z)$ equivalently reads as
    \begin{equation*}
        \lambda^\star(z) = \frac{\Qsymb(X \leq z) - \Qsymb(Y \leq z)}{\Qsymb(X \leq z) + \Qsymb(Y \leq z) - 2\theta(z)}. 
    \end{equation*}
    This reveals that, even though $\calH_0(z)$ is only a statement about the marginals, the GRO bet incorporates cross-dependence, specifically through the coupling probability $\theta(z)$.
\end{remark}

\begin{example}
    Fix $c > 1$, and say the marginals are $\Qsymb_X \equiv \mathsf{Unif}[0,1]$ and $\Qsymb_Y \equiv \mathsf{Unif}[0, c]$. 
    At any $z \in (0, c)$, we have $\Qsymb \in \mathcal{H}_1(z)$; the comonotone and antimonotone cases reduce to $Y = cX$ and $Y = c(1-X)$, $\Qsymb$-a.s., respectively.
    Focusing on $z \in (0,\frac{c}{c+1}]$ for illustration, the GRO bet~\eqref{eq:GRO_lambda} is $\lambda^\star(z) = \frac{1-1/c}{1 + 1/c - 2\gamma(z)}$, where $\gamma(z) = \theta(z)/\Qsymb(X \leq z) = \Qsymb(Y \leq z \mid X \leq z) \in [0, 1/c]$ captures cross-dependence. 
    In particular, $\lambda^\star(z)$ is strictly increasing in $\gamma(z)$, smoothly interpolating between the antimonotone case ($\gamma(z)=0$), where it is the most conservative ($\lambda^\star(z) = \frac{1-1/c}{1+1/c}$), and the comonotone case ($\gamma(z)=1/c$), where it is the most aggressive ($\lambda^\star(z) = 1$). 
    (In fact, these two extreme cases respectively match the lower and upper Fr\'echet--Hoeffding bounds on $\theta(z)$, but we spare the details here.)
\end{example}

\begin{remark} 
For any $p(z)$ and $q(z)$ in the alternative, we have $p(z)>q(z)$, and the average $m(z)=(p(z)+q(z))/2$ is the closest element in the null and thus the hardest candidate to test against. 
The GRO e-variable equals $p(z)/m(z)$ if $X \leq z < Y$; $q(z)/m(z)$ if $Y \leq z < X$; and $1$ otherwise. 
This is a well-studied likelihood ratio between the given alternative and its particular projection onto the null, known as the \emph{reverse information projection} \citep{grunwald2019safe,larsson2025numeraire}.
\end{remark}

Proposition \ref{ppn:GRO_betting_parameter_FSD} yields the following natural predictable plug-in estimator for $\lambda^\star(z)$, 
\begin{equation}\label{eq:plugin_GRO_lambda}
    \hat{\lambda}_t^\textsf{GRO}(z)
    = (1-c) \land\left(\frac{\hat{\mathbb{Q}}_{t-1}(X \leq z < Y)-\hat{\mathbb{Q}}_{t-1}(Y \leq z < X)}{\hat{\mathbb{Q}}_{t-1}(X \leq z < Y)+\hat{\mathbb{Q}}_{t-1}(Y \leq z < X)}\right)_+ \in [0,1-c],
\end{equation}
for a small $c>0$, $\hat{\mathbb{Q}}_{t-1}$ the empirical distribution of $(X,Y)^{t-1}$, and $0/0:=0$. We impose the upper bound $ \hat{\lambda}_t^\textsf{GRO}(z) \leq 1-c$ to avoid e-values becoming exactly zero.

\subsubsection{Asymptotic power-one guarantee for predictably mixed GRO e-processes}\label{sec:power_one_theorem}

Following Theorem~\ref{thm:Test-supermartingale-for-the-FSD_null} and the derivations from the previous section, at $t\geq 1$, for some predictable mixture distribution $\psi_t \in \Pi(\calZ)$, we consider the $\calZ$-mixture sequential e-variable
\begin{equation}\label{eq:mixture_e-variable}
    S_t = \int S_t(\hat{\lambda}_t^\textsf{GRO}(z),z)\diff\psi_t( z), 
\end{equation}
as our measure of evidence against the global null $\calH_0$. 

Our main result shows that, under a mild assumption on the predictable mixture distributions, the running product of the e-variables \eqref{eq:mixture_e-variable} yields powerful e-processes.  
We say that an e-process $(E_t)_{t\in\N}$ is \emph{powerful}~\citep{ramdas2021testing} if it yields a sequential test of asymptotic power one, that is, for any $\mathbb{Q}\in\calH_1$ and $\alpha\in(0,1)$, we have $\Q(\tau_\alpha<\infty)= 1$ for $\tau_\alpha=\inf\{t\geq 1: E_t \geq 1/\alpha\}$. 

\begin{theorem}[Powerful GRO e-process with predictable $\calZ$-mixtures]\label{thm:asymptotic_power_one} 
Let $\Qsymb\in \calH_1$. 
Suppose we have a predictable sequence $(\psi_t)_{t\in \N}\subseteq \Pi(\calZ)$ for which there exist $\epsilon,\delta>0$ and $N_0 \in \N$ s.t.
\begin{equation}\label{eq:cond_power_one_theorem}
\psi_t(\{z:\Qsymb_X(z) -\Qsymb_Y(z)> \epsilon\})>\delta, \quad \textrm{for all }t\geq N_0\,.
\end{equation}
Then, the running product of e-variables $E_t=\prod_{\ell=1}^t S_\ell$, $t\in \N$, is a powerful e-process that grows at an exponential rate under $\Qsymb$, that is
\begin{equation*}
\Qsymb\inparen{\liminf_{t\to\infty}t^{-1}\log E_t >0} = 1\,.
\end{equation*}
\end{theorem}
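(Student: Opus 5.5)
The plan is to lower bound $\log E_t=\sum_{\ell\le t}\log S_\ell$ in three steps: apply Jensen, split into a predictable drift plus a martingale, and show the drift is eventually bounded below by a positive constant. Write $D_\ell(z)=\indicator{X_\ell\le z}-\indicator{Y_\ell\le z}$ and $g_z(\lambda)=p(z)\log(1+\lambda)+q(z)\log(1-\lambda)$ as in~\eqref{eq:power_target_FSD}.

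\emph{Step 1 (Jensen and martingale decomposition).} Since $\log$ is concave and $\psi_\ell$ is a probability measure,
\[
\log S_\ell\ \ge\ \int \log\bigl(1+\hat\lambda^{\mathsf{GRO}}_\ell(z)D_\ell(z)\bigr)\diff\psi_\ell(z)\eqqcolon L_\ell .
\]
Because $\hat\lambda^{\mathsf{GRO}}_\ell\in[0,1-c]$, we have $|L_\ell|\le \max(\log 2,|\log c|)$. Both $\hat\lambda^{\mathsf{GRO}}_\ell$ and $\psi_\ell$ are predictable and $(X_\ell,Y_\ell)$ is independent of $\calF_{\ell-1}$. Hence $\Ex{\Qsymb}{L_\ell\mid\calF_{\ell-1}}=\int g_z(\hat\lambda^{\mathsf{GRO}}_\ell(z))\diff\psi_\ell(z)\eqqcolon G_\ell$; measurability in $z$ is fine since everything is a step function of $z$ at the data points. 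The differences $L_\ell-G_\ell$ are bounded martingale increments, so the martingale SLLN (e.g.\ via Azuma plus Borel--Cantelli) gives $t^{-1}\sum_{\ell\le t}(L_\ell-G_\ell)\to0$ $\Qsymb$-a.s. It therefore suffices to show $\liminf_t G_t\ge\kappa>0$ a.s.

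\emph{Step 2 (deterministic target).} Let $\tilde\lambda(z)=(1-c)\wedge(\lambda^\star(z))_+$ with $\lambda^\star$ as in~\eqref{eq:GRO_lambda}, and $\tilde\lambda(z)=0$ when $p+q=0$. We have $g_z(0)=0$, and $g_z$ is concave and increasing on $[0,\lambda^\star(z)]$ whenever $p>q$. So $g_z(\tilde\lambda(z))\ge 0$ for every $z$.

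On $A_\epsilon=\{z:\Qsymb_X(z)-\Qsymb_Y(z)>\epsilon\}$ we have $p-q>\epsilon$, hence $\lambda^\star\ge\epsilon/2$. Take $\lambda_0=\min(\epsilon/2,1-c)$. A Taylor bound $\log(1\pm\lambda)\ge\pm\lambda-\lambda^2/(2c^2)$ on $[0,1-c]$ gives $g_z(\lambda_0)\ge\lambda_0\epsilon-\lambda_0^2/c^2$. Shrinking $\lambda_0$ if necessary (e.g.\ $\lambda_0\le c^2\epsilon/2$) makes this at least some $\kappa_0(\epsilon,c)>0$. By monotonicity, $g_z(\tilde\lambda(z))\ge\kappa_0$ on $A_\epsilon$. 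Combined with~\eqref{eq:cond_power_one_theorem}, this yields $\int g_z(\tilde\lambda(z))\diff\psi_t(z)\ge\delta\kappa_0$ for $t\ge N_0$.

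\emph{Step 3 (uniform plug-in error; the main obstacle).} It remains to show $\sup_z|g_z(\hat\lambda^{\mathsf{GRO}}_t(z))-g_z(\tilde\lambda(z))|\to0$ a.s. The difficulty is that the ratio in~\eqref{eq:plugin_GRO_lambda} is unstable where $p(z)+q(z)$ is small, so $\hat\lambda$ need not converge uniformly. I would get around this by exploiting the weighting by $p+q$ in $g_z$. On $[0,1-c]$,
\[
|g_z(\lambda)-g_z(\lambda')|\le \frac{p(z)+q(z)}{c}\,|\lambda-\lambda'| .
\]
Write $\hat p,\hat q$ for the empirical versions. The classes $\{X\le z<Y\}$ and $\{Y\le z<X\}$, indexed by $z$, are VC (intersections of half-lines in each coordinate), so the Glivenko--Cantelli theorem gives $\eta_t\coloneqq\sup_z(|\hat p-p|+|\hat q-q|)\to0$ a.s. Since truncation to $[0,1-c]$ is 1-Lipschitz, it then suffices to bound $(p+q)\,\bigl|\tfrac{\hat p-\hat q}{\hat p+\hat q}-\tfrac{p-q}{p+q}\bigr|$ by a constant times $\eta_t$. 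Elementary algebra gives exactly this: both ratios lie in $[-1,1]$, and the difference equals $[(\hat p-\hat q)(p+q)-(p-q)(\hat p+\hat q)]/[(\hat p+\hat q)(p+q)]$. When $\hat p+\hat q\ge(p+q)/2$ this is bounded by $4\eta_t$ after multiplying by $p+q$. Otherwise $p+q\le 2\eta_t$, and the trivial bound $2(p+q)\le4\eta_t$ applies; the $0/0$ conventions are covered by the same bounds.

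Hence $G_t\ge\delta\kappa_0-4\eta_t/c$ for $t\ge N_0$, using $g_z(\tilde\lambda)\ge0$ off $A_\epsilon$. So $\liminf_t G_t\ge\delta\kappa_0$ a.s., and $\liminf_t t^{-1}\log E_t\ge\delta\kappa_0>0$ $\Qsymb$-a.s. Exponential growth then immediately gives $\Qsymb(\tau_\alpha<\infty)=1$, i.e.\ the e-process is powerful.
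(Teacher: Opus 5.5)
Your proof is correct. Its skeleton matches the paper's: Jensen over the mixture, a lower bound on the conditional drift using $\psi_t(V)>\delta$, and a martingale SLLN for the bounded increments.

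Where you genuinely differ is in how you control the plug-in bets.

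\textbf{The paper's route.} The paper never bounds $g_z(\hat\lambda_t(z))$ uniformly over all of $\calZ$. On $V$, it invokes Glivenko--Cantelli and the continuous mapping theorem for the ratio in \eqref{eq:plugin_GRO_lambda}; this is legitimate there because $p+q\ge p-q>\epsilon$. It then lower-bounds $g_z(\lambda^\star(z))$ through the identity $g_z(\lambda^\star)=(p+q)\,\mathrm{kl}(p/(p+q),1/2)$, and treats the clipped case $\lambda^\star>1-c$ by a separate computation. On $V^c$, it splits into three regions:
\begin{enumerate}[(i)]
    \item $p+q=0$;
    \item $p+q>\epsilon_3$, where the ratio again converges;
    \item $0<p+q\le\epsilon_3$, where the crude bound $g_z(\lambda)\ge(p+q)\log c$ holds for every bet.
\end{enumerate}
Here $\epsilon_3$ must be tuned against $\epsilon'\delta/|\log c|$.

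\textbf{Your route.} You observe that the Lipschitz constant of $g_z$ on $[0,1-c]$ is at most $(p+q)/c$. This exactly compensates the instability of the ratio where $p+q$ is small. It replaces the whole case analysis with the single estimate $\sup_z|g_z(\hat\lambda_t(z))-g_z(\tilde\lambda(z))|\le 4\eta/c$. Your Taylor-plus-monotonicity bound at a fixed small $\lambda_0$ then handles the clipped and unclipped cases on $V$ at once.

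\textbf{What each buys.} Yours gives a shorter argument and an explicit finite-$t$ drift bound $G_t\ge\delta\kappa_0-4\eta_{t-1}/c$. Combined with a DKW-type bound on $\eta$, this even yields quantitative growth. The paper's route only needs convergence of the bets where $p+q$ (or, later, the curvature $\Esymb_{\Qsymb}[D(z)^2]$) is bounded below, plus a crude bound elsewhere. That is the template reused for the higher-order UP e-process. There, the bet is a posterior mean rather than a closed-form ratio, so your cancellation trick does not apply directly.

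\textbf{Minor points.}
\begin{itemize}
    \item $\hat\lambda_t$ is built from $t-1$ observations, so the error term should be $\eta_{t-1}$.
    \item Since the two events are disjoint, $p+q\le1$. Hence in fact $\lambda^\star\ge p-q>\epsilon$, which is stronger than the $\epsilon/2$ you state.
\end{itemize}
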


Condition~\eqref{eq:cond_power_one_theorem} is rather mild and merely requires that the mass of $\psi_t$ eventually remains bounded away from zero on the \emph{violation set} $V=\{z : \Qsymb_X(z) > \Qsymb_Y(z)+\epsilon\}$, the region where the violation of SD is detectable. 
We remark that, even if the mixture ``wastes'' at most a fraction $(1-\delta)$ of wealth to non-violation regions, we obtain a powerful e-process because the GRO bets~\eqref{eq:plugin_GRO_lambda} on these regions tend to zero, while GRO bets on the violation set yield exponentially growing wealth.

In the special case of choosing a fixed mixture distribution $\psi_t=\psi$ with full support, condition~\eqref{eq:cond_power_one_theorem} is guaranteed for small enough $\varepsilon$, and Theorem~\ref{thm:asymptotic_power_one} may be seen as an instance of the \emph{method of mixtures}~\citep{robbins1970boundary,howard2021time}. 
However, in the latter, one typically integrates over betting parameters or model parameters at the level of entire processes rather than individual increments. Thus, there remains a distinction to this work, where we instead consider roundwise mixtures over the testing thresholds, and choose the betting parameters via the empirical plug-in version of the GRO bets introduced in the previous section.

\subsubsection{On predictable weights for aggregating over the testing thresholds}\label{sec:agg_over_z}

Next, we propose an adaptive weighting method from online learning to effectively combine evidence across $\calZ$. 
At $t\in \N$, we fix $\tilde{\calZ}_t\subseteq \calZ$ given by $z_1^t < z_2^t < \ldots < z^t_{n_t-1} <z_{n_t}^t$, 
for some $n_t\geq 1$. 
Then, we use exponential weights based on the empirical CDF difference: 
\begin{equation}\label{eq:predictable_weights}
    w_t(z_i^t) \propto \exp\left\{\eta_{t}(z_i^t)\big[\hat{F}_{X}^{t-1}(z_i^t)-\hat{F}_{Y}^{t-1}(z_i^t)\big]\right\}, \quad \quad i=1, \ldots, n_t\,,
\end{equation}
for some predictable tuning parameters $\eta_{t}(z_i^t)>0$. We then suggest using
\begin{equation}\label{eq:pred_mixture}
    \psi_t=\sum_{i=1}^{n_t}  w_t(z_i^t) \delta_{z_i^t}, 
\end{equation}
to compute the e-variable $S_t$ at \eqref{eq:mixture_e-variable} and $E_t=\prod_{\ell=1}^t S_\ell$ as discussed in Theorems~\ref{thm:Test-supermartingale-for-the-FSD_null} and~\ref{thm:asymptotic_power_one}.

The choice of weights in~\eqref{eq:predictable_weights} is inspired by the exponential weights aggregation (EWA) method~\citep{vovk1990aggregating}, which achieves minimax-optimal regret in online learning~\citep{cesabianchi2006prediction}, 
and has been adapted to analogous sequential testing problems~\citep[e.g.,][]{orabona2023tight,waudbysmith2020estimating,waudbysmith2025universal}.
The rationale behind~\eqref{eq:predictable_weights} is to place larger bets on the region where SD is violated—that is, on thresholds for which $F_X(z)-F_Y(z)$ is large—and smaller bets on other regions. 

The original EWA scheme is known to be computationally expensive in similar settings~\citep{cesabianchi2006prediction}, as it requires calculating all past gains at each newly selected threshold $z$. 
Instead, we use exponential weights of the standardized CDF difference, by setting $\eta_t(z_i^t)$ in~\eqref{eq:predictable_weights} as
\begin{equation}\label{eq:standardized_lr}
    \eta_t(z_i^t) =
    \frac{\eta}{ \widehat{\textrm{sd}}[\hat{F}_{X}^{t-1}(z_i^t)-\hat{F}_{Y}^{t-1}(z_i^t)]},
\end{equation}
for a single pre-specified parameter $\eta>0$, where the estimated standard deviation is clipped from below to avoid division by zero. 
The resulting weights involve a self-normalized test statistic, standard in martingale concentration inequalities \citep[e.g.,][]{howard2021time} and common in the literature on SD \citep[e.g.,][]{davidson2013testing,zhuang2024tests}.  
Later, in our simulations (Section~\ref{sec:simulations}), we will compare the adaptive approach~\eqref{eq:pred_mixture} with exponential weights against other baselines. 

\begin{remark}[On the choice of the testing thresholds $\tilde\calZ_t$]
Since $z \mapsto S_t(\hat{\lambda}_t^\textsf{GRO}(z),z)$ is piecewise constant with potential jumps at $(X,Y)^t$, 
it suffices to have access to the probabilities $\psi_t([z_j^t,z_{j+1}^t)),\, j=1, \ldots, n_t-1\,$, where $z_1^t < z_2^t < \ldots < z^t_{n_t}$ are the ordered observations in $(X,Y)^{t}$ such that the mixture~\eqref{eq:mixture_e-variable} can be computed exactly.
Here, $\psi_t$ needs to be predictable, and needs thus in particular to be fixed \emph{before} observing $(X_t,Y_t)$. 
Since we can always interpret the weights in \eqref{eq:predictable_weights} as weights on the whole intervals $[z_j^t,z_{j+1}^t)$ and distribute the mass uniformly over them, any choice of weights on the ordered observations in $(X,Y)^{t-1}$ can be used to derive the target probabilities $\psi_t([z_j^t,z_{j+1}^t))$, including the interval probabilities with the non-predictable endpoints $X_t$ and $Y_t$. 
However, as the sample size increases, the impact of the additional interval splitting at $X_t$ and $Y_t$ becomes negligible; we simply apply the method as described above, taking $\tilde\calZ_t$ to be the ordered observations in $(X,Y)^{t-1}$. 
By subsequently adding more observations from the pooled sample distribution $(F_X+F_Y)/2$ and exploiting known optimality properties of the proposed weighting schemes, it is reasonable to expect that the regularity condition~\eqref{eq:cond_power_one_theorem} in Theorem~\ref{thm:asymptotic_power_one} is satisfied. 
\end{remark}

\begin{remark}[Finite known support]\label{remark:finite_support} 
An important special case is when the outcomes have finite and known support $\calZ = \{z_1, \dotsc, z_m\}$, as is the case for ordinal data. 
(The baseball application in Section~\ref{sec:real_data_exp} is a concrete example.)
In such cases, we can simply fix the choice of thresholds to be $\{z_1, \dotsc, z_{m-1}\}$ (there is no violation at $z_m$).  
The adaptive weighting scheme~\eqref{eq:predictable_weights} with weights~\eqref{eq:standardized_lr} is still useful, although it can only outperform simple uniform weighting, $\psi_t \equiv \mathsf{Unif}(\{z_1, \dotsc, z_{m-1}\})$, by a constant factor $O(\log m)$ at most. 
\end{remark}

\subsection{Testing higher-order SD and integral stochastic orders}\label{sec:testing_kSD_summary}

While first-order SD is a natural benchmark, many applications require weaker notions of dominance that remain meaningful when the distribution functions cross.
One such problem is testing second-order SD (SSD), which corresponds to dominance in expected utility according to any increasing \emph{and concave} function. 
For two prospects with the same mean, SSD implies that the dominant prospect has no larger variance (if it exists), reflecting \emph{risk aversion}~\citep{rothschild1970risk}. 
FSD implies SSD; more generally, higher-order SD is obtained by recursively integrating the CDFs and requiring a pointwise ordering of the resulting functions, or equivalently, by considering progressively smaller classes of utility functions.

We can analogously, although nontrivially, generalize our approach and the power-one result of Theorem~\ref{thm:asymptotic_power_one} to sequentially testing the null of $k$-th order SD for any $k \in \N$.
We briefly summarize our results here and defer a comprehensive treatment of the problem to Supplement~\ref{sec:testing_higher_order_SD}. 
As in our FSD testing framework, we select a fixed or predictable mixture strategy over thresholds and multiply over time the mixture of asymptotically log-optimal e-variables to obtain a powerful e-process. 
The main difference is in the construction of the building-block e-variables: unlike in Proposition~\ref{ppn:GRO_betting_parameter_FSD}, we no longer have a closed-form expression of the GRO bet in terms of $\bbQ$ at each time/threshold. 
We thus propose to use a betting strategy based on universal portfolios (UP)~\citep{cover1991universal}, which yield an asymptotically log-optimal sequence of bets~\citep{waudbysmith2025universal}. 
Our main result is that the resulting e-process is powerful when the support has a finite lower bound.  
We also sketch extensions to other popular integral stochastic orders, namely the increasing convex order and the infinite order.

\section{Simulations}\label{sec:simulations}

In our simulations, we validate five e-process variants, three of which use predictable mixtures as summarized in Table~\ref{tbl:methods_summary}. 
The method used in Figure~\ref{fig:comparison_classical_tests} is the \emph{AdaGRO-Exp} method based on the aforementioned idea of exponential weights~\eqref{eq:predictable_weights}. 
We remark that the \emph{AdaGRO-Greedy} method is inspired by \citet{shekhar2023nonparametric}'s approach, although there are at least two differences from its vanilla adaptation, as summarized in Remark~\ref{rem:sr_comparison}. 

\begin{table}[t]
    \centering
    \begin{tabular}{llll}
    \toprule
    \textbf{Method} & $\lambda_t(z)$ & \textbf{Use \eqref{eq:pred_mixture}?} & \textbf{Mixture weights $w_t(z_i^t)$} (if applicable) \\
    \midrule
    AdaGRO-Exp & $\hat\lambda_t^\textsf{GRO}(z)$ & Yes & Exponential w/ {standardized} ECDF diff.~\eqref{eq:standardized_lr} \\
    AdaGRO-Linear & $\hat\lambda_t^\textsf{GRO}(z)$ & Yes & Linear in bets, $w_t(z_i^t) \propto \hat\lambda_t^\textsf{GRO}(z_i^t)$ \\
    AdaGRO-Greedy & $\hat\lambda_t^\textsf{GRO}(z)$ & Yes & All-in on $z_i^t$ w/ maximum ECDF difference \\
    GRO & $\hat\lambda_t^\textsf{GRO}(z)$ & No & n/a (equal weights on a grid of $z$) \\
    Constant & $0.1$ & No & n/a (equal weights on a grid of $z$) \\
    \bottomrule
    \end{tabular}
    \caption{E-processes examined in simulations. $\hat\lambda_t^\textsf{GRO}(z)$ is the plug-in GRO bet~\eqref{eq:plugin_GRO_lambda}.}
    \label{tbl:methods_summary}
\end{table}

To examine the validity and power of the e-processes in sequential experiments, we compute two metrics. 
Both metrics are standard in the SAVI literature~\citep[e.g.,][]{howard2021time,vovk2024nonparametric} and can be estimated, for each sample size $t$, as averages over repeated simulations. 
\begin{enumerate}
    \item \emph{Ville error}: $\Psymb\inparen{\exists \ell \leq t: E_\ell \geq {1}/{\alpha}}$. 
    This is the notion of type I error under continuous monitoring, capturing the anytime-validity of sequential tests under the null $\Psymb$. 
    \item \emph{e-power}: $\Ex{\Qsymb}{\log E_t}$. 
    This captures how quickly the e-process is expected to grow under the alternative $\Qsymb$.
    We expect powerful e-processes to have linearly increasing e-power over time, indicating a positive growth rate and an almost sure rejection at any level $\alpha$ in the long run (i.e., yield a test of power one).
\end{enumerate}

\subsection{Simulation \#1: Robustness against contact sets}\label{sec:sim_kuniform}

Our main simulation examines the challenging case where there is substantial ``contact'' between the CDFs. 
This setup was popularized by \citet{linton2010improved} for non-sequential SD testing, where the presence of contact sets is known to determine the limiting distribution of the test statistic. 
While the classical framework does not extend to sequential testing, we borrow the setup to examine the behavior of different e-processes in such scenarios.

We consider distributions on $[0,1]$ whose CDFs can have a ``kink'' at a specified value. 
Fix $F_Y$ as the CDF of the $\mathsf{Unif}[0,1]$ distribution (no kink). 
For each kink point $z_0 \in \{0.0, 0.1, 0.2, 0.5, 1.0\}$, we define a mixed distribution for $X$, having the form $F_X(z) = \frac{1}{2} z + \frac{1}{2}z_0$ for $z \in [0, z_0]$ and $F_X(z) = z$ for $z \in (z_0, 1]$.
Figure~\ref{subfig:kuniform_cdfs} shows the CDF of $X$ alongside that of $Y$.
For $z_0 > 0$, we have that $X \fsd Y$ and $Y \notfsd X$; when $z_0=0$, the two distributions coincide. 
As $z_0$ increases, the non-dominance (violation) region $[0, z_0]$ increases in size, so it should be easier to reject the FSD null $\calH_0: Y \fsd X$; as $z_0$ decreases, the size of the contact between the two CDFs increases, so the ``upside'' of $Y$ over $X$ becomes harder to detect. 

In each of 100 repeated simulations and for each value of $z_0$, we generate a sequence of i.i.d.~observations $(X_t, Y_t)_{t=1}^T$, $T=2,000$, where $X_t \sim F_X$ and $Y_t \sim F_Y$ independently, and compute the e-processes for testing $\calH_0: Y \fsd X$ and $\calH_0: X \fsd Y$ at each time $t$. 
The initial search region, $\tilde\calZ_0$, is placed at $K=101$ equidistant points on $[0, 1]$ including the endpoints. 
For the non-adaptive e-processes, fixed mixtures are computed over this search region. 
For the adaptive e-processes, predictable mixtures~\eqref{eq:pred_mixture} are computed on an updated search region $\tilde\calZ_t$ for $t>50$ after a burn-in period. 
Here, instead of using all observations in $(X,Y)^{t-1}$ to compute the mixtures, we choose $\tilde\calZ_t$ to be the empirical quantiles of $(X,Y)^{t-1}$ on an equidistant quantile grid of size $K=101$, thereby reducing the time complexity from $\mathcal{O}(t^2)$ to $\mathcal{O}(Kt)$ while obtaining nearly identical results. 

\begin{figure}[t]
    \centering
    \begin{subfigure}[t]{\textwidth}
        \centering
        \includegraphics[width=\textwidth, keepaspectratio]{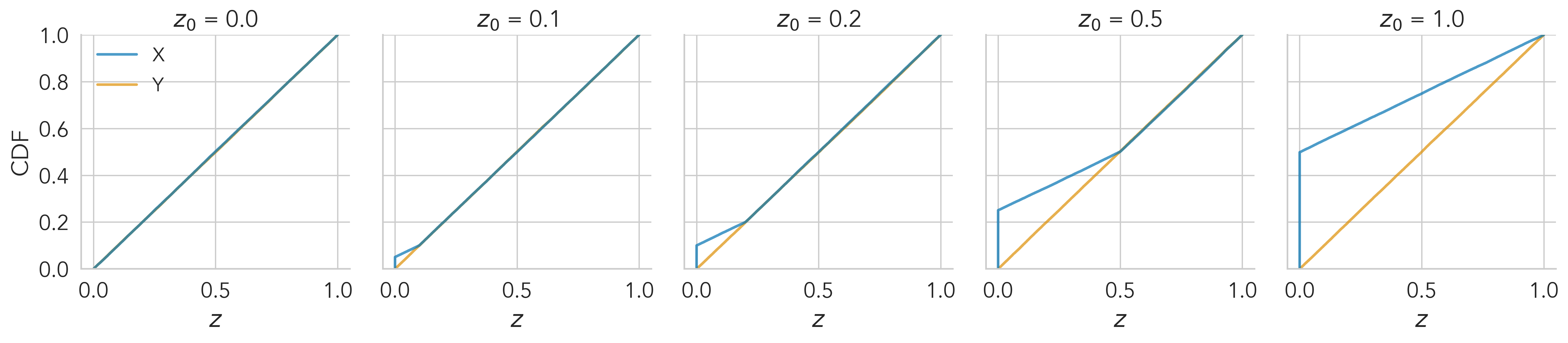}
        \caption{CDFs of $X$ and $Y$ for $z_0 \in \{0.0, 0.1, 0.2, 0.5, 1.0\}$.}
        \label{subfig:kuniform_cdfs}
    \end{subfigure}
    \begin{subfigure}[t]{\textwidth}
        \centering
        \includegraphics[width=\textwidth, keepaspectratio]{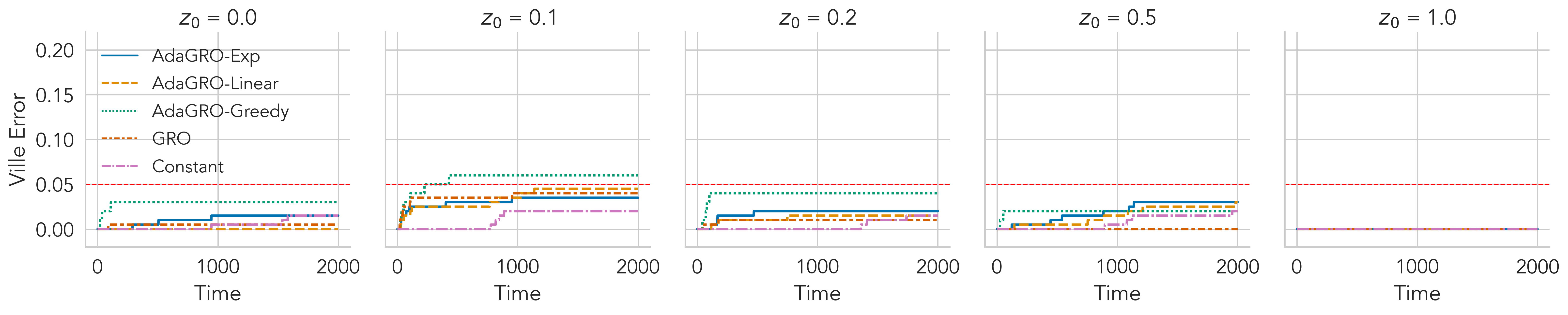}
        \caption{Ville error for testing $\calH_0: X \fsd Y$.}
        \label{subfig:kuniform_ville}
    \end{subfigure}
    \begin{subfigure}[t]{\textwidth}
        \centering
        \includegraphics[width=\textwidth, keepaspectratio]{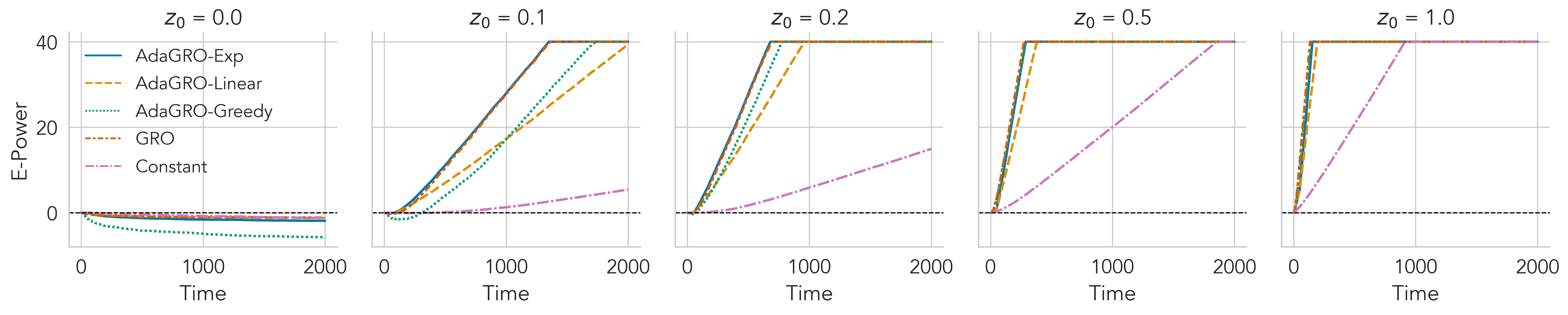}
        \caption{E-power against $\calH_0: Y \fsd X$.}
        \label{subfig:kuniform_epower}
    \end{subfigure}
    \caption{\textit{All GRO e-processes grow quickly even when the contact set is large, with the adaptive variant with exponential weights having the largest e-power.}
    These plots summarize simulations where $F_Y \equiv \mathsf{Unif}[0,1]$ and $F_X$ is a piecewise uniform distribution with a kink point at $z_0 \in [0,1]$. 
    As $z_0$ decreases, the two CDFs have more ``contact'' and the testing problem becomes more challenging (under the null, to not reject; under the alternative, to reject quickly). 
    The e-power is truncated at 40 for ease of visualization.}
    \label{fig:kuniform}
\end{figure}

Figure~\ref{subfig:kuniform_ville} shows that all e-processes maintain anytime-validity across all values of $z_0$, as the Ville error is controlled at level $\alpha=0.05$ (red horizontal line) at all times. 
When $z_0=1.0$ (no contact except at $z=1$; the ``easy'' case), all e-processes have zero Ville error up to $T$. 
Figure~\ref{subfig:kuniform_epower} further shows that all GRO e-processes grow quickly against the FSD null, much faster than the constant variant, even when the contact between the two CDFs is very large. 

Overall, two variants achieve the highest e-power across scenarios: the adaptive GRO e-process with exponential weights~\eqref{eq:predictable_weights} and the non-adaptive GRO e-process. 
The adaptive methods with different weighting schemes (AdaGRO-Linear and AdaGRO-Greedy) exhibit suboptimal growth rates, especially when the contact is large (e.g., $z_0=0.1$). 
The constant-bet e-process performs the worst in all scenarios, suggesting that the GRO betting strategy is crucial for achieving good power. 
The reason why both the adaptive and non-adaptive GRO e-processes work well in e-power terms is that, given bounded support, the initial set of thresholds already includes $z=0$, where there is the biggest gap between the two CDFs. 
Thus, even without predictable weights, putting constant weight on this threshold is enough (up to logarithmic terms). 
In our next simulation, we explore this gap in a different setup.

\emph{Comparisons with non-anytime-valid tests (Figure~\ref{fig:comparison_classical_tests}).} 
In Section~\ref{sec:introduction}, we compared the sequential test induced by the adaptive GRO e-process (AdaGRO-Exp) against three non-anytime-valid tests for SD, with the null being the $z_0=0.0$, $\calH_0: X \fsd Y$ case and the alternative the $z_0=0.2$, $\calH_0: Y \fsd X$ case.  
BD03 and LMW05 represent paired bootstrap and subsampling, respectively; LSW10 augments the subsampling strategy with a separate estimation scheme for the non-dominance region. 
These methods are asymptotically valid under several regularity conditions, and by design, they are not valid under continuous monitoring. 
We saw this clearly in Figure~\ref{fig:comparison_classical_tests} (left): the Ville errors of these three baselines, unlike that of the sequential e-test, quickly surpass the significance level ($\alpha=0.05$) and continue to increase (in logarithmic scale of time). 
The figure (right) also showed the rejection times, i.e., $\tau_\alpha = \inf\{t \geq 1: E_t \geq 1/\alpha\}$ at level $\alpha=0.05$, under the alternative. 
There, the mean rejection time of the e-test (116.7) was smaller than that of the best-performing baseline (126.0 for LMW05). 
Further, the distributions of rejection times for resampling-based procedures exhibited heavier tails, suggesting that they can sometimes take much longer to reject than sequential e-tests.

\emph{Comparisons with KS-type sequential tests.} 
In Supplement~\ref{sec:empirical_comparison_ks_tests}, we include additional experiments comparing the GRO e-processes with KS-type sequential tests~\citep{darling1968some,howard2022sequential,manole2023martingale,clerico2026uniform}. 
These baselines are described in depth in Supplement~\ref{app:ks_test}; we remark that the latter three are designed for time-uniform CDF estimation rather than SD testing. 
To summarize, we find that GRO e-processes (adaptive or not) are generally more powerful than these baselines, particularly when the non-dominance region is small (or when contact is large). 
For example, when $z_0=0.2$, both the AdaGRO-Exp and (non-adaptive) GRO e-processes reject the FSD null in $110$ observations on average, while the sequential tests based on CDF bands are expected to require anywhere from $390$ to $3,800$ observations to reject.

\emph{Testing higher-order SD.} In Supplement~\ref{app:sim_kuniform_ksd}, we validate the UP e-processes for 2-SD and 3-SD testing in the same setups, arriving at generally similar conclusions as in the FSD case. Unlike for FSD, the greedy mixture is the superior choice for $k$-SD.

\subsection{Simulation \#2: Adaptivity to non-dominance regions}\label{sec:sim_normal}

In this simulation, we examine the value of adaptive strategies via predictable weights (Theorem~\ref{thm:asymptotic_power_one}) using data with unbounded and continuous support ($\calZ = \R$). 
For each scenario, we generate $(X_t, Y_t)_{t=1}^T$, $T=2,000$, from an i.i.d.~bivariate Gaussian distribution with a high negative cross-correlation ($\rho=-0.9$).
While we use parametric data-generating distributions here for the sake of clarity, the theory suggests that the methods are largely agnostic to such parametric forms. 
In the case of Gaussian rv's, $X \sim \calN(\mu_X, \sigma_X^2)$ and $Y \sim \calN(\mu_Y, \sigma_Y^2)$, we know that $X \fsd Y$ if and only if $\mu_X \leq \mu_Y$ and $\sigma_X = \sigma_Y$. 

To examine adaptivity to non-dominance regions, we vary the initial search region $\tilde\calZ_0 \subset \R$, the initial set of testing thresholds placed at $K=101$ equidistant points. 
As before, non-adaptive e-processes use a fixed mixture over these initial points; for adaptive e-processes, after 50 rounds, these thresholds are updated based on the observed sample quantiles.

\begin{figure}[t]
    \centering
    \begin{subfigure}[t]{0.26\textwidth}
        \centering
        \includegraphics[width=\textwidth, keepaspectratio]{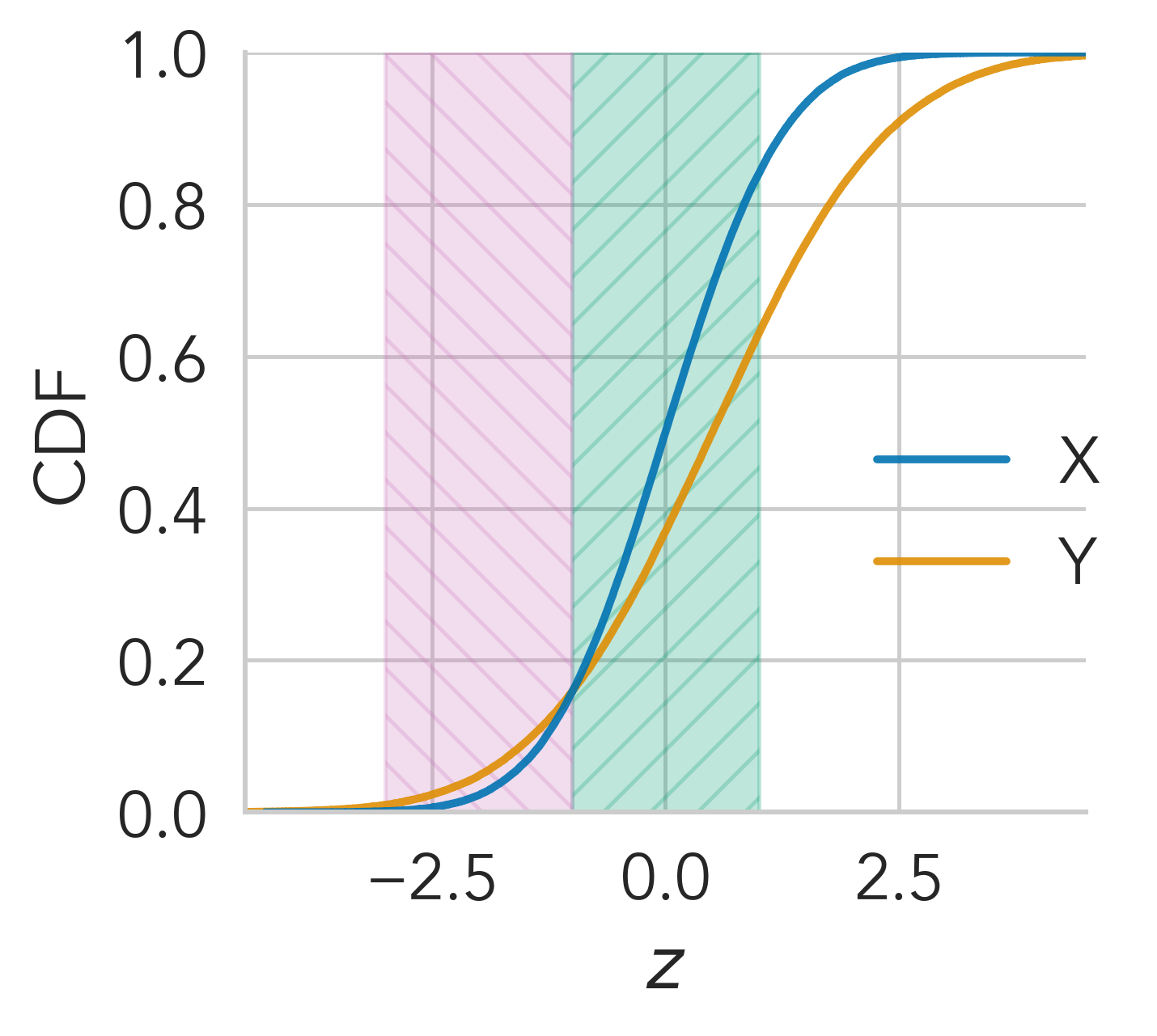}
        \caption{CDFs of $X$ and $Y$.}
        \label{subfig:normal_cdfs}
    \end{subfigure}
    ~
    \begin{subfigure}[t]{0.68\textwidth}
        \centering
        \includegraphics[width=\textwidth, keepaspectratio]{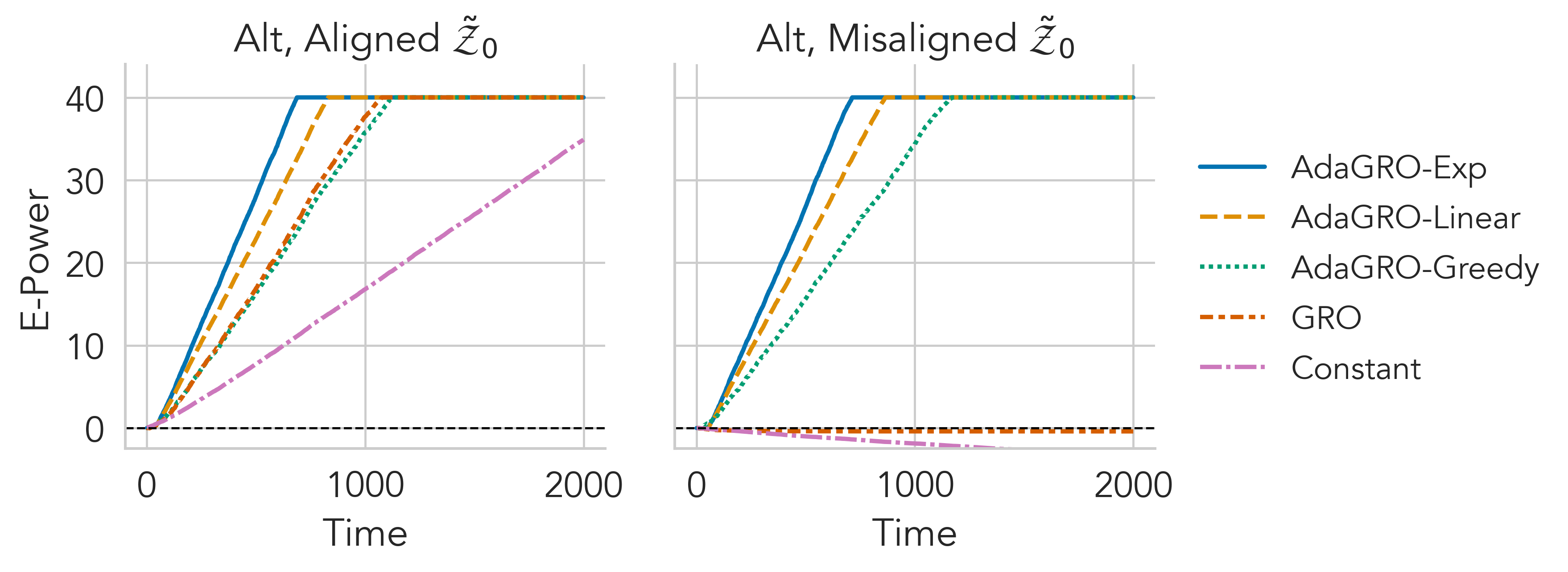}
        \caption{E-powers in the two scenarios.}
        \label{subfig:normal_epower}
    \end{subfigure}
    \caption{\textit{Adaptive GRO e-processes grow exponentially fast under the alternative, even when the non-dominance region is misaligned with the initial search interval. 
    } 
    In~\ref{subfig:normal_cdfs}, we plot the CDFs of $X$ and $Y$ and the two initial search intervals (green: aligned; purple: misaligned).
    In~\ref{subfig:normal_epower}, we plot the e-powers in two cases, depending on whether the initial search region is aligned/misaligned.
    The e-power is truncated at 40 for ease of visualization.}
    \label{fig:normal}
\end{figure}

Below, we let $X \sim \calN(0, 1)$ and $Y \sim \calN(0.25, 1.5^2)$ and test $\calH_0: Y \fsd X$; a powerful test should reject $\calH_0$.  
We consider two scenarios: one where the initial search region is \emph{aligned} with the non-dominance region, or $\tilde\calZ_0 \subset [-1, 1]$; another where the search region is \emph{misaligned}, or $\tilde\calZ_0 \subset [-3, -1]$. 
In Figure~\ref{fig:normal}, we plot the CDFs and the e-powers of all e-process variants in these two cases. 
The e-powers are averaged over 100 repeated simulations.

The results show a stark contrast between the adaptive and non-adaptive GRO e-processes. 
When the initial search region $\tilde\calZ_0$ is aligned, both adaptive and non-adaptive GRO e-processes grow quickly, with AdaGRO-Exp having a slight edge in terms of e-power ($\geq 40$ by $t\approx 750$). 
In contrast, when $\tilde\calZ_0$ is misaligned with the non-dominance region, the non-adaptive GRO e-process fails to grow at all, while all adaptive GRO e-processes grow almost as quickly as they did in the aligned case. 
This shows the crucial role of adaptivity in identifying and placing more weight (larger bets) on non-dominance regions.
We thus recommend using adaptive GRO e-processes when the support is unbounded or unknown. 

\section{Application: Monitoring the 3TTO effect in baseball}\label{sec:real_data_exp}

To illustrate a realistic use case, we apply our approach to a popular and ever-controversial decision problem in baseball analytics, known as the \emph{third-time-through-the-order (3TTO)} problem. 
In a baseball game, a starting pitcher cycles through the opposing lineup of nine batters, usually multiple (often 2--4) times, and the manager has to decide when to pull the pitcher out of the game, if at all. 
Whereas conventional wisdom viewed the starting pitcher as the ``ace'' of the team and thus advocated for keeping them in the game as long as possible, modern sabermetric evidence suggested that the performance of even the best pitchers often drops off sharply when they face the same batter for the third time~\citep{tango2007book}. 
This observed ``3TTO penalty'' became a hotly debated topic during the 2020 World Series, in which a manager took out the starting pitcher Blake Snell after his second TTO, citing the 3TTO penalty as the main reason for his decision~\citep{3TTO}. 
Snell's replacement immediately gave up two crucial runs that ultimately cost the team the series.
While the 3TTO penalty has become a popular belief among managers and fans, the question is far from settled: \citet{brill2023ttop} recently found that, after adjusting for confounders, there is little evidence of the purported strong discontinuity in performance between 1/2TTO and 3TTO. 

Beyond its relevance as a sequential decision problem, the 3TTO penalty problem provides an unorthodox yet effective use case for FSD testing. 
Since at-bat outcomes are ordinal, it is unclear how to aggregate them into a single metric; FSD avoids this issue by comparing the outcome distributions directly, without requiring a particular weighting scheme, the study of which is itself a major topic in baseball analytics~\citep[e.g.,][]{tango2007book}.

We formalize the sequential testing problem using five ordinal at-bat outcomes: an out (say, $z=0$), a walk ($z=1$), a single (1B; $z=2$), a double or triple (2B/3B; $z=3$), and a home run (HR; $z=4$). 
The larger the outcome, the better it is for the opposing batter, and thus the worse it is for the pitcher. 
As with any ordinal variable, the magnitude of $z$ does not matter (we are agnostic to, say, whether a HR is twice as valuable as a 1B). 
We exclude contextual events such as intentional walks, sacrifice flies/bunts, and fielding errors. 
Over the course of multiple at-bats across games, whenever a pitcher faces the same opposing batter for the third time, we observe the pair of 1TTO and 3TTO outcomes $(X_t, Y_t)$. 

To demonstrate our method, we proceed under an i.i.d.~assumption with some remarks. 
Within each game, whether outcomes against multiple opposing batters are i.i.d.~may depend on whether the pitcher is ``streaky''; on the other hand, controlling for the pitcher, batter-level outcomes are largely determined at the individual level. 
Across different games, one potential known issue is selection bias: pitchers can be pulled out of the game before their 3TTO. 
\citet[][Section 2.3]{brill2023ttop} note that, on average, worse pitchers are ``slightly more likely'' to be removed earlier in the game, and they focus on comparisons conditional on the pitcher getting to 3TTO. 
Here, we also focus on such a conditional comparison, and specifically on a single pitcher (Snell) rather than an average effect across many pitchers. 
At a high level, the selection bias would make it harder to detect (unconditional) 3TTO penalty \emph{if} the counterfactual 3TTO outcomes were worse than the observed ones. 

Motivated by our earlier discussion, we (retrospectively) apply our GRO e-process for testing FSD to the 3TTO problem for the pitcher Blake Snell in all his regular season starts from 2016 to 2025. 
This gives us a total of $T=849$ at-bat pairs that we observe sequentially in the order they occurred in the games. 
Since we have five discrete outcomes, we use exponential weights~\eqref{eq:predictable_weights} to adaptively split our bets across the first four at-bat outcomes.

\begin{figure}[t]
    \centering
    \includegraphics[width=0.9\textwidth,keepaspectratio]{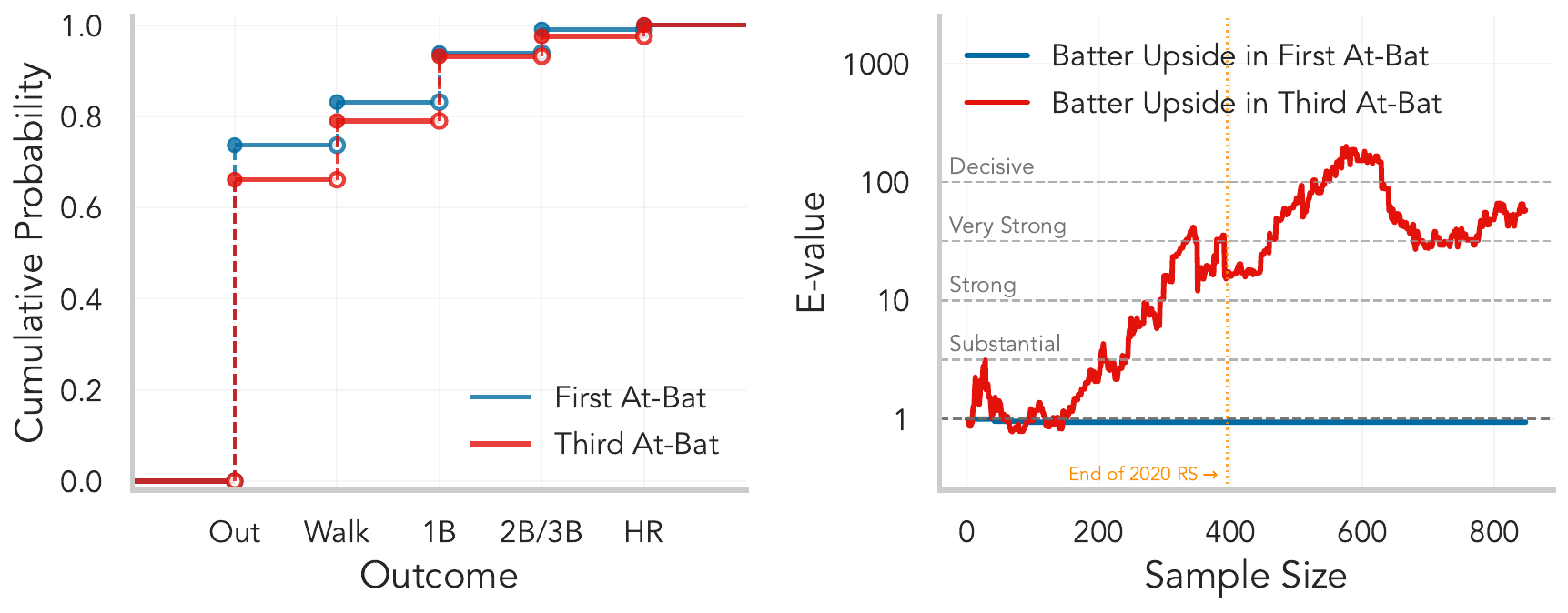}
    \caption{\textit{GRO e-processes for testing FSD provide strong evidence of the 3TTO penalty for MLB pitcher Blake Snell.} 
    The figure shows the empirical CDFs (left) and the e-processes for testing FSD in either direction (right). ``Batter upside in third at-bat'' refers to the relative advantage of opposing batters facing Snell for the third time than the first time.}
    \label{fig:baseball_tto_snell_1sd}
\end{figure}

In Figure~\ref{fig:baseball_tto_snell_1sd} (left), we plot the empirical CDFs of the observed outcomes in 1TTO and 3TTO, alongside the GRO e-processes for testing the FSD null in either direction. 
A first look at the CDFs shows that, over the ten seasons, Snell has allowed more hits and fewer outs in 3TTO than 1TTO. 
On the right, the evidence against the FSD null $\calH_0: Y \fsd X$ quantifies the \emph{batter's} upside in 3TTO relative to 1TTO, labeled in red; the evidence for batter's upside in 1TTO relative to 3TTO is labeled in blue. 
For reference, we add Jeffreys' rule-of-thumb thresholds for interpreting the strength of evidence, as recommended in \citep{ramdas2025ebook}.

After 150 observations, we see that only the e-process for testing $\calH_0: Y \fsd X$ (red) grows quickly, exceeding 10 around 300 observations and surpassing 100 after around 500 observations. 
This provides strong evidence that opposing batters have an upside in 3TTO against Snell relative to 1TTO, consistent with the 3TTO penalty hypothesis (conditional on staying in for the 3TTO). 
Had the manager been monitoring this e-process, by the end of the 2020 regular season (orange vertical line), the e-process would have been sitting at $17.41$, favoring his decision to pull Snell out of the game before his 3TTO in the World Series game. 
In practice, of course, the managerial decision takes into account several other factors: the quality of replacement; knowledge about the pitcher's physical and mental conditions on that day; belief about streakiness; and even public sentiment. 
The statistical evidence serves as one source of guidance for the eventual decision. 

In Supplement~\ref{app:real_data_exp}, we include additional experimental results, including a fine-grained analysis of batter upside against Snell and a comparison of Snell's outcomes with other top MLB pitchers. 
The analysis suggests that the presence of a 3TTO upside for batters can be heterogeneous, with Snell appearing rather anomalous among the very best pitchers.

\section{Affirming SD: Sequential tests for \emph{definite} upside}\label{sec:testing_non_SD_null}

So far, we have focused on testing the SD null hypothesis~\eqref{eq:testing_sd_null}, where rejection indicates that $Y$ has an upside over $X$. In some applications, however, it is important for the decision maker to have evidence that $Y$ is \emph{definitely} superior to $X$. 
A prominent example is online A/B testing, where a company compares a new design $B$ against an incumbent design $A$ and replaces $A$ only if there is sufficient evidence that $B$ dominates $A$. 
Anytime-valid methods are particularly attractive in such settings because they permit optional stopping while avoiding premature deployment decisions~\citep[e.g.,][]{johari2021always,lindon2022anytime}. 

The above example motivates the reversed testing problem involving the \emph{non-SD null}
\begin{equation}\label{eq:testing_non_sd}
\calH_0': X \notsd Y
\quad \text{vs.} \quad
\calH_1': X \sd Y,
\end{equation}
where $\sd$ denotes a SD relation. \emph{Affirming SD}, by rejecting the non-SD null in~\eqref{eq:testing_non_sd}, provides a strong form of guidance to the DM, as there is no point in keeping the incumbent over the challenger. 
In contrast, rejecting the SD null~\eqref{eq:testing_sd_null} provides a weaker form of guidance, suggesting $Y$ has its merits but does not necessarily disqualify $X$. 

Testing the non-SD null~\eqref{eq:testing_non_sd} is statistically more challenging than testing the SD null~\eqref{eq:testing_sd_null}, and many existing methods only address the latter. 
While the early work by~\citet{kaur1994testing} derived a consistent test for the non-2SD null, later work by~\citet{davidson2013testing} made clear that, for FSD, such tests can only effectively affirm a \emph{restricted} notion of SD on a pre-specified range that is a strict subset of the support. 
Indeed, with continuous or unbounded support, any two CDFs will eventually get close enough at the boundaries or in the tails, such that it is impossible to establish definite dominance across the entire support. 

We provide a formal exposition of the problem in Supplement~\ref{app:testing_non_SD_null}, focusing on FSD, and here we briefly summarize the main results. 
The first finding is that constructing a nontrivial e-variable for testing the non-SD null~\eqref{eq:testing_non_sd} is generally challenging:
\begin{corollary}
[No nontrivial e-variable exists for the non-SD null]
\label{cor:no_e_variable_for_non_sd_null}
There exists no e-variable $E$ for $\calH_0'$ which is nontrivial for all $\Qsymb \in (\calH_0')^c$ (that is, $\Esymb_{\Qsymb} \log(E) > 0$) simultaneously. 
\end{corollary}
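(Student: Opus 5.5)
We argue by contradiction: suppose $E$ is an e-variable for $\calH_0'=\{\Psymb: \Psymb_X \notfsd \Psymb_Y\}$ with $\Ex{\Qsymb}{\log E}>0$ for every $\Qsymb\in(\calH_0')^c$. The strategy is to find a single alternative $\Qsymb$ that can be written as a mixture (or limit) of null distributions, and then show that this forces $\Ex{\Qsymb}{E}\le 1$ and hence $\Ex{\Qsymb}{\log E}\le 0$ by Jensen's inequality. The key structural fact is that the non-SD null, unlike the SD null of Lemma~\ref{lem:FSD-null-is-convex}, is \emph{not} convex. Its complement is the convex set $\{X\fsd Y\}$, and mixing two distributions that each violate FSD at different thresholds can produce a distribution under which FSD holds. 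An e-variable for $\calH_0'$ is automatically an e-variable for the convex hull of $\calH_0'$. So it suffices to exhibit one $\Qsymb\in(\calH_0')^c$ lying in $\mathrm{conv}(\calH_0')$.

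\textbf{Construction.} Work with a single observation $(X,Y)$, since the statement concerns an e-variable for one draw. Take an alternative $\Qsymb$ with $\Qsymb_X\fsd\Qsymb_Y$ and write $\Qsymb=\tfrac12\Psymb^{(1)}+\tfrac12\Psymb^{(2)}$, where each $\Psymb^{(i)}$ violates FSD at some threshold. For instance, on $\calZ\supseteq\{0,1,2,3\}$ let $\Psymb^{(1)}$ put mass on pairs making $F_X(z_1)<F_Y(z_1)$, and let $\Psymb^{(2)}$ make $F_X(z_2)<F_Y(z_2)$ at a different $z_2$. Because CDFs average linearly, one can arrange $\tfrac12(F^{(1)}_Y-F^{(1)}_X)+\tfrac12(F^{(2)}_Y-F^{(2)}_X)\le 0$ everywhere. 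Concretely, take $X\equiv 1.5$-type point masses and let $Y$ be a two-point law whose lower atom shifts between the two components. Even more simply, take $\Qsymb$ with $\Qsymb_X=\Qsymb_Y$, which is an alternative since FSD is reflexive, and perturb it in two opposite directions to obtain $\Psymb^{(1)},\Psymb^{(2)}\in\calH_0'$ whose average is $\Qsymb$. Linearity of expectation then gives $\Ex{\Qsymb}{E}=\tfrac12\Ex{\Psymb^{(1)}}{E}+\tfrac12\Ex{\Psymb^{(2)}}{E}\le1$.

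\textbf{Conclusion.} Jensen's inequality gives $\Ex{\Qsymb}{\log E}\le\log\Ex{\Qsymb}{E}\le 0$, contradicting nontriviality at this $\Qsymb$. Since the assumed $E$ was arbitrary, no e-variable is nontrivial against all alternatives simultaneously. The only delicate step is the explicit decomposition. We must check that both components genuinely fail $X\fsd Y$, i.e.\ that there is a strict CDF crossing at some $z\in\calZ$, while their mixture satisfies it. We also need the components to be valid probability measures on $\frakB$ with the required support. Choosing $\Qsymb_X=\Qsymb_Y$ with at least two support points and shifting a small mass $\varepsilon$ of $Y$ up in one component and down in the other handles this cleanly: one component has $F_Y>F_X$ at the lower point, the other has $F_Y<F_X$ there, and both can be made to cross strictly somewhere. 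If $\calZ$ has a single point the statement is vacuous or trivial, so we assume $|\calZ|\ge2$.
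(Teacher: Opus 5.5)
Your strategy is the paper's. The paper obtains the corollary from \citet[Theorem 6.2]{zhang2024existence} together with the non-convexity of $\calH_0'$, and your linearity-plus-Jensen step ($\Qsymb\in\mathrm{conv}(\calH_0')\cap(\calH_0')^c$ gives $\Ex{\Qsymb}{E}\le 1$, hence $\Ex{\Qsymb}{\log E}\le 0$) is a self-contained proof of the implication needed.

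\textbf{The gap.} It is in the decomposition, which you rightly call the only delicate step but then get wrong. Moving mass of $Y$ \emph{upward} can only lower $F_Y$. So, starting from $F_Y=F_X$, your ``up'' component has $F_Y\le F_X$ everywhere, i.e.\ $X\sd Y$. It lies in $(\calH_0')^c$, not in $\calH_0'$, and no one-directional shift can make it ``cross strictly somewhere''. For example, take $\Qsymb_X=\Qsymb_Y=\tfrac12(\delta_{z_1}+\delta_{z_2})$ and $Y$-marginal $(\tfrac12-\varepsilon)\delta_{z_1}+(\tfrac12+\varepsilon)\delta_{z_2}$. Then $F_Y(z_1)<F_X(z_1)$ and $F_Y(z_2)=F_X(z_2)=1$.

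In fact, for $|\calZ|=2$ the null $\calH_0'=\{\Psymb:\Psymb(X=z_1)<\Psymb(Y=z_1)\}$ is convex (a strict affine inequality). So no alternative is a mixture of null laws, and the mixture route cannot work there at all; $|\calZ|\ge 2$ is not the right threshold.

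Your first sketch fails too. If both components share a point-mass $X$-marginal $\delta_x$, the mixture is an alternative only if its $Y$-marginal puts no mass below $x$. That forces each component to be an alternative as well. Also, for $|\calZ|=1$ the statement is false rather than vacuous: $\calH_0'=\emptyset$ and $E\equiv 2$ works.

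\textbf{The repair.} The CDF perturbation must change sign, which needs three thresholds $z_1<z_2<z_3$, as in the paper's own non-convexity example. Let $\mu=\tfrac13(\delta_{z_1}+\delta_{z_2}+\delta_{z_3})$ and $\nu_\pm=\mu\pm\varepsilon(\delta_{z_1}-2\delta_{z_2}+\delta_{z_3})$ with $0<\varepsilon\le 1/6$. Take joint laws $\Psymb^{(1)}_{(X,Y)}=\mu\otimes\nu_+$ and $\Psymb^{(2)}_{(X,Y)}=\mu\otimes\nu_-$. You need the mixture identity at the level of joint laws of $(X,Y)$, which product couplings deliver.
\begin{itemize}
\item At $(z_1,z_2,z_3)$, $F_Y-F_X$ equals $(\varepsilon,-\varepsilon,0)$ under $\Psymb^{(1)}$ and $(-\varepsilon,\varepsilon,0)$ under $\Psymb^{(2)}$.
\item So both components lie in $\calH_0'$.
\item Their average $\mu\otimes\mu$ has equal marginals and lies in $(\calH_0')^c$, and your linearity-plus-Jensen step finishes.
\end{itemize}

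Alternatively, a limiting argument handles every $|\calZ|\ge 2$ without any convexity. Suppose $\Qsymb_X=\Qsymb_Y$ and $y_0<x_0$ are in $\calZ$. Then $(1-\eta)\Qsymb+\eta\,\delta_{(x_0,y_0)}\in\calH_0'$ for each $\eta\in(0,1)$, since it violates at $z=y_0$. Hence $(1-\eta)\Ex{\Qsymb}{E}\le 1$ for all such $\eta$, giving $\Ex{\Qsymb}{E}\le 1$ and so $\Ex{\Qsymb}{\log E}\le 0$.
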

Corollary~\ref{cor:no_e_variable_for_non_sd_null} directly follows from \citet[][Theorem 6.2]{zhang2024existence} and the fact that $\calH_0'$ is \emph{not} convex (contrasting with Lemma~\ref{lem:FSD-null-is-convex}). 
It suggests that we need to restrict ourselves to a smaller support $\tilde\calZ \subsetneq \calZ$ of interest, for which we can construct nontrivial e-variables. 
Given $\tilde\calZ \subset \calZ$, we define the restricted non-SD null as $\calH_0'(\tilde\calZ) = \incurly{\Psymb : \exists z_0\in \tilde\calZ, \Psymb_Y(z_0) > \Psymb_X(z_0) }$. 
Then, for $\varepsilon>0$, consider the restricted SD alternative with $\varepsilon$-separation in the CDFs:
\begin{align}\label{eq:alt_cdf_sep}
    \quad \calQ(\tilde\calZ, \epsilon) = \incurly{\Qsymb \in \frakB: \Qsymb_X(z) \geq \Qsymb_Y(z) + \varepsilon, \forall z \in \tilde\calZ}.
\end{align}
We can now construct sequential tests for the restricted non-SD null that are powerful against alternatives in $\calQ(\tilde\calZ, \epsilon)$, thereby affirming restricted SD (given minimal separation).
\begin{proposition}
    For any $\tilde\calZ \subset \calZ$ and $\epsilon > 0$ with $\calQ(\tilde\calZ, \epsilon)\neq\emptyset$, there exist sequential tests for the restricted non-SD null $\calH_0'(\tilde\calZ)$ that achieve asymptotic power one against 
    $\calQ(\tilde\calZ, \epsilon)$.
\end{proposition}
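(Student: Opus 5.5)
The plan is to reverse the roles of the building blocks from Lemma~\ref{lem:building-block-e-variables} and handle the union structure of the null by taking an infimum over thresholds. Write
\[
\calH_0'(\tilde\calZ)=\bigcup_{z_0\in\tilde\calZ}\calH_0'(z_0),\qquad \calH_0'(z_0)=\{\Psymb:\Ex{\Psymb}{D(z_0)}<0\},
\]
with $D(z)=\indicator{X\leq z}-\indicator{Y\leq z}$. For each fixed $z$, the process $E_t(z)=\prod_{\ell=1}^t(1+\lambda\, D_\ell(z))$ with $\lambda\in[0,1)$ is a test SM for $\calH_0'(z)$; this is the argument of Lemma~\ref{lem:building-block-e-variables} and Theorem~\ref{thm:Test-supermartingale-for-the-FSD_null}. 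Here we bet that $D(z)$ is \emph{positive}, because the alternative $\calQ(\tilde\calZ,\epsilon)$ asserts $\Ex{\Qsymb}{D(z)}\geq\epsilon$ for all $z\in\tilde\calZ$.

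I would define the candidate process as $E_t=\inf_{z\in\tilde\calZ}E_t(z)$.

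\emph{Validity.} Fix $\Psymb\in\calH_0'(\tilde\calZ)$ and a witness $z_0\in\tilde\calZ$. Then $E_t\leq E_t(z_0)$ for all $t$, and $E_t(z_0)$ is a test SM under $\Psymb$. Hence $(E_t)$ is upper bounded by a test SM for each $\Psymb$ in the null, so it is an e-process, and Ville's inequality (Lemma~\ref{lem:Ville}) gives a level-$\alpha$ sequential test $\indicator{E_t\geq 1/\alpha}$.

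\emph{Measurability and adaptedness.} Since $z\mapsto E_t(z)$ is piecewise constant with jumps only at the observed $X_\ell,Y_\ell$, the infimum reduces to a minimum over finitely many representative points of $\tilde\calZ$ determined by the data up to time $t$. This settles both measurability and adaptedness.

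\emph{Choice of bet.} I would use a constant bet depending on $\epsilon$, say $\lambda=\epsilon/2$. Since $D\in\{-1,0,1\}$,
\[
t^{-1}\log E_t(z)=a\,\hat p_t(z)-b\,\hat q_t(z),
\]
where $a=\log(1+\lambda)$, $b=-\log(1-\lambda)$, and $\hat p_t(z),\hat q_t(z)$ are the empirical frequencies of $\{X\leq z<Y\}$ and $\{Y\leq z<X\}$. Under $\Qsymb\in\calQ(\tilde\calZ,\epsilon)$ we have $p(z)-q(z)\geq\epsilon$ for every $z\in\tilde\calZ$. Therefore
\[
a\,p(z)-b\,q(z)=a\,(p(z)-q(z))-(b-a)\,q(z)\geq a\epsilon-(b-a),
\]
and $b-a=-\log(1-\lambda^2)$ is $O(\lambda^2)$. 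So for $\lambda=\epsilon/2$ (or any sufficiently small multiple of $\epsilon$) this lower bound is a constant $\gamma>0$ that does not depend on $z$. If one prefers not to fix $\lambda$ via $\epsilon$, one can average over a countable grid of $\lambda$'s for each $z$; this costs only a constant factor in wealth. Alternatively one can use the plug-in GRO bets~\eqref{eq:plugin_GRO_lambda}, but then the uniformity argument below must also cover the estimated bets.

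\emph{Power, and the main obstacle.} The key step, and the one I expect to be the main obstacle, is to pass from pointwise growth to growth of the infimum. For this I need
\[
\sup_{z}|\hat p_t(z)-p(z)|\to0 \quad\text{and}\quad \sup_z|\hat q_t(z)-q(z)|\to0 \qquad \Qsymb\text{-a.s.}
\]
The sets $\{(x,y):x\leq z<y\}$, indexed by $z\in\R$, form a VC class: each is an intersection of two half-planes with a one-parameter index. So a Glivenko--Cantelli theorem for VC classes, or a direct bivariate-CDF Glivenko--Cantelli argument since $p(z)=\Qsymb(X\leq z)-\Qsymb(X\leq z,Y\leq z)$, gives this uniform almost-sure convergence.

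It then follows that $\liminf_t\inf_{z\in\tilde\calZ}t^{-1}\log E_t(z)\geq\gamma>0$ $\Qsymb$-a.s. Hence $E_t\to\infty$ exponentially fast, and $\tau_\alpha<\infty$ almost surely for every $\alpha$, which is the claimed asymptotic power one against $\calQ(\tilde\calZ,\epsilon)$.

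The uniform separation $\epsilon$ is exactly what makes this work. Without it, the infimum over $\tilde\calZ$ could be driven by thresholds with vanishing margin, which is the obstruction behind Corollary~\ref{cor:no_e_variable_for_non_sd_null}.
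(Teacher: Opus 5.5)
Your proof is correct, and for continuous $\tilde\calZ$ it takes a genuinely different route from the paper.

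\textbf{The paper's route.} The paper uses two separate constructions.
\begin{itemize}
\item For finite support, it takes the minimum of finitely many threshold-wise plug-in GRO e-processes (Proposition~\ref{ppn:min_approach}). Power there follows from a pointwise SLLN at each of finitely many thresholds.
\item For continuous $\calZ$, it drops e-processes and uses time-uniform CDF bands. It rejects once the UCS for $F_Y$ lies below the LCS for $F_X$ on all of $\tilde\calZ$ (Proposition~\ref{ppn:affirming_sd_tvcs_witness_set}). Power there rests on an assumed uniform convergence of the bands.
\end{itemize}

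\textbf{Your route.} You extend the minimum construction to arbitrary $\tilde\calZ$ by taking an infimum. You supply the missing uniformity yourself, through Glivenko--Cantelli for the VC class $\{x\le z<y\}$. A bet proportional to $\epsilon$ makes the rate $a\,p(z)-b\,q(z)\ge \epsilon\log(1+\lambda)+\log(1-\lambda^2)$ bounded away from zero uniformly on $\tilde\calZ$.

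\textbf{What your route buys.}
\begin{itemize}
\item A single e-process that covers finite and continuous $\tilde\calZ$ alike.
\item Power proved from first principles, rather than from a hypothesis on the bands.
\item A growth rate that exploits the pairing, through $p$ and $q$ and hence the coupling, instead of treating the marginals separately.
\end{itemize}

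\textbf{What it gives up.}
\begin{itemize}
\item A fixed $\lambda$ forfeits the asymptotic log-optimality of part (c) of Proposition~\ref{ppn:min_approach}.
\item Your test depends on $\epsilon$. The band-based test, by contrast, is powerful against every $\calQ(\tilde\calZ,\epsilon)$ at once, and it needs only marginal confidence sequences, so it also works for unpaired streams.
\end{itemize}

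Your countable-grid mixture over $\lambda$ does remove the $\epsilon$-dependence, and it does so uniformly in $z$. Any single grid point $\lambda\le\epsilon/2$ serves all $z\in\tilde\calZ$ simultaneously, because for such $\lambda$
\[
\epsilon\log(1+\lambda)+\log(1-\lambda^2)\ \ge\ \lambda\epsilon\left(\tfrac13-\tfrac{\epsilon}{4}\right)\ >\ 0 .
\]
The mixture at the witness $z_0$ is still a test supermartingale, so validity is unaffected.

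A minor technical point: your measurability justification via ``representative points determined by the data'' is loose for a general $\tilde\calZ$. It is cleaner to use right-continuity of $z\mapsto E_t(z)$ to replace $\tilde\calZ$ by a countable subset with the same infimum.
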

In Supplement~\ref{sec:testing_union_directly}, we construct such tests when $\calZ$ is finite and known by taking the minimum over the threshold-specific GRO e-processes.
In Supplement~\ref{sec:testing_nonsd_tvCS}, focusing instead on the case where $\calZ$ is continuous, we show that we can use time-uniform CDF bands~\citep[e.g.,][]{howard2022sequential,manole2023martingale,clerico2026uniform} to construct sequential tests that are powerful against $\calQ(\tilde{\calZ}, \epsilon)$. 
In either case, the tests are more difficult to reject than the analogous tests for the SD null~\eqref{eq:testing_sd_null}.

\section{Discussion}\label{sec:discussion}

This work presents a powerful and flexible framework for anytime-valid SD testing, including extensions to higher-order SD and non-SD null hypotheses. 
For SD testing, the core strategy is to express the composite null hypothesis as an intersection, enabling the predictable mixture strategy over thresholds and the derivation of a GRO e-variable (for FSD) that incorporates cross-dependence. 
The work motivates interesting future directions. 
In Section~\ref{sec:testing_non_SD_null}, we sketched an adaptation to the challenging case of affirming SD, finding that tests for affirming SD cannot achieve power against all alternatives without additional assumptions. 
\if\arxiv1%
{This difficulty has motivated earlier work on relaxed SD notions, such as almost SD \citep{baillo2024tests,leshno2002preferred} and restricted SD \citep{davidson2013testing}.} 
\else%
{This difficulty has motivated earlier work on relaxed SD notions, such as almost stochastic dominance \citep{baillo2024tests,leshno2002preferred} and restricted stochastic dominance \citep{davidson2013testing}.} \fi
Developing powerful, anytime-valid procedures for relaxed notions of SD remains an open problem.

In closing, we refer the reader to Supplement~\ref{sec:extensions}, where we discuss additional extensions to other popular integral and multivariate stochastic orders~\citep{shaked2007stochastic}, discuss validity under certain non-i.i.d.\ settings where the null hypothesis is defined using time-varying marginals~\citep{mineiro2023nonstationarity}, and explain how our methods can be extended to unpaired data.

%% file: contents/acknowledgements.tex
The authors thank Aaditya Ramdas, Shubhanshu Shekhar, Wouter Koolen, and Ruodu Wang for providing valuable early feedback.

%% file: contents/appendix.tex
\section{Omitted proofs}\label{sec:proofs}
\subsection{Proof of Lemma~\ref{lem:FSD-null-is-convex}}\label{app:proof_FSD_null_is_convex}
Consider $\Psymb,\Psymb'\in \calH_0$, and $\gamma \in (0,1)$ arbitrary. Then, for any $z\in \calZ$,
    \begin{align*}
        \left(\gamma \Psymb + (1-\gamma) \Psymb'\right)_X((-\infty,z])&= \Esymb_{\gamma \Psymb + (1-\gamma) \Psymb'} [\indicator{X \leq z}]\\ &= \gamma\Esymb_{ \Psymb } [\indicator{X \leq z}]+(1-\gamma)\Esymb_{  \Psymb'} [\indicator{X \leq z}] \\
        &\leq \gamma\Esymb_{ \Psymb } [\indicator{Y\leq z}]+(1-\gamma )\Esymb_{  \Psymb'} [\indicator{Y \leq z}] \\
        &= \left(\gamma \Psymb + (1-\gamma) \Psymb'\right)_Y((-\infty,z]),
    \end{align*}
    and it follows that $\gamma \Psymb + (1-\gamma) \Psymb' \in \calH_0.$

\subsection{Proof of Theorem \ref{thm:asymptotic_power_one}}\label{app:proof_asymptotic_power_one}

    Let $\Qsymb\in \calH_1$ be fixed. For any $t\geq 1$, by Jensen 
    \begin{align}
        \frac{\log(E_t)}{t}= \frac{1}{t}{\sum_{\ell=1}^t \log(S_\ell)}\geq
\frac{1}{t}\sum_{\ell=1}^t\int  \log(S_\ell(z))\diff \psi_\ell(z)=I_t\label{eq:lb_proof}
    \end{align}
    We show that, for sufficiently large $t\geq 1$, the quantity $I_t$ at \eqref{eq:lb_proof} is positive almost surely, which implies that $E_t\to \infty$ almost surely as $t\to \infty$. 
    
    By assumption, we may choose $\epsilon,\delta>0$ small and $N_0\geq 1$ such that 
    \begin{equation*}
        \psi_t(V)>\delta, \quad \textrm{for all }t\geq N_0
    \end{equation*}
    for
    \begin{equation*}
        V=\{z \mid F_X(z)-F_Y(z) > \epsilon\}= \{z \mid p(z)-q(z) > \epsilon\}\,.
    \end{equation*}
    For $z\in \calZ$, and $\lambda \in [-1,1]$, we define
  \begin{equation*}
      g_z(\lambda)=p(z) \log(1+\lambda) + q(z) \log(1-\lambda),  
  \end{equation*}
  for $ p(z) = \Qsymb(X\leq z < Y)$ and $ q(z) = \Qsymb(Y\leq z < X).$ 
  For any $z$ with $p(z)+ q(z)\neq 0$, $g_z$ is strictly concave and continuously differentiable with unique maximizer
\begin{equation*}
    \lambda^\star(z) = \frac{p(z)-q(z)}{p(z)+q(z)}\in [-1,1]\,,
\end{equation*}
and maximum value
\begin{align}
     g_z(\lambda^\star(z)) &= p(z) \log\left(\frac{2 p(z)}{p(z)+q(z)}\right)+q(z)\log\left(\frac{2 q(z)}{p(z)+q(z)}\right)\nonumber\\ &=\big(p(z)+q(z)\big)\cdot\textrm{kl}\left(\frac{p(z)}{p(z)+q(z)},\frac{1}{2}\right)\,,\label{eq_kl}
\end{align}
for kl denoting the KL divergence for Bernoulli random variables.

\textbf{Claim 1:} There exists $ \epsilon' >0$ and $N_1\geq 1$ such that 
\begin{equation}\label{eq:claim1}
\Esymb_{\Qsymb}[\log (S_t(z)) \mid \calF_{t-1}]=g_z(\hat{\lambda}_t^\textsf{GRO}(z))>\epsilon', \quad \forall t\geq N_1, \forall z \in V\,.
 \end{equation}
\emph{Proof:} By Glivenko–Cantelli, the empirical CDFs converge almost surely, uniformly for all $z\in V$. By the continuous mapping theorem, this implies that $\hat{\lambda}_t^\textsf{GRO}(z) \to \lambda^\star(z)_+ \land(1-c)=\lambda^\star(z) \land(1-c)$ almost surely uniformly for all $z\in V$. 
Minimizing the KL distance at \eqref{eq_kl} over all such $z$ yields
\begin{equation*}
    g_z(\lambda^\star(z)) > \epsilon \cdot \textrm{kl}\left(\frac{1+\epsilon}{2},\frac{1}{2}\right):= \epsilon_1 >0, \quad \forall z \in V\,.
\end{equation*}
Next, we show that also $g_z(\min\{\lambda^\star(z), 1-c\})>0$ for all $z \in V$. For this, we show that there exists $\epsilon_2>0$ such that, for $z \in A= V \cap \{z :\lambda^\star(z)>1-c\}$, we have
\begin{equation}\label{eq:proof1}
   g_z(\min\{\lambda^\star(z), 1-c\})= g_z(1-c) = p(z) \log(2-c) + q(z) \log(c) > \epsilon_2\,.
\end{equation}
Rewrite $A = \{z : q(z) < \min\{\gamma p(z),p(z)-\epsilon\}\}$ for $\gamma= c/(2-c)$. For any $z$, the left-hand side of \eqref{eq:proof1} is decreasing in $q(z)$. By equalizing $\gamma p(z)=p(z)-\epsilon$ and thus making $q(z)$ as large as possible, we obtain the positive lower bound 
\begin{equation*}
   g_z(\min\{\lambda^\star(z), 1-c\})> \frac{\varepsilon}{2(1-c)}
\Bigl((2-c)\log(2-c) + c\log(c)\Bigr):=\epsilon_2, \quad \forall z \in A\,.
\end{equation*}
Set $\epsilon'= \min\{\epsilon_1,\epsilon_2\}>0.$ Then, we have $g_z(\min\{\lambda^\star(z), 1-c\})>\epsilon'$ for all $z \in V$. Since $g_z$ is continuous, it follows that there exists $N_1\geq 0$ such that \eqref{eq:claim1} holds true.

 \textbf{Claim 2:} There exists $N_2\geq 1$ such that 
    \begin{equation}\label{eq:claim2}
\Esymb_{\Qsymb}[\log (S_t(z)) \mid \calF_{t-1}]=g_z(\hat{\lambda}_t^\textsf{GRO}(z))>-\frac{\epsilon'}{2}\cdot\frac{\delta}{\psi_t(V^c)}, \quad \forall t\geq N_2, \forall z \in V^c\,.
    \end{equation}
\emph{Proof:} For $0<\epsilon_3< \epsilon' \delta/(2|\log c|)$, let $B_1=\{z\in V^c : p(z)+q(z)=0\}$ and $B_2=\{z \in V^c: p(z)+q(z)>\epsilon_3\}$. For $z\in B_1$, we have $\hat{\lambda}_t^\textsf{GRO}(z)=0$ for all $t$ and thus $g_z(\hat{\lambda}_t^\textsf{GRO}(z))=0$ for all $t$. For $z$ with $0<p(z)+q(z) < \epsilon_3$, that is, $z\in V^c \setminus (B_1 \cup B_2)$, we have, for any $\lambda\in [0,1-c]$, 
\begin{align*}
    g_z(\lambda) \geq q(z)\log (c)\geq  (p(z)+q(z))\log (c)\geq \epsilon_3 \log (c)\geq -\frac{\epsilon' \delta}{2}\geq-\frac{\epsilon'}{2}\cdot\frac{\delta}{\psi_t(V^c)}\,,
\end{align*}
and thus in particular $g_z(\hat{\lambda}_t^\textsf{GRO}(z))\geq -{\epsilon' \delta}/{(2 \psi_t(V^c))}$ for all $z\in V^c\setminus (B_1 \cup B_2)$. Finally, for $z \in B_2$, by Glivenko–Cantelli and the continuous mapping theorem, we have 
$\hat{\lambda}_t^\textsf{GRO}(z) \to \lambda^\star(z)_+ \land(1-c)$ almost surely. Since $g_z(\lambda^\star(z)_+ \land(1-c))\geq 0$, for all $z$, it follows by continuity of $g_z$ that we can find $N_2\geq 1$ large enough such that \eqref{eq:claim2} is satisfied. 

Let $N= \max\{N_0,N_1,N_2\}$, and let 
\begin{equation*}
    Z_t = \int \log (S_t(z)) \diff \psi_t ( z)\,,\quad t\geq N_0.
\end{equation*}
Then, by Fubini, for any $t\geq N$,
\begin{align}
    \Esymb_{\Qsymb}(Z_t \mid \calF_{t-1}) &= \int  \Esymb_{\Qsymb}(\log (S_t(z)) \mid \calF_{t-1}) \diff \psi_t ( z)\nonumber \\
    &= \int_V  \Esymb_{\Qsymb}(\log (S_t(z)) \mid \calF_{t-1}) \diff \psi_t ( z) +\int_{V^c}  \Esymb_{\Qsymb}(\log (S_t(z)) \mid \calF_{t-1}) \diff \psi_t ( z) \nonumber\\
    &\geq \epsilon'\psi_t (V) +\left(-\frac{\epsilon'}{2}\cdot\frac{\delta}{\psi_t(V^c)}\right)  \psi_t (V^c)\nonumber\\
    &\geq\frac{\epsilon' \delta}{2}>0\,.  \label{eq:proof2}
\end{align}
Moreover, for any $t\geq N$, by Jensen
\begin{equation*}
  Z_t^2 \leq \int \log (S_t(z))^2 \diff  \psi_t ( z) < \infty\,,
\end{equation*}
as $\log (S_t(z))^2$ is bounded for all $z\in \calZ$ and all $t\geq N$. Hence, by the martingale SLLN
 \begin{equation*}
        \lim_{t\to \infty}\frac{1}{t}\sum_{\ell=N}^t  Z_\ell -  \Esymb_{\Qsymb}(Z_\ell \mid \calF_{\ell-1}) =0\,,
    \end{equation*}
and by \eqref{eq:proof2}, we may conclude that
\begin{align*}
   \liminf_{t\to \infty} I_t = \liminf_{t\to \infty} \frac{1}{t} \sum_{\ell=N}^{t} \Esymb_{\Qsymb}(Z_\ell \mid \calF_{\ell-1}) +\liminf_{t\to \infty} \frac{1}{t} \sum_{\ell=N}^{t} \big(Z_\ell-\Esymb_{\Qsymb}(Z_\ell \mid \calF_{\ell-1})\big) >\frac{\epsilon' \delta}{2}>0 \,.
\end{align*}

\section{Testing higher-order SD}\label{sec:testing_higher_order_SD}

Testing for first-order SD is a useful starting point. 
While conceptually simple, however, testing FSD may be too restrictive in some decision-making contexts, as rejection simply indicates there is some DM with an increasing utility function who prefers $X$ to $Y$. 
This motivates us to develop sequential tests for SD under known subclasses of increasing utility functions, making the SD null less restrictive.

\subsection{Higher-order SD characterizations}\label{sec:kSD}

We first review the standard characterizations of higher-order SD~\citep[e.g.,][]{shaked2007stochastic}.
Mathematically, the $k$-th order SD relation is defined in terms of iterated integrals of CDFs. 
\begin{definition}[$k$-th order SD]\label{def:kSD}
    Let $X$ and $Y$ be two rv's with CDFs $F_X$ and $F_Y$, respectively.
    For $k\geq 2$, we say that $X$ \emph{stochastically dominates $Y$ in k-th order ($k$-SD)}, denoted by $Y \ksd X$, if
    \begin{equation}\label{eq:ksd_definition}
        F_X^{[k]}(z) \leq F_Y^{[k]}(z), \quad \forall z \in \R,
    \end{equation}
    where $F^{[k]}$ denotes the $k$-th iterated integral of the CDF $F$, defined recursively: $F^{[1]} \equiv F$, and for $k\geq 2$,
    \begin{equation*}
        F^{[k]}(z) = \int_{-\infty}^z F^{[k-1]}(u)\diff u, \quad z\in \R.  
    \end{equation*}
    The relation is well-defined when the iterated integrals exist and are finite for all $z\in \R$. 
\end{definition}

The key to our e-process construction is to leverage the known characterization of these iterated integrals of CDFs in terms of lower partial moments~\citep[e.g.,][]{shaked2007stochastic}: 
\begin{lemma}[Generator characterization of $k$-SD]\label{lem:ksd_generator}
    Let $X$ and $Y$ be two rv's with CDFs $F_X$ and $F_Y$, respectively. For each $k\geq 2$, the following statements are equivalent:
    \begin{enumerate}[(a)]
        \item $Y \ksd X$ in the sense of Definition~\ref{def:kSD}, that is, $F_X^{[k]}(z) \leq F_Y^{[k]}(z)$ for all $z\in \R$;
        \item $\Esymb[g_z^{[k]}(Y)] \leq \Esymb[g_z^{[k]}(X)] $ for all $z \in \R$, where $g_z^{[k]}(x) = -(z - x)_+^{k-1}$ for $x\in \calZ$.\label{item_b:Lemma6}
    \end{enumerate}
\end{lemma}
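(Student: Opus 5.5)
The plan is to prove the identity
\begin{equation*}
F^{[k]}(z) = \frac{1}{(k-1)!}\,\Esymb\big[(z-W)_+^{k-1}\big], \quad z\in\R,
\end{equation*}
for a random variable $W$ with CDF $F$, whenever either side is finite. The equivalence then follows at once. Multiplying both sides of \eqref{eq:ksd_definition} by $-(k-1)!<0$ turns $F_X^{[k]}(z)\le F_Y^{[k]}(z)$ into $\Esymb[g_z^{[k]}(Y)]\le \Esymb[g_z^{[k]}(X)]$, pointwise in $z$. So (a) and (b) are literally the same family of inequalities, and well-definedness of the iterated integrals corresponds to finiteness of the lower partial moments.

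I will prove the identity by induction on $k$. The base case is $k=1$ with the convention $(z-x)_+^0=\indicator{x\le z}$: then $\Esymb[\indicator{W\le z}]=F(z)$. Alternatively, I can start directly at $k=2$. There, Tonelli gives
\begin{equation*}
\int_{-\infty}^z F(u)\diff u=\int_{-\infty}^z\Esymb[\indicator{W\le u}]\diff u=\Esymb\Big[\int_{-\infty}^z\indicator{W\le u}\diff u\Big]=\Esymb[(z-W)_+].
\end{equation*}
For the inductive step, assume $F^{[k-1]}(u)=\Esymb[(u-W)_+^{k-2}]/(k-2)!$. I integrate over $u\in(-\infty,z]$ and interchange integral and expectation by Tonelli, which is valid because the integrand is nonnegative. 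The inner integral is
\begin{equation*}
\int_{-\infty}^z (u-w)_+^{k-2}\diff u=\frac{(z-w)_+^{k-1}}{k-1}.
\end{equation*}
This closes the induction.

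The main obstacle is integrability, and the key point is that nonnegativity makes Tonelli apply unconditionally. The identity therefore holds in $[0,\infty]$ with no assumption at all. As a result, $F^{[k]}(z)<\infty$ exactly when $\Esymb[(z-W)_+^{k-1}]<\infty$, so the expectations in (b) exist precisely when the relation in Definition~\ref{def:kSD} is well-defined. No moment conditions beyond the definition's hypothesis are needed.

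It is also worth noting that finiteness at one $z$ implies finiteness at every $z'\le z$, by monotonicity. At larger $z'$ it holds as well, since $(z'-w)_+^{k-1}\le 2^{k-2}\big((z'-z)^{k-1}+(z-w)_+^{k-1}\big)$. Finiteness is therefore uniform in $z$, which removes any case distinctions.
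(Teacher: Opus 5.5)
Your proof is correct and follows the paper's route. The paper defers this lemma to Shaked and Shanthikumar, but in the proof of Theorem~\ref{thm:Test-supermartingale-for-the-kSD_null} it uses exactly the identity $F^{[k]}(z)=\frac{1}{(k-1)!}\Esymb[(z-X)_+^{k-1}]$ ``by Fubini'' and then obtains the equivalence by sign reversal; your Tonelli-based induction is a careful write-up of that step, adding the correct observations that the identity holds in $[0,\infty]$ without integrability assumptions and that finiteness at one $z$ gives finiteness at every $z$.
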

As before, each characterization is well-defined assuming that the associated expectations are finite for each $z\in \R$. 
We refer to the functions $g_z^{[k]}$ as the \emph{(unnormalized) generator functions} of $k$-SD, as they are sufficient for generating the SD relation in the $k$-th order~\citep{marshall1991multivariate}. 
As with FSD, we now have a simple characterization of $k$-SD in terms of a class of functions indexed by the outcome space $\R$. 

\begin{remark}[On the range of the testing threshold $z$] Comparing with Definition~\ref{def:FSD} for FSD and the equivalent characterization in Lemma~\ref{lem:FSD_equivalences}, 
one might be tempted to restrict the thresholds $z$ for higher-order SD only to the common support $\calZ$ rather than requiring the conditions to hold for all $z\in \R$. For second-order stochastic dominance this would indeed be sufficient.
However, for $k\geq 3$, pointwise ordering of the iterated CDF integrals does not, in general, imply the corresponding ordering of lower-order moments required for $k$-th order SD~\citep{whitmore1970third,fishburn1976continua}.
For a concrete counterexample, consider, e.g., the case where $P_X=(1/2)(\delta_{1/2}+\delta_{1})$ and $P_Y=(1/5)\delta_{0}+(4/5)\delta_{1}$, for which one can verify that $F_X^{[3]}(z)\leq F_Y^{[3]}(z)$, for all $z\in \calZ=[0,1]$, even though $\Esymb(X)=3/4<4/5=\Esymb(Y)$. 
That is, if $\sup \calZ=b<\infty$, the iterated CDF integrals also carry moment information to the right of $b$, and one would need to impose the additional lower-moment conditions $F_X^{[j]}(b) \leq F_Y^{[j]}(b)$, for all $j \leq k-1$, alongside the iterated integral condition \eqref{eq:ksd_definition}, when restricting attention simply to $z\in \calZ$.
\end{remark}

\begin{remark}[The expected utility characterization for $k$-SD]\label{remark:eu_ksd}
    There is also a standard expected utility characterization of $k$-th order SD~\citep{fishburn1970utility}: 
    for $k\geq 2$, $Y \ksd X$ if and only if $\ex{u(Y)} \leq \ex{u(X)}$ for all $u \in \calU^{[k]}$ whenever the expectations exist, where $\calU^{[k]}$ consists of all $k$-times-differentiable functions (denoted by $C^k(\R)$) whose first $k$ derivatives alternate in sign:
    \begin{equation*}\label{eq:utility_class_kSD}
        \calU^{[k]} = \{u\in C^k(\R) \mid (-1)^{j-1} u^{(j)} (z) \geq 0, \textrm{ for all }z \in \R, j=1, \ldots, k\}.
    \end{equation*}
    For example, when $k=2$ (second-order SD or SSD), the utility class $\calU^{[2]}$ consists of all increasing and concave functions, which represents risk aversion in
the expected-utility framework
commonly used in economics and decision theory~\citep{pratt1964risk,rothschild1970risk}. 
    In particular, given two bets $X$ and $Y$ with the same expected value, a DM with a concave utility function would prefer the bet with less variability.\footnote{For example, given a 50-50 bet that pays either $x_1=-\$100$ or $x_2=\$100$, a DM with a linear utility function would be indifferent between
taking the bet and not taking it.
    Yet, a DM with a concave utility function would prefer to avoid the bet, reflecting risk aversion. Analogously, a DM with a convex utility function would prefer to take the bet, reflecting risk-seeking behavior.}
    Similarly, the $k=3$ case corresponds to a ``prudent'' DM whose marginal utility is further convex, such that the DM prefers to avoid negative downside risk (i.e., negative skew); the $k=4$ case corresponds to a ``temperate'' DM who prefers to avoid heavy tails on either side~\citep{whitmore1970third,fishburn1976continua,kimball1990precautionary}. 
    In this view, rejecting the $k$-SD null $\calH_0: Y \ksd X$ indicates there exists at least one such DM for whom $Y$ has an upside over $X$. 
\end{remark}

\subsection{Testing for higher-order SD with e-variables}\label{sec:higher_order_evariable}

We now consider sequentially testing the $k$-SD null hypothesis. By Lemma~\ref{lem:ksd_generator}, we can express the $k$-SD null as another intersection null over $\calZ$:
\begin{equation}\label{eq:ksd_null}
    \calH_0^{[k]} = \{ \Psymb \in \frakB^{[k]} \mid Y \ksd X \text{ under } \Psymb\} = \bigcap_{z \in \R} \calH_0^{[k]}(z), 
\end{equation}
where
\begin{equation*}\label{eq:ksd_null_z}
    \calH_0^{[k]}(z) = \incurly{\Psymb \in \frakB^{[k]} \mid \Ex{\Psymb}{g_z^{[k]}(Y)} \leq \Ex{\Psymb}{g_z^{[k]}(X)}},
\end{equation*}
and $\frakB^{[k]}$ denotes the family of all probability measures on $(\Omega,\calF)$ such that all $(k-1)$-th order lower partial moments of $X$ and $Y$ exist. 
Rejecting the 2-SD null, for example, indicates that $Y$ has an upside over $X$ in the second order (in the expected utility view, $Y$ is preferred to $X$ by at least one risk-averse DM).

Testing $k$-th order SD is only sensible when the associated lower partial moments exist, such as when the support~$\calZ$ is bounded from below (see Assumption~\ref{assn:finite_lower_bound} later).
In such cases, Definition~\ref{def:kSD} implies that we have an \emph{increasing} sequence of nulls,
\begin{equation}\label{eq:ksd_relations}
    \tilde{\calH}_0^{[1]} \subset \tilde{\calH}_0^{[2]} \subset \dotsb \subset \tilde{\calH}_0^{[k-1]} \subset \tilde{\calH}_0^{[k]} \subset \dotsb \subset {\calH}_0^{[\infty]} \subset \calH_0^{[\mu]},
\end{equation}
where $\tilde{\calH}_0^{[k]} = \calH_0^{[k]} \cap \frakB^{[\infty]}$ is the $k$-SD null restricted to distributions with finite moments of all orders, $\calH_0^{[\infty]}$ is the infinite-order SD null (the limit of the $k$-SD nulls; see Section~\ref{sec:extensions}), and $\calH_0^{[\mu]} = \incurly{\Psymb \in \frakB: \Ex{\Psymb}{Y} \leq \Ex{\Psymb}{X}}$ is the mean dominance null. 

Using Lemma~\ref{lem:ksd_generator}, we can construct e-variables for any $k$-SD null $\calH_0^{[k]}(z)$ for each $z\in \calZ$ analogous to the FSD case. 
Unlike in the FSD case, however, we will now require a lower bound on the random variables. 
Recalling that $\calZ$ denotes the common support of $X$ and $Y$, 
\begin{assumption}[Finite lower bound]\label{assn:finite_lower_bound}
    $\calZ \subseteq [a, \infty)$ for some $a \in \R$. 
\end{assumption}

Assumption~\ref{assn:finite_lower_bound} implies in particular that the iterated CDF integrals at \eqref{eq:ksd_definition} and the truncated moments
in Lemma~\ref{lem:ksd_generator}\ref{item_b:Lemma6} exist for all $z\in \R$ and $k\geq 1$. 
\begin{theorem}
\label{thm:Test-supermartingale-for-the-kSD_null} 
Suppose that Assumption~\ref{assn:finite_lower_bound} holds. For $k \geq 2$, define the $k$-th order normalized utility generator for each $z \in \calZ$:
\begin{equation*}\label{eq:def_kSD_normalized_generator}
    u_z^{[k]}(x) = \frac{g_z^{[k]}(x)}{(z-a)^{k-1}} = -\insquare{\frac{(z-x)_+}{z-a}}^{k-1}, \quad x \in \calZ,
\end{equation*}
if $z \in (a, \infty)$, and $u_a^{[k]}(x) = 0, \, x \in \calZ$. 
Then, for each $k\geq 2$, the following statements hold: 
\begin{enumerate}[(a)]
    \item For each $\lambda \in [0,1]$,
    \begin{equation*}\label{eq:evalue_ksd}
        S^{[k]}(\lambda, z) = 1+\lambda [{u}_z^{[k]}(Y)-{u}_z^{[k]}(X)]
    \end{equation*}
    is an e-variable for $\calH_0^{[k]}$. 
    \item For any predictable sequence of mixtures $(\psi_t)_{t\in \N} \subseteq \Pi(\calZ)$, and predictable bets $(\lambda_t(z))_{t\in \N} \subseteq [0,1]$ for each $z\in \calZ$, 
    \begin{equation*}
        S_t^{[k]} = 1 +\int_a^\infty \lambda_t(z) [{u}_z^{[k]}(Y_t)-{u}_z^{[k]}(X_t)] \psi_t(\diff z), \quad t\in \N, 
    \end{equation*}
    forms sequential e-variables for $\calH_0^{[k]}$, and its running product $E_t^{[k]} = \prod_{\ell=1}^t S_\ell^{[k]}$
    forms a test SM, and thus an e-process, for $\calH_0^{[k]}$. 
\end{enumerate}
\end{theorem}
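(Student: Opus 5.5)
The plan is to mirror the FSD argument (Lemma~\ref{lem:building-block-e-variables}, Lemma~\ref{lem:finite_mixtures_are_sufficient_to_test_FSD_null} and Theorem~\ref{thm:Test-supermartingale-for-the-FSD_null}), replacing indicators with the normalized generators $u_z^{[k]}$. The whole argument rests on one boundedness fact: under Assumption~\ref{assn:finite_lower_bound}, for $z>a$ and $x\in\calZ\subseteq[a,\infty)$ we have $0\le (z-x)_+\le z-a$. Hence $u_z^{[k]}(x)\in[-1,0]$, and the same holds trivially for $z=a$. Consequently $D_z:=u_z^{[k]}(Y)-u_z^{[k]}(X)\in[-1,1]$, so $1+\lambda D_z\ge 1-\lambda\ge 0$ for every $\lambda\in[0,1]$. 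This guarantees nonnegativity. It also guarantees that all expectations below exist, because the integrands are bounded; in particular, membership in $\frakB^{[k]}$ is automatic.

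For (a), fix $z$. If $z=a$, then $S^{[k]}\equiv 1$ and there is nothing to show. If $z>a$, multiplying the defining inequality of $\calH_0^{[k]}(z)$ by the positive constant $(z-a)^{-(k-1)}$ gives $\Ex{\Psymb}{u_z^{[k]}(Y)}\le \Ex{\Psymb}{u_z^{[k]}(X)}$. Therefore $\Ex{\Psymb}{S^{[k]}(\lambda,z)}=1+\lambda\,\Ex{\Psymb}{D_z}\le 1$ for all $\Psymb\in\calH_0^{[k]}(z)\supseteq\calH_0^{[k]}$, using the intersection representation~\eqref{eq:ksd_null}. Lemma~\ref{lem:ksd_generator} is only needed to justify that representation. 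Since $\calZ\subseteq\R$, only thresholds $z\in\calZ$ are used here, and those lie in the intersection over $\R$.

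For (b), I would first rewrite $S_t^{[k]}=\int_{\calZ} S_t^{[k]}(\lambda_t(z),z)\,\psi_t(\diff z)$, using that $\psi_t$ is a probability measure. This exhibits $S_t^{[k]}$ as a mixture of nonnegative quantities, so $S_t^{[k]}\ge0$. Next I need joint measurability of $(z,\omega)\mapsto\lambda_t(z)D_z(X_t,Y_t)$. The generator part $(z,x)\mapsto u_z^{[k]}(x)$ is Borel, since it is continuous on $\{z>a\}$ and constant at $z=a$, and I will take the bets to be jointly measurable and predictable. The integrand is bounded by $1$ in absolute value, so Fubini applies conditionally on $\calF_{t-1}$. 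Because $\psi_t$ and $\lambda_t(\cdot)$ are $\calF_{t-1}$-measurable and $(X_t,Y_t)$ is independent of $\calF_{t-1}$ with the law of $(X,Y)$, this gives
\[
\Esymb_{\Psymb}[S_t^{[k]}\mid\calF_{t-1}] = 1+\int_{\calZ}\lambda_t(z)\,\Ex{\Psymb}{D_z}\,\psi_t(\diff z)\le 1 .
\]
The inequality follows because each $\Ex{\Psymb}{D_z}\le0$ under the null and each $\lambda_t(z)\ge0$.

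It then follows that $E_t^{[k]}=\prod_{\ell\le t}S_\ell^{[k]}$ is adapted, nonnegative, starts at $1$, and satisfies $\Esymb_{\Psymb}[E_t^{[k]}\mid\calF_{t-1}]=E_{t-1}^{[k]}\,\Esymb_{\Psymb}[S_t^{[k]}\mid\calF_{t-1}]\le E_{t-1}^{[k]}$. So it is a test SM for every $\Psymb\in\calH_0^{[k]}$, and hence an e-process, with Lemma~\ref{lem:Ville} applying. The main obstacle is modest: it is the measurability and Fubini step with random predictable mixtures, together with ensuring that the normalization keeps $|D_z|\le 1$ uniformly in $z$. That uniform bound is exactly where Assumption~\ref{assn:finite_lower_bound} is essential, since without a finite $a$ the generators would be unbounded below and nonnegativity would fail for any fixed $\lambda>0$.
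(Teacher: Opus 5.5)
Your proposal is correct and follows essentially the paper's route: rescale the threshold-$z$ null condition by $(z-a)^{k-1}$, use Assumption~\ref{assn:finite_lower_bound} to get $u_z^{[k]}(Y)-u_z^{[k]}(X)\in[-1,1]$ (which gives both nonnegativity and the expectation bound), and then repeat the mixture and running-product argument of Theorem~\ref{thm:Test-supermartingale-for-the-FSD_null}. Your version is slightly more careful than the paper's: you use only the inclusion $\calH_0^{[k]}\subseteq\calH_0^{[k]}(z)$ for $z\in\calZ$, whereas the paper states an equivalence over $z\in\calZ$ that its own remark cautions against for $k\ge 3$; you also make the conditional Fubini and measurability step explicit.
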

\begin{proof}
    For any $k \geq 2$ and any rv $X$ with CDF $F$ and finite $(k-1)$-th lower partial moments, by Fubini
\begin{equation*}\label{eq:equiv_characterization_HO_FSD}
    F^{[k]}(z) = \frac{1}{(k-1)!} \ex{(z-X)_+^{k-1}}, \quad z\in \calZ\,.
\end{equation*}
The null hypothesis $\calH_0^{[k]}: F_X^{[k]}(z) \leq F_Y^{[k]}(z)$ for all $z \in \calZ$, is thus equivalent to
\begin{equation*}
    \ex{(z-X)_+^{k-1}} \leq \ex{(z-Y)_+^{k-1}}, \quad \forall z\in \calZ.
\end{equation*}
Using the lower bound of $a$ (Assumption~\ref{assn:finite_lower_bound}), we can rewrite this condition as
\begin{equation*}
\ex{-\inparen{\frac{(z-Y)_+}{z-a}}^{k-1}} \leq \ex{-\inparen{\frac{(z-X)_+}{z-a}}^{k-1}}, \quad \forall z\in \calZ.
\end{equation*}
Because the differences $u_z^{[k]}(x) - u_z^{[k]}(y)$ take values in $[-1, 1]$, each $S^{[k]}(\lambda, z)$ is an e-variable for $\calH_0^{[k]}$ for any fixed choice of $\lambda \in [0, 1]$. 
The second part follows exactly analogously to the case of Theorem~\ref{thm:Test-supermartingale-for-the-FSD_null}. 
\end{proof}

\begin{remark}
While the assumption of a finite lower bound is common in classical tests for SD~\citep[e.g.,][]{barrett2003consistent,linton2005consistent}, it may be restrictive when the data are unbounded from both sides.  
In such cases, it is possible to instead impose a tail condition on the data distributions, in particular that they are sub-exponential, and derive a different building-block e-variable for testing 2-/3-SD based on the theory of \emph{sub-$\psi$ processes}~\citep{howard2020chernoff,howard2021time}. 
Sub-exponential distributions have tails that decay at least as fast as the exponential distribution, and thus include any bounded and sub-Gaussian distributions as special cases. 
\end{remark}

\subsection{Portfolio-based bets for higher-order SD testing}\label{sec:ksd_portfolio}

As with FSD, we begin by choosing an optimal betting strategy for the ``atomic'' hypothesis $\calH_0(z)$ of the intersection null $\calH_0 = \bigcap_{z \in \calZ} \calH_0(z)$. 
Recall that, for FSD, we were able to derive a GRO betting strategy $\lambda^\star(z)$ for each threshold $z \in \calZ$ (Proposition~\ref{ppn:GRO_betting_parameter_FSD}) and obtain a log-optimal e-process for each atomic null $\calH_0(z)$. 
This derivation leveraged the relatively simple form of the building-block e-variables in the FSD case, something we do not have for higher-order SD as we do not have ternary building-block e-variables. 

Here, instead of focusing on a particular SD relation, we propose a general SD betting strategy with respect to arbitrary utility generators. 
We can do so for the higher-order SD relations, as well as for other SD relations such as the increasing convex order or the infinite-order SD discussed in Section~\ref{sec:extensions}. 
We leverage recent advances that derive portfolio-based betting strategies that \emph{asymptotically} achieve the log-optimal growth rate. 

Suppose we sequentially test $\calH_0 = \bigcap_{z \in \calZ} \calH_0(z)$, for $\calH_0(z) = \{\Psymb \mid \Ex{\Psymb}{u_z(Y)} \leq \Ex{\Psymb}{u_z(X)}\}$ and utility generator $u_z$ at threshold $z \in \calZ$.  
Given an adapted sequence of e-variables $(S_t(z))_{t\in \N}$ for each null hypothesis $\calH_0(z)$, $z \in \calZ$, consider the wealth process
\begin{equation}\label{eq:wealth_fbet}
    W_t(z) = \prod_{\ell=1}^t \insquare{(1 - \fbet_{\ell}(z)) + \fbet_{\ell}(z) S_{\ell}(z)}, \quad t\in \N,
\end{equation}
where $\fbet_{\ell}(z) \in [0,1]$ is the skeptic's predictable bet in the $\ell$th round on the $z$th game. 
We can recover our earlier e-process construction by setting $S_t(z) = 1 + [u_z(Y_t) - u_z(X_t)]$. 

Prior work establishes that we can construct betting strategies $(\lambda_t(z))_{t\in \N}$ that yield an \emph{asymptotically log-optimal}~\citep{waudbysmith2025universal,wang2025ebacktesting} wealth process under any alternative $\Qsymb$ for which $(S_t(z))_{t\in \N}$ are not e-variables. 
Essentially, this means that the process achieves the optimal growth rate in large samples under i.i.d.~data.
When compared to using the GRO betting strategy that we derived explicitly for the FSD case, we are sacrificing finite-sample optimality, at each time $t$, in exchange for the flexibility of the strategy across different utility classes without having to explicitly derive the optimal bet.

\citet{waudbysmith2025universal} find that a canonical betting strategy that achieves asymptotic log-optimality is the \emph{universal portfolio (UP)} strategy~\citep{cover1991universal}. 
In essence, the UP strategy begins with an (uninformative) prior $\pi$ over the fractional bet parameter $\lambda \in [0, 1]$, such as the $\mathsf{Beta}(1/2, 1/2)$ distribution, and updates the posterior over $\lambda$ at each time $t$ by treating the wealth process $W_t(z; \lambda)$ itself as a likelihood function. 
More formally, let $W_t(z; \lambda)$ denote the wealth process~\eqref{eq:wealth_fbet} at time $t$ for index $z \in \calZ$ where the fractional bet is fixed to $\lambda$ across time, and compute the posterior mean for $\lambda_t(z)$: for $t \geq 1$,
\begin{equation*}
    \fbet_t^\up(z) = \frac{\int_0^1 \fbet W_{t-1}(z; \fbet) \, \diff\pi(\fbet)}{\int_0^1 W_{t-1}(z; \fbet) \, \diff \pi(\fbet)}.
\end{equation*}
In practice, we use the clipped UP strategy 
\begin{equation}\label{eq:universal_portfolio}
    \fbet_t^\up(z) = \frac{\int_0^{1-c} \fbet W_{t-1}(z; \fbet) \, \diff\pi(\fbet)}{\int_0^{1-c} W_{t-1}(z; \fbet) \, \diff\pi(\fbet)}.
\end{equation}
for some small $c>0$ in order to prevent e-variables from being zero, and approximate the integrals by discretizing the space of fractional bets and keeping track of unnormalized posterior weights for each candidate bet. 
We remark that it is possible to use other portfolio-based betting strategies that also achieve asymptotic log-optimality~\citep[e.g.,][]{cover1996universal,orabona2023tight}. 

The UP strategy described here takes the place of the GRO strategy for FSD testing in constructing the building-block e-variables (Lemma~\ref{lem:building-block-e-variables}) for each threshold $z \in \calZ$.
To aggregate wealth across thresholds, we recommend using the adaptive approach with predictable weights, described in Section~\ref{sec:agg_over_z}.  
In the next section, we show that the asymptotic power-one guarantee of Theorem \ref{thm:asymptotic_power_one} for FSD translates for higher-order SD when replacing the empirical GRO plug-in with the UP strategy.

\begin{remark}\label{rem:sr_comparison}
    In Section~\ref{sec:related_work_testing_SD}, we mentioned that the preliminary draft of \citet[SR;][Appendix F.1 of arxiv.v1]{shekhar2023nonparametric} sketched a related approach for $k$-SD testing. 
    In this variant, at each time $t$, the threshold $z_t^*$ at which the mean \emph{unnormalized} generator difference is the largest is first chosen; then, on that threshold, it is suggested that a portfolio bet is made. 
    This can be viewed (conceptually) as a special case of our $k$-SD framework, such that the predictable weights are given entirely to the single threshold $z_t^*$. 
    Yet, there are also two subtle differences. 
    First, the e-variable in this paper is based on \emph{normalized} utility generator differences, bounded in $[-1,1]$ for each threshold $z$, ensuring \citet{cover1991universal}'s standard UP bets are well-defined under a finite lower bound. 
    Second, the UP bets in our framework for each $z_t^*$, even when using a greedy mixture, are calculated based on the prior wealth \emph{on that particular threshold}, as opposed to SR's variant that maintains a single posterior on bets across time even as the selected threshold moves around. 
    To implement SR's approach faithfully (as it was never implemented and removed in their subsequent versions), we would need to find a different prior on $\lambda$ suitable for unbounded and unnormalized generators; instead, we adapt their greedy strategy within our framework (``AdaUP-Greedy''), and find it to be an effective adaptation for $k$-SD testing (see Section~\ref{app:sim_kuniform_ksd}). 
\end{remark}

\subsection{Asymptotic power-one guarantee for higher-order SD}\label{app:power-one_for_higher_order}
For $k\geq 2$, following Theorem~\ref{thm:Test-supermartingale-for-the-kSD_null}, at $t\geq 1$, for some predictable mixture distribution $\psi_t \in \Pi(\calZ)$, we consider the $\calZ$-mixture sequential e-variable
\begin{equation}\label{eq:mixture_e-variable_higher_order}
    S_t = \int S_t^{[k]}(  \fbet_t^\up(z),z)\diff\psi_t( z), 
\end{equation}
as our measure of evidence against the global k-SD null $\calH_0^{[k]}$~\eqref{eq:ksd_null} with the UP betting parameter $\fbet_t^\up(z)$~\eqref{eq:universal_portfolio}, where we impose the following mild condition on the prior:

\begin{assumption}[Prior mass]\label{ass:prior}
For $c\in(0,1/2)$, the prior $\pi$ on $[0,1-c]$ admits a Lebesgue density
bounded above and below by constants $0<\underline\pi\le\overline\pi<\infty$.
\end{assumption}

Assumption~\ref{ass:prior} is mild and is satisfied, in particular, by the uniform prior on $[0,1-c]$.

The next theorem shows that the asymptotic power-one guarantee of Theorem~\ref{thm:asymptotic_power_one} for FSD extends to higher-orders when replacing the empirical GRO plug-in with the UP strategy.  

\begin{theorem}[UP e-process with predictable $\calZ$-mixture is powerful against the higher-order SD null]\label{thm:asymptotic_power_one_for_higher_order} Suppose that Assumptions~\ref{assn:finite_lower_bound} and~\ref{ass:prior} hold. 
Let $k\geq 2$ and $\Qsymb\in (\calH_0^{[k]})^c$. For $(\psi_t)_{t\in \N}\subseteq \Pi(\calZ)$ predictable, assume that there exist $\epsilon,\delta>0$ small and $N_0 \in \N$ such that, 
\begin{equation*}
\psi_t\left(\left\{z:\Ex{\Qsymb}{{u}_z^{[k]}(Y)-{u}_z^{[k]}(X)}> \epsilon\right\}\right)>\delta, \quad \textrm{for all }t\geq N_0\,.
\end{equation*} 
Then, the e-process $E_t=\prod_{\ell=1}^t S_\ell$, for $t\in \N$ and $S_\ell$ defined at \eqref{eq:mixture_e-variable_higher_order}, is powerful against $\calH_1$ and the corresponding sequential test has asymptotic power one. 
\end{theorem}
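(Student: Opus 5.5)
We follow the structure of the proof of Theorem~\ref{thm:asymptotic_power_one} and replace the Glivenko--Cantelli plug-in argument with a regret-type analysis of the UP strategy. Write $D_t(z)=u_z^{[k]}(Y_t)-u_z^{[k]}(X_t)\in[-1,1]$ and $\mu(z)=\Ex{\Qsymb}{D_1(z)}$. Let $V=\{z:\mu(z)>\epsilon\}$, and set $g_z(\lambda)=\Ex{\Qsymb}{\log(1+\lambda D_1(z))}$ for $\lambda\in[0,1-c]$. By Jensen, exactly as in \eqref{eq:lb_proof},
\[
\frac{\log E_t}{t}\;\geq\;\frac1t\sum_{\ell=1}^t\int \log\bigl(1+\lambda_\ell^\up(z)D_\ell(z)\bigr)\diff\psi_\ell(z).
\]
Hence it suffices to show that the conditional expectations of the integrand are eventually bounded below by a positive constant on $V$, and bounded below by something negligible on $V^c$. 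After that, we conclude with the martingale SLLN, since $|\log(1+\lambda D)|\le|\log c|$ is bounded.

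\textbf{Step 1 (positive growth on $V$).} For $z\in V$, the function $g_z$ is concave with $g_z'(0)=\mu(z)>\epsilon$ and $g_z''\ge -1/c^2$. Evaluating at $\lambda_0=\min\{\epsilon c^2,1-c\}$ gives a uniform bound $g_z(\lambda_0)\ge \epsilon\lambda_0/2=:\epsilon_1>0$, so $\max_\lambda g_z(\lambda)\ge\epsilon_1$ uniformly over $V$.

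\textbf{Step 2 (the bets $\lambda_t^\up(z)$ asymptotically achieve this).} Here we depart from the FSD proof, because there is no closed form for the optimal bet. The plan is to show that the UP posterior mean concentrates near $\lambda^\star(z)=\arg\max_{[0,1-c]}g_z$, uniformly in $z$. By Assumption~\ref{ass:prior} and the Laplace-type argument of Cover, $\frac1t\log W_t(z;\lambda)$ converges to $g_z(\lambda)$ uniformly in $\lambda\in[0,1-c]$, and also uniformly in $z$. The uniformity in $z$ follows because $\{x\mapsto\log(1+\lambda[u_z(y)-u_z(x)])\}$ is a Glivenko--Cantelli class: it is Lipschitz in $(\lambda,z)$ after normalization, and is built from the monotone functions $(z-x)_+^{k-1}$. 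Since $g_z$ is strictly concave whenever $D_1(z)\not\equiv0$, the posterior mass outside a neighborhood of $\lambda^\star(z)$ vanishes exponentially, and the bounded prior density prevents degeneracy. It follows that $g_z(\lambda_t^\up(z))\to g_z(\lambda^\star(z))$. This yields the analogue of Claim~1, namely $g_z(\lambda^\up_t(z))>\epsilon_1/2$ for all $z\in V$ and $t\ge N_1$, almost surely. I expect this uniform-in-$z$ posterior concentration to be the main obstacle. The difficulty is near the boundary $z\downarrow a$, where $u_z$ changes rapidly, and where $g_z$ is nearly flat. If full uniformity is awkward, I would first try to avoid it: the concavity of $\lambda\mapsto g_z$ together with $g_z(\lambda^\star)\ge0$ means it suffices to show the posterior mean eventually lands in the superlevel set $\{g_z\ge\epsilon_1/2\}$. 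This set is an interval of length bounded below uniformly over $V$, which is a weaker requirement.

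\textbf{Step 3 (control on $V^c$ and conclusion).} For $z\in V^c$, we use the same split as in Claim~2. Where $\mu(z)\le0$, $g_z$ is decreasing, so the posterior mean shrinks toward $0$ and $g_z(\lambda^\up_t(z))\to\max g_z\ge0$. Where $g_z$ is nearly flat, $g_z\ge -\Ex{\Qsymb}{|D_1(z)|}|\log c|$ is already small. In all cases we obtain $g_z(\lambda_t^\up(z))\ge -\epsilon_1\delta/(4\psi_t(V^c))$ eventually. Combining this with $\psi_t(V)>\delta$ exactly as in \eqref{eq:proof2} gives conditional drift at least $\epsilon_1\delta/4$. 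The martingale SLLN then gives $\liminf_t t^{-1}\log E_t>0$ almost surely, hence $E_t\to\infty$ and $\Qsymb(\tau_\alpha<\infty)=1$.
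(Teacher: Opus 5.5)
Your outline follows the paper's route. The ingredients match: the Jensen lower bound, uniform Glivenko--Cantelli convergence of $t^{-1}\log W_t(z;\lambda)$, Laplace-type concentration of the clipped UP posterior, the split into $V$ and $V^c$ using $\psi_t(V)>\delta$, and the martingale SLLN.

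\textbf{Where the proposal breaks.} The gap is exactly the step you flag: uniform-in-$z$ control of $g_z(\lambda_t^{\up}(z))$. Pointwise strict concavity does not give a uniform exponential rate. The paper instead uses the curvature bound
$-\partial_\lambda^2 g_z(\lambda)=\mathbb{E}_{\mathbb{Q}}[D(z)^2/(1+\lambda D(z))^2]\ge \mathbb{E}_{\mathbb{Q}}[D(z)^2]/(2-c)^2$.
\begin{itemize}
\item On $V$ this is at least $\mu(z)^2/(2-c)^2>\epsilon^2/(2-c)^2$, so Lemma~\ref{lem:posterior_concentration} applies. In particular $g_z$ is never nearly flat on $V$, so your worry about $z\downarrow a$ is misplaced there.
\item On $V^c$ the paper splits by the size of $\mathbb{E}_{\mathbb{Q}}[D(z)^2]$, not by the sign of $\mu(z)$. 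Where $\mathbb{E}_{\mathbb{Q}}[D(z)^2]>\epsilon_3$ the curvature is at least $\epsilon_3/(2-c)^2$; otherwise a direct bound is used.
\end{itemize}
Your Step~3 as written has two holes:
\begin{itemize}
\item The case $0<\mu(z)\le\epsilon$ with $g_z$ not flat is covered by neither of your two cases.
\item For $\mu(z)\le 0$, ``the posterior mean shrinks toward $0$'' is only a pointwise statement. You need a bound uniform over $z\in V^c$, because $\psi_t$ changes with $t$.
\end{itemize}

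\textbf{How to close it.} Your own fallback idea fixes this, and more cleanly than the paper, if you apply it to all of $\calZ$ at once.
\begin{itemize}
\item Let $m_z=\max_{[0,1-c]}g_z\ge g_z(0)=0$, and fix $\eta>0$.
\item Redo the numerator and denominator estimates from the proof of Lemma~\ref{lem:posterior_concentration} with the set $\{g_z<m_z-\eta\}$ in place of $\{|\lambda-\lambda^\star_c(z)|>\rho\}$. They use only the $(1/c)$-Lipschitz property of $g_z$, the prior density bounds, and $\Delta_t\to 0$. They give $\sup_{z\in\calZ}\pi_{t-1}(z;\{g_z<m_z-\eta\})\to 0$ almost surely, with no curvature needed.
\item By concavity, $\{g_z\ge m_z-\eta\}$ is an interval. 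A probability measure on $[0,1-c]$ putting all but mass $\beta$ on an interval has its mean within $(1-c)\beta$ of that interval.
\item Hence $g_z(\lambda_t^{\up}(z))\ge m_z-\eta-o(1)$ uniformly in $z\in\calZ$.
\item Combined with $m_z\ge 0$ everywhere and $m_z\ge\epsilon_1$ on $V$, this gives both Claims and removes the need for the $B_1/B_2/B_3$ split.
\end{itemize}

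\textbf{Minor point.} The function class is not Lipschitz in $z$ near $a$: $\partial_z[(z-x)_+/(z-a)]$ can be as large as $1/(z-a)$. Justify the Glivenko--Cantelli property instead via the VC-subgraph argument of Lemma~\ref{lem:uniformGC}, or by bracketing using monotonicity of $z\mapsto u_z^{[k]}(x)$.
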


\subsection{Proof of Theorem~\ref{thm:asymptotic_power_one_for_higher_order}}\label{seq:proofs_higher_order_SD}
The argument follows the structure of the proof of
Theorem~\ref{thm:asymptotic_power_one}, with two additional ingredients that we spell out
in the form of two lemmas (Lemmas~\ref{lem:uniformGC}
and~\ref{lem:posterior_concentration} below). All almost sure statements in the proof are with respect to the measure $\Qsymb$. 

Fix $k\geq 2$ and $c\in(0,1/2)$. For each
$z\in\calZ$, let
\begin{equation*}
    D(z) = u_z^{[k]}(Y) - u_z^{[k]}(X)\in[-1,1]\,,
\end{equation*}
and 
\begin{equation*}
    g_z(\lambda)=\Ex{\Qsymb}{\log(1 + \lambda D(z))}\,,
    \qquad \lambda\in[0, 1-c].
\end{equation*}
The interval $[0,1-c]$ is chosen so that $1 + \lambda D(z)\in[c,\,2-c]$. Hence, $g_z$ is finite and thus infinitely differentiable by dominated convergence. Whenever $D(z)$ is not identically zero, the map $\lambda\mapsto g_z(\lambda)$ is strictly concave on $[0,1-c]$. In this case, let
\begin{equation*}
    \fbet^\star_c(z) \;=\; \argmax_{\lambda\in[0,1-c]} g_z(\lambda)\in[0,1-c]
\end{equation*}
denote the constrained maximizer, and let $\lambda^\star(z)=
\argmax_{\lambda\in[0,1]}g_z(\lambda)\in[0,1]$ denote the unconstrained
maximizer. By concavity, $\fbet^\star_c(z)=\min\{\lambda^\star(z),\,1-c\}$. Moreover, let
\begin{equation*}
    W_t(z;\lambda) \;=\; \prod_{\ell=1}^{t}\bigl(1+\lambda(u_z^{[k]}(Y_\ell)-u_z^{[k]}(X_\ell))\bigr),
    \qquad \hat g_t(z,\lambda) \;=\; \frac{1}{t}\log W_t(z;\lambda),
\end{equation*}
and denote the clipped UP posterior at time $t$ by
$\pi_{t-1}(z;\mathrm{d}\lambda)\propto W_{t-1}(z;\lambda)\,\pi(\mathrm{d}\lambda)$,
restricted to $[0,1-c]$, so that
$\fbet_t^\up(z)=\int_0^{1-c}\lambda\,\pi_{t-1}(z;\mathrm{d}\lambda)$.

\begin{lemma}[Uniform Glivenko--Cantelli]\label{lem:uniformGC}
Under Assumptions~\ref{assn:finite_lower_bound} and~\ref{ass:prior}, almost surely,
\begin{equation*}
    \sup_{z\in\calZ,\,\lambda\in[0,1-c]}
    \left|\hat g_t(z,\lambda)-g_z(\lambda)\right|
    \to 0,\quad \textrm{for }t\to \infty\,.
\end{equation*}
\end{lemma}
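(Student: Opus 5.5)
We have to show that the function class $\{(x,y)\mapsto \log(1+\lambda[u_z^{[k]}(y)-u_z^{[k]}(x)]) : z\in\calZ,\ \lambda\in[0,1-c]\}$ is $\Qsymb$-Glivenko--Cantelli. By definition, $\hat g_t(z,\lambda)$ is the empirical mean of this function over the i.i.d.\ pairs $(X_\ell,Y_\ell)_{\ell\le t}$, and $g_z(\lambda)$ is its population mean. The approach is a finite-bracketing argument: build a finite grid in $(z,\lambda)$, apply the ordinary strong law of large numbers at the grid points, and control the discrepancy between neighbouring parameters by a deterministic modulus of continuity.

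\textbf{Step 1 (Lipschitz in $\lambda$).} Since $D(z)\in[-1,1]$ and $1+\lambda D(z)\in[c,2-c]$, the map $\lambda\mapsto\log(1+\lambda D)$ has derivative bounded by $1/c$. The dependence on $\lambda$ is therefore uniformly Lipschitz, and it suffices to work on a finite grid $\Lambda_m\subset[0,1-c]$ with mesh at most $1/m$. The same bound holds for $\hat g_t$ and for $g_z$.

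\textbf{Step 2 (control in $z$).} Use the monotone structure of $z\mapsto u_z^{[k]}(x)=-\big[(z-x)_+/(z-a)\big]^{k-1}$. For fixed $x\ge a$, the ratio $(z-x)_+/(z-a)=(1-(x-a)/(z-a))_+$ is nondecreasing in $z$ and takes values in $[0,1]$. Reparametrize by $s=1/(z-a)\in(0,\infty)$, so the ratio becomes $r_s(x)=(1-s(x-a))_+$.

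This map is not uniformly equicontinuous in $s$, because $x$ is unbounded. Instead, use monotonicity to bracket. For $s\le s'$ we have $r_{s'}\le r_s$, and $u$ is a monotone function of $r$. Then $D=u(Y)-u(X)$ is a difference of two monotone-in-$s$ terms, each bounded in $[-1,0]$. For the lower and upper brackets, take $s$ on a grid $s_0<s_1<\dots<s_M$ chosen so that the marginal expectations $\Esymb_{\Qsymb}[u_{s}(X)]$ and $\Esymb_{\Qsymb}[u_{s}(Y)]$ vary by at most $\eta$ between consecutive grid points. This is possible because these expectations are monotone and bounded in $s$. At the endpoints, $s\to0$ ($z\to\infty$) gives $r\to1$, and $s\to\infty$ ($z\downarrow a$) gives $r\to\mathbf 1\{x=a\}$. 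Monotone convergence handles the limits, and the convention $u_a\equiv0$ adds at most one extra grid point.

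Because the log is applied to the difference, bracket $\log(1+\lambda D)$ between $\log(1+\lambda[\underline u(Y)-\overline u(X)])$ and $\log(1+\lambda[\overline u(Y)-\underline u(X)])$, using the $1/c$-Lipschitz property of the log in its argument. The bracket width in $L^1(\Qsymb)$ is then at most $(2/c)\eta$, summing the two marginal errors. Each bracket endpoint is a fixed bounded function, so the SLLN applies to it.

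\textbf{Step 3 (assemble).} Fix $\eta>0$ and form the finite family of bracket functions indexed by $\Lambda_m\times\{s_j\}$. By the SLLN, almost surely all their empirical means converge. The standard bracketing argument (as in the Glivenko--Cantelli proof for CDFs) then gives
\[
\limsup_t\sup_{z,\lambda}|\hat g_t-g_z|\le C(\eta/c+1/(cm)).
\]
Taking countable sequences $\eta\downarrow0$ and $m\to\infty$ completes the proof. Assumption~\ref{ass:prior} is not actually needed here beyond fixing $c$.

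\textbf{Main obstacle.} The delicate step is Step 2. Monotonicity holds for each marginal term separately but not for $D(z)$ itself, and the parameter space in $z$ is unbounded. Splitting the bracket into $X$- and $Y$-parts, with grid points chosen according to both marginal laws, is the key to getting $L^1$-small brackets.
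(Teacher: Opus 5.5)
Your proof is correct, and it takes a different route from the paper's. The paper argues abstractly. It asserts that the truncated-power family is VC-subgraph, and invokes permanence of that property under positive parts, monotone composition, and Lipschitz composition with $\log(1+\cdot)$ on $[c,2-c]$. From this it concludes that the whole class $\calG_c$ is VC-subgraph with bounded envelope, and it then cites a uniform law of large numbers from empirical-process theory. You instead verify finite $L^1(\Qsymb)$-bracketing numbers by hand. The $1/c$-Lipschitz bound handles $\lambda$. The reparametrization $s=1/(z-a)$ makes $u_z^{[k]}(x)=-(1-s(x-a))_+^{k-1}$ pointwise nondecreasing in $s$, so you can bracket the $X$- and $Y$-terms separately and use that $\log(1+\lambda\,\cdot)$ is increasing for $\lambda\ge 0$. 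After that, only the ordinary SLLN on finitely many bounded functions is needed.

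One justification should be tightened. A monotone bounded function of $s$ need not admit a finite grid with increments at most $\eta$, because it may jump. You should note that $s\mapsto \Esymb_{\Qsymb}[u_s(X)]$ is in fact continuous on $(0,\infty)$ for $k\ge 2$. This follows by dominated convergence, since $s\mapsto(1-s(x-a))_+^{k-1}$ is continuous and bounded for every $x$. Alternatively, use left-limit brackets as in the classical Glivenko--Cantelli proof. Together with your end brackets $-1$ and $-\mathbf{1}\{x=a\}$ and the separately handled point $z=a$, this gives the finite grid.

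As for trade-offs, the paper's route is shorter and needs no ordering of the generators, so it adapts to any generator family of controlled combinatorial complexity. However, its permanence properties must be applied with some care here, since the class involves differences $u_z(y)-u_z(x)$ with a shared index and a product with $\lambda$. Your route is elementary and self-contained, and it makes the role of the clipping constant $c$ explicit. It extends readily to any generator family monotone in a one-dimensional index, such as the FSD indicators, the increasing-convex generators, and the exponential utilities of the Laplace transform order. It does not extend to multi-parameter families lacking such structure. You are also right that Assumption~\ref{ass:prior} plays no role in this lemma.
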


\begin{proof}
Consider the function class
\begin{equation*}
    \calG_c \;=\; \bigl\{(x,y)\mapsto \log\bigl(1+\lambda[u_z^{[k]}(y)-u_z^{[k]}(x)]\bigr):\,z\in\calZ,\,\lambda\in[0,1-c]\bigr\}.
\end{equation*}
The truncated-power family $\{x\mapsto(z-x)_+^{k-1}:z\in\calZ\}$ is
VC-subgraph, by a polynomial-class argument applied to the affine
family $\{z-x:z\in\calZ\}$ together with the preservation of
VC-subgraph under pointwise maxima with $0$ and monotone composition
\citep[Lemmas 2.6.15 and 2.6.18]{vanderVaart_empirical_processes}. The same properties together with a Lipschitz composition with $\log(1+\cdot)$ on the bounded range $[c,2-c]$ imply that $\calG_c$ is itself VC-subgraph with a uniformly bounded envelope $\max(|\log c|,|\log(2-c)|)$. By the
uniform LLN for VC-subgraph classes with integrable envelope
\citep[Theorem 2.4.1]{vanderVaart_empirical_processes}, $\calG_c$ is Glivenko--Cantelli, which is the claim.
\end{proof}

\begin{lemma}[Uniform posterior concentration]\label{lem:posterior_concentration}
Let Assumptions~\ref{assn:finite_lower_bound} and~\ref{ass:prior} hold. Let $A\subseteq\calZ$ be any
subset on which there exists $c^\star>0$ such that
\begin{equation}\label{eq:uniform_curvature}
    \inf_{z\in A,\;\lambda\in[0,1-c]}\bigl(-\partial_\lambda^2 g_z(\lambda)\bigr) \;\ge\; c^\star.
\end{equation}
Then, almost surely
\begin{equation*}
    \sup_{z\in A}\left|\fbet_t^\up(z)-\fbet^\star_c(z)\right|
   \to0,\quad \text{for }t\to \infty\,.
\end{equation*}
\end{lemma}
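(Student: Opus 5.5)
The plan is a Laplace-type concentration argument for the posterior $\pi_{t-1}(z;\diff\lambda)\propto \exp\{(t-1)\hat g_{t-1}(z,\lambda)\}\,\pi(\diff\lambda)$ on $[0,1-c]$. It combines the uniform convergence of Lemma~\ref{lem:uniformGC} with the uniform strong concavity \eqref{eq:uniform_curvature}. Fix $\eta>0$. By \eqref{eq:uniform_curvature}, each $g_z$ with $z\in A$ is $c^\star$-strongly concave on $[0,1-c]$, and its constrained maximizer is $\fbet^\star_c(z)$. The first-order optimality condition on the interval holds even when the maximizer is at the boundary $0$ or $1-c$, since then the derivative points inward. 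It therefore gives the quadratic gap
\begin{equation*}
g_z(\fbet^\star_c(z)) - g_z(\lambda) \;\ge\; \tfrac{c^\star}{2}\,\bigl(\lambda-\fbet^\star_c(z)\bigr)^2,\qquad \lambda\in[0,1-c],\; z\in A .
\end{equation*}
Hence, outside the window $B_\eta(z)=\{\lambda:|\lambda-\fbet^\star_c(z)|\le\eta\}$, the population objective lies at least $c^\star\eta^2/2$ below its maximum, uniformly in $z\in A$.

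Next, I will use Lemma~\ref{lem:uniformGC}. On the almost sure event where $\Delta_t:=\sup_{z,\lambda}|\hat g_t(z,\lambda)-g_z(\lambda)|\to0$, choose $t$ large enough that $\Delta_{t-1}\le c^\star\eta^2/16$. For the numerator mass outside the window, $\int_{B_\eta(z)^c}W_{t-1}\,\diff\pi$ is at most $\overline\pi\,\exp\{(t-1)(g_z(\fbet^\star_c)-c^\star\eta^2/2+\Delta_{t-1})\}$. For the lower bound on the normalizer, I restrict to the subinterval of $[0,1-c]$ within distance $\eta/2$ of $\fbet^\star_c(z)$. This subinterval has length at least $\eta/2$, even at the boundary, as long as $\eta<1-c$. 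On it I need $g_z(\lambda)\ge g_z(\fbet^\star_c)-L\eta/2$, where $L$ is a uniform Lipschitz constant of $\lambda\mapsto g_z(\lambda)$. This constant exists because $|\partial_\lambda \log(1+\lambda D)|\le 1/c$, so $L\le 1/c$. That Lipschitz bound is too crude against the quadratic gap, so I will instead use a window of radius $r=\min\{\eta/2,\;c\,c^\star\eta^2/8\}$. The normalizer is then at least $\underline\pi\, r\,\exp\{(t-1)(g_z(\fbet^\star_c)-c^\star\eta^2/8-\Delta_{t-1})\}$. Dividing the two bounds gives
\begin{equation*}
\pi_{t-1}\bigl(z;B_\eta(z)^c\bigr)\;\le\;\frac{\overline\pi}{\underline\pi\, r}\,\exp\{-(t-1)\,c^\star\eta^2/4\},
\end{equation*}
uniformly over $z\in A$, so this mass tends to zero.

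Finally, the posterior mean satisfies $|\fbet_t^\up(z)-\fbet^\star_c(z)|\le \eta + (1-c)\,\pi_{t-1}(z;B_\eta(z)^c)$, since every $\lambda$ lies in $[0,1-c]$. Taking the supremum over $A$, then $\limsup_t$, and then letting $\eta\downarrow0$ along a countable sequence yields the claim almost surely.

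I expect the main obstacle to be the uniform lower bound on the normalizing constant. The difficulty is that the prior-mass and exponent losses near the maximizer must be beaten by the quadratic gap uniformly in $z$, including when $\fbet^\star_c(z)$ sits at an endpoint. Using the Lipschitz constant $1/c$ together with the $z$-independent radius $r$ and Assumption~\ref{ass:prior} resolves this. The remaining steps are routine, given that Lemma~\ref{lem:uniformGC} holds simultaneously for all $(z,\lambda)$.
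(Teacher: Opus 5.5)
Your proposal is correct and follows the paper's proof essentially step for step. Both arguments use the same three ingredients: the quadratic well derived from the uniform curvature bound; an upper bound on the posterior tail mass combined with a lower bound on the normalizer, obtained from the $(1/c)$-Lipschitz property on a window of radius $r\propto\eta^2$ (the paper takes $r=c\,c^\star\rho^2/4$); and the final estimate $|\lambda_t^{\mathsf{UP}}(z)-\lambda^\star_c(z)|\le \eta+(1-c)\,\pi_{t-1}(z;B_\eta(z)^c)$ with $\eta\downarrow 0$. Your extra cap $r\le\eta/2$ is a harmless refinement: it makes explicit the requirement $r<1-c$ for the window length bound, which the paper leaves implicit.
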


\begin{proof}
Under \eqref{eq:uniform_curvature}, we have
\begin{equation}\label{eq:quadratic_well}
    g_z(\fbet^\star_c(z))-g_z(\lambda) \;\ge\; \tfrac{c^\star}{2}\,(\lambda-\fbet^\star_c(z))^2,
    \qquad \forall z\in A,\;\lambda\in[0,1-c]\,.
\end{equation}
by Taylor's theorem with integral remainder for $h(s)=g_z(\fbet^\star_c(z)+s(\lambda-\fbet^\star_c(z)))$, $s\in[0,1]$, using $h'(0)\le 0$ and $h''(s)\le -c^\star(\lambda-\fbet^\star_c(z))^2$ for all $s\in[0,1]$.

Fix $\rho>0$. We bound the posterior tail
$\pi_{t-1}(z;\{|\lambda-\fbet^\star_c(z)|>\rho\})$ uniformly in $z\in A$.
Let $M_c=\sup_{z\in\calZ,\lambda\in[0,1-c]}|\partial_\lambda^2 g_z(\lambda)|\le 1/c^2$, and $\Delta_t=\sup_{z\in\calZ,\,\lambda\in[0,1-c]}|\hat g_t(z,\lambda)-g_z(\lambda)|$. 

For
$|\lambda-\fbet^\star_c(z)|>\rho$, \eqref{eq:quadratic_well} gives
$g_z(\lambda)\le g_z(\fbet^\star_c(z))-c^\star\rho^2/2$. Combined with
$\hat g_{t-1}(z,\lambda)\le g_z(\lambda)+\Delta_{t-1}$ and
Assumption~\ref{ass:prior},
\begin{equation*}
    \int_{|\lambda-\fbet^\star_c(z)|>\rho} e^{(t-1)\hat g_{t-1}(z,\lambda)}\pi(\mathrm{d}\lambda)
    \;\le\; \overline\pi\cdot(1-c)\cdot
    \exp\left((t-1)[g_z(\fbet^\star_c(z))-c^\star\rho^2/2+\Delta_{t-1}]\right)\,.
\end{equation*}

The map $\lambda\mapsto g_z(\lambda)$
is $(1/c)$-Lipschitz on $[0,1-c]$ uniformly in $z$, because
$|\partial_\lambda g_z(\lambda)|\le 1/c$. Hence, for any $0<r<1-c$ and $\lambda\in [\fbet^\star_c(z)-r,\fbet^\star_c(z)+r]\cap[0,1-c]$,
\begin{equation*}
    g_z(\lambda) \;\ge\; g_z(\fbet^\star_c(z)) - r/c.
\end{equation*}
The
intersection of the $r$-ball with $[0,1-c]$ has Lebesgue measure at
least $r$. Thus
\begin{equation*}
    \int_0^{1-c} e^{(t-1)\hat g_{t-1}(z,\lambda)}\pi(\mathrm{d}\lambda)
   \geq \underline\pi\cdot r\cdot
    \exp\left((t-1)[g_z(\fbet^\star_c(z))-r/c-\Delta_{t-1}]\right).
\end{equation*}

Combining these two bounds (and canceling
$e^{(t-1)g_z(\fbet^\star_c(z))}$) yields
\begin{equation*}
    \pi_{t-1}\left(z;\{|\lambda-\fbet^\star_c(z)|>\rho\}\right)
   \leq\frac{\overline\pi(1-c)}{\underline\pi\, r}\,
    \exp\left((t-1)[-c^\star\rho^2/2 + r/c + 2\Delta_{t-1}]\right)\,.
\end{equation*}
Let $r = cc^\star\rho^2/4$. Then, the
exponent equals $(t-1)[-c^\star\rho^2/4+2\Delta_{t-1}]$. As $\Delta_t\to0$, almost surely, by Lemma~\ref{lem:uniformGC}, we have
$\Delta_{t-1}<c^\star\rho^2/16$ almost surely for all $t$ large. Hence,
\begin{equation}\label{eq:refine}
    \sup_{z\in A}\pi_{t-1}\bigl(z;\{|\lambda-\fbet^\star_c(z)|>\rho\}\bigr)
    \;\le\;\frac{4\overline\pi(1-c)}{\underline\pi\, c\, c^\star\rho^2}\,
    e^{-(t-1)\,c^\star\rho^2/8}\;\to\;0\,,
\end{equation}
almost surely. Finally, since, for any $z\in A$,
\begin{equation*}
    \bigl|\fbet_t^\up(z)-\fbet^\star_c(z)\bigr|
    \;\le\; \int_0^{1-c}|\lambda-\fbet^\star_c(z)|\,\pi_{t-1}(z;\mathrm{d}\lambda)
    \;\le\; \rho + (1-c)\,\pi_{t-1}(z;\{|\lambda-\fbet^\star_c(z)|>\rho\}),
\end{equation*}
it follows by~\eqref{eq:refine}, that $\limsup_{t\to\infty} \sup_{z\in A}|\fbet_t^\up(z)-\fbet^\star_c(z)\bigr| \leq \rho.$ Taking $\rho\downarrow 0$ yields the claim.
\end{proof}

Let
\begin{equation*}
    V=\left\{z:\Ex{\Qsymb}{{u}_z^{[k]}(Y)-{u}_z^{[k]}(X)}> \epsilon\right\}\,.
\end{equation*}
Then, for any $z\in V$ and $\lambda\in[0,1-c]$,
\begin{equation*}
    -\partial_\lambda^2 g_z(\lambda)
    \;=\;\Ex{\Qsymb}{\frac{D(z)^2}{(1+\lambda D(z))^2}}
    \;\ge\;\frac{\Ex{\Qsymb}{D(z)^2}}{(2-c)^2}
    \;\ge\;\frac{\Ex{\Qsymb}{D(z)}^2}{(2-c)^2}
    \;\ge\;\frac{\varepsilon^2}{(2-c)^2}\,.
\end{equation*}
Hence,~\eqref{eq:uniform_curvature} holds on $A=V$ and thus
Lemma~\ref{lem:posterior_concentration} applies:
\begin{equation}\label{eq:uniform_conv_V}
    \sup_{z\in V}\left|\fbet_t^\up(z)-\fbet^\star_c(z)\right|\to0,
    \qquad\text{almost surely as }t\to\infty\,.
\end{equation}

\textbf{Claim 1:} There exist $\varepsilon'>0$
and $N_1\ge 1$ such that \begin{equation*}
    g_z(\fbet_t^\up(z))>\varepsilon'/2, \quad \forall t\geq N_1, \forall z\in V\,.
\end{equation*}

\emph{Proof of claim 1:} We first derive a uniform positive lower
bound on $g_z(\fbet^\star_c(z))$ over $z\in V$:
For $\lambda^\star(z)\le 1-c$, we have,
by Taylor expanding around $0$ using $|g_z''|\le 1/c^2$,
\begin{equation}\label{eq:taylor_lower}
    g_z(\lambda) \;\ge\; \lambda\,\Ex{\Qsymb}{D(z)} - \frac{\lambda^2}{2c^2},
    \qquad \lambda\in[0,1-c].
\end{equation}
Wlog assume that $\varepsilon>0$ is chosen small enough such that
$\lambda_0=\varepsilon c^2 \le 1-c$. Then, by~\eqref{eq:taylor_lower}
\begin{equation}\label{eq:g_star_bound_interior}
    g_z(\lambda^\star(z)) \geq g_z(\lambda_0)
    \;\ge\; \varepsilon^2 c^2 - \frac{\varepsilon^2 c^2}{2}
    \;=\; \frac{\varepsilon^2 c^2}{2}.
\end{equation}
Otherwise, assume
 $\lambda^\star(z)> 1-c$. Then, since $g_z$ is
non-decreasing on $[0,\lambda^\star(z)]$, we obtain
\begin{equation}\label{eq:g_star_bound_boundary}
    g_z(1-c) \;\ge\; g_z(\lambda_0) \;\ge\; \frac{\varepsilon^2 c^2}{2}.
\end{equation}
\eqref{eq:g_star_bound_interior} and \eqref{eq:g_star_bound_boundary} together give the uniform bound
\begin{equation*}
    g_z(\fbet^\star_c(z)) \;\ge\; \frac{\varepsilon^2 c^2}{2}\;=:\;\varepsilon'
    \;>\;0,\qquad\forall z\in V.
\end{equation*}

The claim follows by the continuous mapping theorem and uniform convergence
\eqref{eq:uniform_conv_V}.

\textbf{Claim 2:} There exists $N_2\ge 1$
such that \begin{equation*}
    g_z(\fbet_t^\up(z))>-\varepsilon'\delta/4, \quad \forall t\ge N_2, \forall z\in V^c\,.
\end{equation*}

\emph{Proof of claim 2:} For $\varepsilon_3\in(0,1)$ small, partition $V^c$ into three sets:
\begin{align*}
    B_1 &= \bigl\{z\in V^c:\, D(z)=0\ \Qsymb\text{-a.s.}\bigr\},\\
    B_2 &= \bigl\{z\in V^c\setminus B_1:\,\Ex{\Qsymb}{D(z)^2}>\varepsilon_3\bigr\},\\
    B_3 &= V^c\setminus(B_1\cup B_2) \;=\;
    \bigl\{z\in V^c:\,0<\Ex{\Qsymb}{D(z)^2}\le\varepsilon_3\bigr\}.
\end{align*}

Then, $g_z\equiv 0$ for $z\in B_1$ and hence $g_z(\fbet_t^\up(z))=0$
for all $t$, which trivially satisfies the required bound.

Lemma~\ref{lem:posterior_concentration}
applied to $A=B_2$ (using $\inf_{z\in B_2,\,\lambda\in[0,1-c]}(-\partial_\lambda^2 g_z(\lambda))
\ge\varepsilon_3/(2-c)^2$), yields
\begin{equation*}
    \sup_{z\in B_2}\bigl|\fbet_t^\up(z)-\fbet^\star_c(z)\bigr|\;\to\;0
    \qquad\Qsymb\text{-a.s.}
\end{equation*}
Since $g_z(\fbet^\star_c(z))\ge g_z(0)=0$ for every $z$, and $g_z$ is uniformly continuous in $\lambda$, we get
$\liminf_{t}\inf_{z\in B_2}g_z(\fbet_t^\up(z))\ge 0$ $\Qsymb$-a.s. Hence,
there is $N_2$ such that $g_z(\fbet_t^\up(z))>-\varepsilon'\delta/4$
uniformly on $B_2$ for $t\ge N_2$.

Finally, by Jensen and $(1/c)$-Lipschitz continuity of $\log(1+\cdot)$ on $[-(1-c),1-c]$,
\begin{equation*}
    |g_z(\lambda)|
    \;\le\;\Ex{\Qsymb}{|\log(1+\lambda D(z))|}
    \;\le\;\frac{\lambda}{c}\,\Ex{\Qsymb}{|D(z)|}
    \;\le\;\frac{\lambda}{c}\,\sqrt{\Ex{\Qsymb}{D(z)^2}}
    \;\le\;\frac{1-c}{c}\,\sqrt{\varepsilon_3},
\end{equation*}
for all $z\in B_3$ and $\lambda\in[0,1-c]$. Choosing $\varepsilon_3 = [{(c\varepsilon'\delta)}/{(4(1-c))}]^2$, gives $|g_z(\fbet_t^\up(z))|\le\varepsilon'\delta/4$
for all $t$ and all $z\in B_3$, which concludes the proof of claim 2.

To conclude the proof, let $N=\max\{N_0,N_1,N_2\}$ and
$Z_t=\int\log S_t^{[k]}(\fbet_t^\up(z),z)\,\mathrm{d}\psi_t(z)$. By Fubini
and Claims 1 and 2, for all $t\ge N$,
\begin{align*}
    \mathbb{E}_\Qsymb(Z_t\mid\calF_{t-1})
    &=\int_V \Ex{\Qsymb}{\log S_t^{[k]}(\fbet_t^\up(z),z)\mid\calF_{t-1}}\mathrm{d}\psi_t(z) \\
    &\quad+\int_{V^c}\Ex{\Qsymb}{\log S_t^{[k]}(\fbet_t^\up(z),z)\mid\calF_{t-1}}\mathrm{d}\psi_t(z) \\
    &\ge\;\tfrac{\varepsilon'}{2}\,\psi_t(V) \;-\; \tfrac{\varepsilon'\delta}{4}\,\psi_t(V^c) \\
    &\ge\;\tfrac{\varepsilon'\delta}{2}\;-\;\tfrac{\varepsilon'\delta}{4}
    \;=\;\tfrac{\varepsilon'\delta}{4}\;>\;0\,.
\end{align*}
As $Z_t$ is bounded uniformly in $t$, the martingale SLLN gives
$t^{-1}\sum_{\ell=N}^t[Z_\ell-\mathbb{E}_\Qsymb(Z_\ell\mid\calF_{\ell-1})]\to0$
almost surely. Since
$\log(E_t)/t\ge t^{-1}\sum_{\ell=1}^t Z_\ell$ by Jensen, we conclude that
$\liminf_{t\to\infty}\log(E_t)/t\ge \varepsilon'\delta/4>0$ eventually, and hence
$E_t\to\infty$ almost surely.

\section{Affirming SD by testing the non-SD null hypothesis}\label{app:testing_non_SD_null}

\subsection{Problem formulation and an impossibility result}\label{sec:impossibility_non_sd_null}

The \emph{non-SD null} (in first order) at \eqref{eq:testing_non_sd} can be written as the union hypothesis
\begin{equation*}\label{eq:def_non_sd_null}
    \calH_0' = \{\Psymb \in \frakB \mid \Psymb_X \notsd \Psymb_Y \} =\{\Psymb \in \frakB \mid \exists z \in \calZ: \Psymb(X \leq z) < \Psymb(Y \leq z)\}=\bigcup_{z\in \calZ}  \calH_0'(z)\,,
\end{equation*}
for $\calH_0'(z) = \{\Psymb \mid \Psymb(X \leq z) < \Psymb(Y \leq z) \}$.
Moreover, $\calH_0'(z) \subseteq \calH_0(z)$, for all $z$, where the only difference between $\calH_0'(z)$ and $\calH_0(z)$ at \eqref{eq:FSD_null_at_thresholds} is that the former imposes a strict inequality whereas the latter allows for equality. That is, $\calH_0'(z)$ and $\calH_0(z)$ are equivalent from a statistical perspective, and testing for $\calH_0'$ is equivalent to testing for $\bigcup_{z\in \calZ} \calH_0(z)$. 

Intuitively, $\calH_0'$ is a very large and thus intractable null. 
Moreover, and in contrast to Lemma \ref{lem:FSD-null-is-convex}, the non-SD null $\calH_0'$ is \emph{not} convex, as is easily seen by the following example. 
\begin{example}
    Let $\calZ= \{z_1,z_2,z_3\}$ for $z_1<z_2<z_3$, and let $\Psymb^1,\Psymb^2 \in \frakB$ with
    \begin{equation*}
        \Psymb^1_X = \delta_{z_2}, \quad  \Psymb^2_X = \frac{3}{5}\delta_{z_1}+\frac{2}{5}\delta_{z_3}, \quad  \Psymb^1_Y= \Psymb^2_Y= \frac{1}{4}\delta_{z_1}+\frac{1}{2}\delta_{z_2}+\frac{1}{4}\delta_{z_3}\,.
    \end{equation*}
    Then, $\Psymb^i \in \calH_0'$, for $i=1,2$, but $(\Psymb^1+\Psymb^2)/2\notin \calH_0'$.
\end{example}
Combined with \citet[][Theorem 6.2]{zhang2024existence}, the non-convexity of $\calH_0'$ implies that we cannot expect to find a nontrivial e-variable for $\calH_0'$ as stated in Corollary~\ref{cor:no_e_variable_for_non_sd_null}.

\subsection{Testing the non-SD null under finite support}\label{sec:testing_union_directly}

First, we propose an e-value–based method for testing the non-SD null for finite support.
\begin{assumption}[Finite support]\label{assn:finite_support}
    For $m\geq 2$, let $\calZ=\{z_i\}_{i=1}^m$ with $z_1<z_2< \ldots < z_m$.
\end{assumption}
As $F(z_m)=1$ for any distribution on $\calZ$, we have $\calH_0'(z_m)=\emptyset$ and $\calH_0' = \cup_{i=1}^{m-1} \calH_0'(z_i)$. That is, it is sufficient to consider the thresholds $z_1<\ldots <z_{m-1}$ in the sequel.
Since we are facing an intersection-union test, a natural approach is to base our test decision on a minimum test statistic~\cite[e.g.,][]{kaur1994testing,davidson2013testing}. 
It is routine to check that, for $\calH_0'(z_i)$-e-variables $E_i$, the minimum $\min_{i\leq m-1} E_i$ is an e-variable for $\calH_0'$. 
This mimics the standard procedure to take the maximum of the individual p-values for an intersection-union test. 
Although statistically valid, the minimum approach can be very conservative in practice, as it is often possible to design more powerful e-variables by incorporating dependencies between the individual hypotheses. 
Nevertheless, using the growth-rate optimal e-variables for the individual hypotheses $\calH_0'(z_i)$ from Section \ref{subsec:GRO_fixed_threshold}, we can show that the min-approach already achieves the best possible asymptotic growth rate.

As argued in Section~\ref{sec:testing_non_SD_null}, making the testing problem feasible requires a minimal separation between the distributions of $X$ and $Y$. 
In the finite case, the ``restriction'' of the null is not necessary as we can test $\calH_0' = \calH_0'(\tilde\calZ)$ with $\tilde\calZ = \{z_1, \dotsc, z_{m-1}\}$. 
For $0<\varepsilon<1$, we let
\begin{equation*}
    \calQ(\varepsilon) = \incurly{\Qsymb \in \frakB \mid \Qsymb(Y \leq z_i) +\varepsilon \leq \Qsymb(X \leq z_i), \; \forall i=1, \ldots, m-1}.
\end{equation*}

\begin{proposition}
[Validity and asymptotic optimality of the minimum e-process under finite support]
\label{ppn:min_approach}
Suppose that Assumption \ref{assn:finite_support} is satisfied, i.e., $X$ and $Y$ have finite support. 
\begin{enumerate}[(a)]
    \item Let $S_t(\lambda,z_i)=1+\lambda(\indicator{X_t\leq z_i}-\indicator{Y_t\leq z_i})$, and
    $(\lambda_t(z_i))_{t\in \N} \subseteq [0,1]$ predictable for all $i=1,\ldots, m-1$. Then, 
    \begin{equation}\label{eq:min_e_process}
        \e_t = \min_{i\leq m-1} \left(\prod_{\ell=1}^t S_\ell(\lambda_\ell(z_i),z_i)\right), \quad t\in \N, 
    \end{equation}
    is an e-process for $\calH_0'$. \label{item:min_eprocess}
    \item If, for any $i$, $\lambda_{t}(z_i)$ is chosen as the plug-in GRO bet~\eqref{eq:plugin_GRO_lambda}, then, for any $\Qsymb \in \calQ(\varepsilon)$,\label{item:min_eprocess_has_power_one}
    \begin{equation*}\label{eq:asymptotic_power_one_for_min_approach}
        \liminf_{t\to \infty} \e_t = \infty, \quad \text{$\Qsymb$-a.s.}
    \end{equation*}
    \item 
    For any $\Qsymb\in \calQ(\varepsilon)$ and $\epsilon>0$, there exists $c_0>0$ such that for any clipping constant $0<c<c_0$ in~\eqref{eq:plugin_GRO_lambda} and for any other e-process $(W_t)_{t\in \N}$ for $\calH_0'$, we have, \label{item:min_eprocess_is_asymptotically_log-optimal}
    \begin{equation*}
        \liminf_{t\to \infty} \frac{1}{t}\left(\log{\e_t}-\log{W_t}\right) \geq -\epsilon, \quad \Qsymb\textrm{-a.s.}
    \end{equation*}
\end{enumerate}
\end{proposition}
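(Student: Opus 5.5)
For part (a), I will argue per null distribution. Fix $\Psymb\in\calH_0'$. Then $\Psymb\in\calH_0'(z_j)\subseteq\calH_0(z_j)$ for some index $j\le m-1$. Since $\lambda_t(z_j)$ is predictable and takes values in $[0,1]$, Lemma~\ref{lem:building-block-e-variables} shows that $S_t(\lambda_t(z_j),z_j)$ are sequential e-variables for $\calH_0(z_j)$. Hence the running product $M^{(j)}_t=\prod_{\ell\le t}S_\ell(\lambda_\ell(z_j),z_j)$ is a test supermartingale under $\Psymb$. Moreover $\e_t\le M^{(j)}_t$ for every $t$. So under each $\Psymb\in\calH_0'$, the process $\e_t$ is dominated by a $\Psymb$-test supermartingale. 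By the equivalent characterization of e-processes recalled in the background, $(\e_t)$ is therefore an e-process for $\calH_0'$. The dominating supermartingale depends on $\Psymb$ only through the index $j$, which is allowed.

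For part (b), fix $\Qsymb\in\calQ(\varepsilon)$. Then $\Qsymb\in\calH_1(z_i)$ for every $i\le m-1$, with $p(z_i)-q(z_i)=\Qsymb_X(z_i)-\Qsymb_Y(z_i)\ge\varepsilon$. I will apply Theorem~\ref{thm:asymptotic_power_one} separately to each $i$, using the degenerate mixture $\psi_t=\delta_{z_i}$. Condition~\eqref{eq:cond_power_one_theorem} holds with $\delta=1/2$ and with $\varepsilon/2$ in place of $\epsilon$. The theorem then gives $\liminf_t t^{-1}\log M^{(i)}_t>0$ $\Qsymb$-a.s. Intersecting these finitely many almost-sure events gives $\liminf_t t^{-1}\log\e_t=\min_i\liminf_t t^{-1}\log M^{(i)}_t>0$, and hence $\e_t\to\infty$ a.s.

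For part (c), I will first sharpen (b) to an exact growth rate. Since $\hat\lambda^{\textsf{GRO}}_t(z_i)\to\lambda^\star(z_i)\land(1-c)$ a.s. and $\log S_t$ is bounded, the martingale SLLN (as in the proof of Theorem~\ref{thm:asymptotic_power_one}) gives
\[
t^{-1}\log M^{(i)}_t\to g_{z_i}\bigl(\lambda^\star(z_i)\land(1-c)\bigr).
\]
Hence $t^{-1}\log\e_t\to\min_i g_{z_i}(\lambda^\star(z_i)\land(1-c))$. As $c\downarrow0$, this limit tends to $r^\star:=\min_i g_{z_i}(\lambda^\star(z_i))=\min_i (p_i+q_i)\,\mathrm{kl}\bigl(\tfrac{p_i}{p_i+q_i},\tfrac12\bigr)$. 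This uses continuity of $g_{z_i}$ on $[0,1]$, including the case $\lambda^\star=1$, where $q_i=0$ and $g$ is continuous. Choosing $c_0$ so that the clipping loss is below $\epsilon$ for each of the finitely many $i$ reduces (c) to the following upper bound. For any e-process $W$ for $\calH_0'$,
\[
\limsup_t t^{-1}\log W_t\le r^\star\quad \Qsymb\text{-a.s.}
\]

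This upper bound is the main obstacle, and I plan to handle it as follows. Let $i^\star$ attain the minimum. Let $\Psymb^\star$ be the reverse information projection of $\Qsymb$ onto $\calH_0(z_{i^\star})$. Concretely, $\Psymb^\star$ keeps the law of $(\indicator{X\le z},\indicator{Y\le z})$ at $z_{i^\star}$ except that it replaces $p,q$ by their average $m$; it thereby equalizes the CDFs at $z_{i^\star}$ (cf. the remark on $m(z)$). The difficulty is that $\Psymb^\star$ lies only on the boundary of $\calH_0'(z_{i^\star})$, because that set is defined by a strict inequality. I therefore take $\Psymb^\eta\in\calH_0'$, a small perturbation of $\Psymb^\star$ shifting mass by $\eta$ so the strict inequality holds. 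This perturbation changes the KL divergence continuously.

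Under $\Psymb^\eta$, the process $W$ is dominated by a test supermartingale, so by Ville's inequality $\Psymb^\eta(\sup_t W_t<\infty)=1$. I will then invoke a standard change-of-measure argument (or the Shannon–McMillan/likelihood-ratio bound): for i.i.d. data, any nonnegative process whose supremum is finite under $\Psymb^\eta$ satisfies $\limsup_t t^{-1}\log W_t\le \mathrm{KL}(\Qsymb\|\Psymb^\eta)$ $\Qsymb$-a.s. Formally, $W_t\,d\Psymb^\eta_t/d\Qsymb_t$ is a nonnegative $\Qsymb$-supermartingale up to a null set, and $t^{-1}\log(d\Qsymb_t/d\Psymb^\eta_t)\to\mathrm{KL}$ a.s. Only the coarsened law at $z_{i^\star}$ needs to change, which requires choosing $\Psymb^\eta$ mutually absolutely continuous with $\Qsymb$ on the relevant coarsening, and the KL computes to $(p+q)\,\mathrm{kl}(\tfrac{p}{p+q},\tfrac12)+O(\eta)$. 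Letting $\eta\downarrow0$ yields the bound $r^\star$. If this pathwise KL bound proves delicate, I would instead compare expectations via $\Esymb_\Qsymb\log W_t\le t\,\mathrm{KL}$ and upgrade to almost-sure statements using the boundedness of the increments.
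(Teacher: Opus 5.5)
\textbf{Verdict.} Your proposal is correct in substance and takes a genuinely different route from the paper for the upper bound in (c); parts (a) and (b) follow the paper. The key change-of-measure step in (c) needs the repairs described in the second paragraph.

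\textbf{Comparison with the paper.} For (a), the paper bounds $\Esymb_{\Psymb}[\min_i E_\tau(z_i)]$ by the expectation at an index $j$ with $\Psymb\in\calH_0'(z_j)$, which is your domination by $M^{(j)}$. For (b), the paper re-derives $\hat\lambda^{\textsf{GRO}}_t(z_i)\to\lambda^\star(z_i)\land(1-c)$ and applies the SLLN. You instead call Theorem~\ref{thm:asymptotic_power_one} with $\psi_t=\delta_{z_i}$, which is equally valid. The real difference is the bound $\limsup_t t^{-1}\log W_t\le r^\star$ in (c). The paper does not prove it directly. It cites \citet[Theorem 2.1]{waudbysmith2025universal} for $\limsup_t t^{-1}\log W_t\le\sup_{S\in\mathcal{E}(\calH_0')}\Esymb_{\Qsymb}\log S$, then uses $\mathcal{E}(\calH_0')\subseteq\mathcal{E}(\calH_0'(z_i))$ and the single-threshold GRO value. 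You prove the bound from scratch via a likelihood-ratio change of measure against an explicit near-RIPr null element. Your route is self-contained, and your $\eta$-perturbation explicitly handles the strict inequality defining $\calH_0'(z)$, which the paper only covers by calling $\calH_0'(z)$ and $\calH_0(z)$ ``statistically equivalent.'' The paper's route is shorter and needs no explicit null element.

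\textbf{Repairs needed in (c).}

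\emph{The general lemma is false as stated.} You claim that any nonnegative process with finite supremum under $\Psymb^\eta$ satisfies $\limsup_t t^{-1}\log W_t\le \mathrm{KL}(\Qsymb\|\Psymb^\eta)$ $\Qsymb$-a.s. This fails because the two i.i.d.\ product measures are mutually singular. For example, take $W_t=e^{t^2}$ on the event that the empirical law of $(X,Y)^t$ is within $\delta$ of $\Qsymb_{(X,Y)}$, and $W_t=0$ otherwise. Likewise, $W_t\,\mathrm{d}\Psymb^\eta_t/\mathrm{d}\Qsymb_t$ need not be a $\Qsymb$-supermartingale for a general e-process.

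\emph{The fix.} Run the argument on the dominating $\Psymb^\eta$-test supermartingale $M\ge W$ instead. Let $\Lambda_t=\prod_{\ell\le t}\lambda(X_\ell,Y_\ell)$, where $\lambda$ is the density of the $\Qsymb$-absolutely continuous part of $\Psymb^\eta_{(X,Y)}$. Then $M_t\Lambda_t$ is a nonnegative $\Qsymb$-supermartingale, so $\sup_t M_t\Lambda_t<\infty$ $\Qsymb$-a.s. By the SLLN, $-t^{-1}\log\Lambda_t\to\mathrm{KL}(\Qsymb\|\Psymb^\eta)$. The inequality $W_t\le M_t$ holds $\Psymb^\eta$-a.s., and it transfers to $\Qsymb$-a.s.\ because $\Qsymb\ll\Psymb^\eta$ on each $\calF_t$.

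\emph{Absolute continuity in only one direction.} Mutual absolute continuity cannot be achieved when $q_{i^\star}=\Qsymb(Y\le z_{i^\star}<X)=0$, because every element of $\calH_0'(z_{i^\star})$ charges that set. However, only $\Qsymb\ll\Psymb^\eta$ is needed; the singular part of $\Psymb^\eta$ only helps the supermartingale inequality. So define $\Psymb^\eta$ on the full law of $(X,Y)$ by reweighting $\Qsymb$ cell by cell, and place the new mass at, e.g., $(X,Y)=(z_m,z_1)$.

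\emph{The fallback should be dropped.} The alternative via $\Esymb_{\Qsymb}\log W_t$ does not work. A general e-process has no bounded increments, and an expectation bound does not yield a pathwise one.
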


 \begin{proof}
Since $\calH_0'(z_i) \subset \calH_0(z_i)$, each \begin{equation*}
    E_t(z_i) = \prod_{\ell=1}^t S_{\ell}(\lambda_{\ell}(z_i),z_i), \quad t\in \N,
\end{equation*} 
is an e-process for $\calH_0'(z_i)$ by Theorem~\ref{thm:Test-supermartingale-for-the-FSD_null}. 
The anytime-validity of the minimum e-process~\eqref{eq:min_e_process} follows immediately by Jensen: for any $\Psymb \in \calH_0'$ and any stopping time $\tau$,
\begin{equation*}
    \Ex{\Psymb}{\e_\tau} = \Ex{\Psymb}{\min_{i \in \{1, \dotsc, m-1\}} E_\tau(z_i)} \leq \min_{i \in \{1, \dotsc, m-1\}} \Ex{\Psymb}{E_\tau(z_i)} \leq 1. 
\end{equation*}

For the second claim, fix any $\Qsymb\in \calQ(\varepsilon)$. Then, $\Qsymb \in (\calH_0'(z_i))^c$ for all $i$. In the proof of Theorem~\ref{thm:asymptotic_power_one}, it is shown that, for all $i=1, \ldots, m-1$, $\hat{\lambda}_t^\textsf{GRO}(z_i) \to \lambda^\star(z_i) \land (1-c)$ almost surely, and thus, by the SLLN
\begin{equation*}
    \frac{\log(E_t(z_i))}{t} \to g_{z_i}(\lambda^\star(z_i) \land (1-c))>0,
\end{equation*}
for $g_{z_i}$ and $\lambda^\star(z_i)$ defined as in the proof of Theorem \ref{thm:asymptotic_power_one}. As $m$ is finite, it follows by uniform convergence that, as $t\to\infty$,
\begin{equation}\label{eq:proof_prop_eq1}
        \frac{\log(\e_t)}{t}=\min_{i=1, \ldots, m-1}\frac{\log\left(E_{t}(z_i)\right)}{t}\to \min_{i=1, \ldots, m-1}\{g_{z_i}(\lambda^\star(z_i) \land (1-c))\} >0, \quad \Qsymb-\textrm{a.s.}
    \end{equation}   
Regarding the last claim, let $\Qsymb\in \calQ(\varepsilon)$, and let $(W_t)_{t\in \N}$ be any e-process for $\calH_0'$. Then, 
\begin{equation*}
    \limsup_{t\to \infty} \frac{1}{t}\log W_t \leq \sup_{S\in \mathcal{E}(\calH_0')}\Esymb_{\Qsymb}\log(S), \quad \Qsymb-\textrm{a.s.},
\end{equation*}
for $\mathcal{E}(\calH)$ denoting the family of all e-variables for some null $\calH\subseteq \frakB$, see e.g.\ \citet[Theorem 2.1]{waudbysmith2025universal}.
As $\calH_0'(z_i) \subseteq \calH_0'$, and thus $\mathcal{E}(\calH_0'(z_i))\supseteq\mathcal{E}(\calH_0')$, for all $i$, we have
\begin{equation*}
    \sup_{S\in \mathcal{E}(\calH_0')}\Esymb_{\Qsymb}\log(S) \leq \sup_{S\in \mathcal{E}(\calH_0'(z_i))}\Esymb_{\Qsymb}\log(S)= g_{z_i}(\lambda^\star(z_i)),
\end{equation*}
for all $i$, and thus 
\begin{equation}\label{eq:proof_prop_eq2}
    \limsup_{t\to \infty} \frac{1}{t}\log W_t \leq \min_{i=1, \ldots, m-1}\{g_{z_i}(\lambda^\star(z_i))\},  \quad \Qsymb-\textrm{a.s.}
\end{equation}
By continuity of $\lambda \mapsto g_{z_i}(\lambda)$ for all $i\leq m-1$, we may choose $c_0>0$, such that, \begin{equation*}
    \left| \min_{i=1, \ldots, m-1}\{g_{z_i}(\lambda^\star(z_i) \land (1-c))\} -\min_{i=1, \ldots, m-1}\{g_{z_i}(\lambda^\star(z_i))\} \right| < \epsilon\,,
\end{equation*}for all $0<c<c_0$. That is, by \eqref{eq:proof_prop_eq1}, the minimum process $(\e_t)_{t\in \N}$ approximately (up to $\epsilon$-difference) achieves  the upper bound on the asymptotic growth rate in~\eqref{eq:proof_prop_eq2}. 
\end{proof}

Part~\ref{item:min_eprocess} of Proposition~\ref{ppn:min_approach} shows that the minimum process $(\e_t)_{t\in \N}$ yields an anytime-valid test for the non-SD null, and \ref{item:min_eprocess_has_power_one} ensures that we will eventually reject if the distribution of $X$ and $Y$ is properly separated from the null. 
Part~\ref{item:min_eprocess_is_asymptotically_log-optimal} shows that, asymptotically, one cannot hope for evidence to grow faster than the minimum e-process. 
Following~\citet{waudbysmith2025universal}, we call any process satisfying~\ref{item:min_eprocess_is_asymptotically_log-optimal} \emph{asymptotically log-optimal}.

\subsection{Affirming restricted SD via time-uniform CDF bands}\label{sec:testing_nonsd_tvCS}

Given sequentially observed data $X_1, X_2, \ldots \sim F_X$, a $(1-\alpha)$-level \emph{time-uniform CDF band \emph{or} CDF confidence sequence (CDF-CS)} is a set of intervals $\{(\hat{L}_t(z), \hat{U}_t(z))\}_{t \geq 1, z \in \calZ}$ such that
\begin{equation*}\label{eqn:tvCS}
    \prob{\forall t \geq 1, \, \forall z \in \calZ: \hat{L}_t(z) \leq F_X(z) \leq \hat{U}_t(z) } \geq 1-\alpha, \quad \forall \Psymb\in \frakB.
\end{equation*}
We refer to $\hat{L}_t(z)$ and $\hat{U}_t(z)$ as the \emph{lower} and \emph{upper confidence sequence} of the CDF band (\emph{LCS} and \emph{UCS}), respectively.
\citet{howard2022sequential} and \citet{mineiro2023nonstationarity} are among the first to derive time-uniform CDF bands, while \citet{clerico2026uniform} recently derived tighter CDF-CS's based on PAC-Bayes techniques. 
When testing the SD null (not the non-SD null), the empirical results in Section~\ref{sec:empirical_comparison_ks_tests} show these methods achieve rather suboptimal statistical power, relative to the sequential e-tests proposed in this paper. 

Nevertheless, we may reasonably suspect that, if we have a tight CDF-CS for each CDF, then we can affirm SD when the two CDF-CS's do not intersect with each other at all $z$. 
However, we quickly run into an issue when operationalizing this idea. 
Suppose, for example, that $X$ and $Y$ are bounded in $[0, 1]$. 
Then, by definition, there is no difference in the CDF values at the boundaries: $F_X(1) - F_Y(1) = 0$. 
For unbounded rv's, say taking values in $\R$, we analogously know that $\lim_{z \to -\infty}[F_X(z) - F_Y(z)] = \lim_{z \to +\infty}[F_X(z) - F_Y(z)] = 0$. 
Even outside these ``boundary cases,'' whenever $F_X$ and $F_Y$ touch at an arbitrary point $z$ or in a small tail region, then we cannot distinguish the two CDFs; for a concrete example, recall the simulation example of Section~\ref{sec:sim_kuniform}. 

To make progress, we need to invoke a relaxation of SD in the alternative. Following \citet{davidson2013testing}, we say that $Y$ \emph{stochastically dominates $X$ restricted to} $\tilde\calZ \subseteq \calZ$, denoted by $X \sd_{\tilde\calZ} Y$, if
    \begin{equation*}
        F_Y(z) \leq F_X(z), \quad \forall z \in \tilde\calZ.
    \end{equation*}
For any $\tilde\calZ \subsetneq \calZ$, SD on $\tilde\calZ$ is a strict relaxation of SD on $\calZ$. 
Thus, by restricting ourselves to a smaller support, we can define a smaller null hypothesis that is easier to reject. 

Next, we specify the set of alternatives for which affirming SD is possible by generalizing the $\varepsilon$-separated alternative for finite support. 
For $\tilde\calZ \subseteq \calZ$, and $\varepsilon \in (0,1)$, let
\begin{equation*}
    \calQ(\tilde\calZ, \varepsilon) = \incurly{\Qsymb \in \frakB \mid \Qsymb(Y \leq z) + \varepsilon \leq \Qsymb(X \leq z), \; \forall z \in \tilde\calZ} \subseteq \incurly{\Qsymb \in \frakB \mid X \sd_{\tilde\calZ} Y \text{ under } \Qsymb}.
\end{equation*}

Here, we remain agnostic about the size of the witness set $\tilde\calZ$ and the value of $\epsilon$. 
As \citet{davidson2013testing} point out, affirming SD is empirically sensible only over a restricted range of outcomes in many practical applications. 
For example, if $X$ and $Y$ represent the incomes in two economies, then it may be more relevant to affirm SD over the range of incomes that are likely to be observed in practice. 

Given a sensible alternative, we can construct a powerful test that rejects the \emph{restricted} non-SD null, $\calH_0'(\tilde\calZ): X \notsd_{\tilde\calZ} Y$, as soon as the two CDF-CS's do not intersect at all $z \in \tilde\calZ$. 
\begin{proposition}[Affirming restricted SD using CDF-CS's]\label{ppn:affirming_sd_tvcs_witness_set}
Let $\alpha \in (0,1)$ and $\tilde\calZ \subseteq \calZ$. 
Let $\{\hat{L}_{X,t}(z)\}_{t\in\N, z \in \tilde\calZ}$ be a $(1-\alpha/2)$-level LCS for $F_X$, and $\{\hat{U}_{Y,t}(z)\}_{t\in\N, z\in \tilde\calZ}$ be a $(1-\alpha/2)$-level UCS for $F_Y$, on $\tilde{\calZ}$. 
Consider the test that rejects $\calH_0'(\tilde\calZ)$ at time $t$ only when the UCS on $F_Y$ is entirely below the LCS on $F_X$:
\begin{equation}\label{eq:seqtest_tvCS}
    \phi_t = \indicator{\forall z \in \tilde\calZ: \hat{U}_{Y,t}(z) < \hat{L}_{X,t}(z)}, \quad  t \in \N\,.
\end{equation}
Then, the following statements hold.
\begin{enumerate}[(a)]
    \item (Anytime-validity.) $(\phi_t)_{t\geq 1}$ is an anytime-valid test for $\calH_{0}'(\tilde\calZ)$ at level $\alpha$. \label{item:seqtest_tvCS_validity}
    \item (Test of power one.) Define the stopping time $\tau_\alpha = \inf\{t \geq 1: \phi_t = 1\}$.
    Let $\varepsilon \in (0,1)$. For any $\Qsymb \in \calQ(\tilde\calZ, \varepsilon)$, suppose that the UCS for $F_Y$ and the LCS for $F_X$ approach their respective CDFs, uniformly on $\tilde\calZ$, in large samples: as $t \to \infty$,
    \begin{equation}\label{eq:tvCS_uniform_convergence}
        \sup_{z \in \tilde\calZ} \insquare{\hat{U}_{Y,t}(z) - F_Y(z)} \to 0, \quad \text{and} \quad \sup_{z \in \tilde\calZ} \insquare{F_X(z) - \hat{L}_{X,t}(z)} \to 0, \quad \text{$\Qsymb$-a.s.}
    \end{equation}
    Then, $(\phi_t)_{t\in\N}$ is a sequential test of power one against $\Qsymb$, that is, $\Qsymb\inparen{\tau_\alpha < \infty} = 1$. \label{item:seqtest_tvCS_power}
\end{enumerate}
\end{proposition}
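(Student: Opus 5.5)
For part (a), I fix any $\Psymb \in \calH_0'(\tilde\calZ)$, so there is some $z_0 \in \tilde\calZ$ with $F_Y(z_0) > F_X(z_0)$. I then work on the good event
\begin{equation*}
G = \incurly{\forall t\in\N,\ \forall z\in\tilde\calZ:\ \hat{L}_{X,t}(z) \le F_X(z) \text{ and } F_Y(z) \le \hat{U}_{Y,t}(z)}.
\end{equation*}
By a union bound over the two $(1-\alpha/2)$-level one-sided bands, $\Psymb(G^c)\le \alpha/2+\alpha/2=\alpha$. On $G$, at every $t$ and at the witness $z_0$ we have
\begin{equation*}
\hat{U}_{Y,t}(z_0) \ge F_Y(z_0) > F_X(z_0) \ge \hat{L}_{X,t}(z_0).
\end{equation*}
So the condition ``$\hat U_{Y,t}(z)<\hat L_{X,t}(z)$ for all $z\in\tilde\calZ$'' fails, and $\phi_t=0$ for all $t$. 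Hence $\Psymb(\exists t:\phi_t=1)\le \Psymb(G^c)\le\alpha$, which is exactly the Ville-error control required for an anytime-valid level-$\alpha$ sequential test. The only care point is that the bands are stated on $\tilde\calZ$, and the witness $z_0$ lies in $\tilde\calZ$ by the definition of $\calH_0'(\tilde\calZ)$.

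For part (b), I fix $\Qsymb\in\calQ(\tilde\calZ,\varepsilon)$. By \eqref{eq:tvCS_uniform_convergence}, there is a $\Qsymb$-a.s.\ finite random time $T$ such that for all $t\ge T$,
\begin{equation*}
\sup_{z\in\tilde\calZ}\insquare{\hat U_{Y,t}(z)-F_Y(z)}<\varepsilon/2 \quad\text{and}\quad \sup_{z\in\tilde\calZ}\insquare{F_X(z)-\hat L_{X,t}(z)}<\varepsilon/2.
\end{equation*}
Then for every $z\in\tilde\calZ$,
\begin{equation*}
\hat U_{Y,t}(z) < F_Y(z)+\varepsilon/2 \le F_X(z)-\varepsilon/2 < \hat L_{X,t}(z),
\end{equation*}
using the separation $F_Y(z)+\varepsilon\le F_X(z)$. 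This gives $\phi_T=1$, so $\tau_\alpha\le T<\infty$ $\Qsymb$-a.s.

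The main subtlety is that the inequalities must be strict and uniform over $\tilde\calZ$. This is why the sup-convergence is needed rather than pointwise convergence. With strict $\varepsilon/2$ margins, a possibly infinite $\tilde\calZ$ causes no problem.
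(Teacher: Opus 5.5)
Your proposal is correct and follows essentially the same argument as the paper. In (a) you bound the Ville error by the failure probability of the two bands, and in (b) you use $\varepsilon/2$ margins from the uniform convergence together with the $\varepsilon$-separation in $\calQ(\tilde\calZ,\varepsilon)$. The only difference is cosmetic: you exhibit the violation at the witness $z_0$ directly, where the paper phrases the same step as a contradiction with the definition of $\calH_0'(\tilde\calZ)$.
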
 

\begin{proof}
(a) Define the ``good'' events $E^X$ and $E^Y$ as
\begin{align*}
    E^X = \incurly{\forall t \geq 1,\, \forall z \in \tilde\calZ: \hat{L}_{X,t}(z) \leq F_X(z)}, \;\; \textrm{and}\;\;
    E^Y = \incurly{\forall t \geq 1,\, \forall z \in \tilde\calZ: F_Y(z) \leq \hat{U}_{Y,t}(z)},
\end{align*} 
such that $\prob{E^X} \geq 1-\alpha/2$ and $\prob{E^Y} \geq 1-\alpha/2$ for any $\Psymb\in \frakB$. 

Suppose that there exists a $t\geq 1$ such that $\hat{U}_{Y,t}(z) < \hat{L}_{X,t}(z)$ for all $z \in \tilde\calZ$. 
Then, on the event $E^X \cap E^Y$, we have
\begin{equation*}
    F_Y(z) \leq \hat{U}_{Y,t}(z) < \hat{L}_{X,t}(z) \leq F_X(z), \quad \forall z \in \tilde\calZ,
\end{equation*}
which contradicts the definition of $\calH_0'(\tilde\calZ)$.
Thus, for any $\Psymb \in \calH_0'(\tilde\calZ)$,
\begin{equation*}
    \prob{\exists t \geq 1: \phi_t = 1} \leq \prob{\inparen{E^X \cap E^Y}^c} \leq \prob{(E^X)^c} + \prob{(E^Y)^c} \leq \alpha.
\end{equation*}
(b) Let $\Qsymb \in \calQ(\tilde{\calZ}, \epsilon)$. 
Given that the LCS and UCS uniformly converge to the true CDFs, we can choose $N( \epsilon)$ large such that, $\Qsymb$-almost surely,
\begin{equation*}
    \hat{L}_{X,t}(z) > F_X(z) -  \epsilon/2 \quad \text{and} \quad \hat{U}_{Y,t}(z) < F_Y(z) +  \epsilon/2, \quad \forall z \in \tilde\calZ,\, \forall t \geq N( \epsilon).
\end{equation*}

Then, for any $t \geq N( \epsilon)$ and $z \in \tilde{\calZ}$, we have $\Qsymb$-almost surely
\begin{equation*}
    \hat{U}_{Y,t}(z) - \hat{L}_{X,t}(z) < F_Y(z) +  \epsilon/2 - (F_X(z) -  \epsilon/2) = F_Y(z) - F_X(z) +  \epsilon \leq 0,
\end{equation*}
where the last inequality uses the fact that $\Qsymb$ is in the $ \epsilon$-witness set $\calQ(\tilde{\calZ},  \epsilon)$ restricted to $\tilde{\calZ}$. Thus, $\Qsymb(\tau_\alpha < \infty) = 1$ for any $\Qsymb \in \calQ(\tilde{\calZ}, \epsilon)$. 
\end{proof}

Note that the condition~\eqref{eq:tvCS_uniform_convergence} is a mild, one-sided condition that we expect to hold for any reasonable CDF band for i.i.d. data.
One practical choice is the recently developed CDF-CS by \citet{clerico2026uniform}, which is empirically tight for time-uniform CDF estimation and has widths shrinking at an asymptotic rate of $O(\sqrt{T^{-1}\log T})$ under ``regular'' conditions (see their remark on asymptotic rates in Section 4).

\begin{remark}[Necessity of $\epsilon$-separation for power] 
    The above result, along with the counterexamples discussed earlier, establishes that tests of the form~\eqref{eq:seqtest_tvCS} are valid for the non-SD null and achieve power only when the associated CDFs are $\epsilon$-separated. 
    See further discussion in Section~\ref{app:ks_test}.
\end{remark}

\section{Kolmogorov-Smirnov-type sequential tests for FSD}\label{app:ks_test}

In the non-anytime-valid setting, the classical approach to testing FSD is based on the one-sided KS statistic~\eqref{eqn:ks_stat}. In this section, we first review the pioneering work of \citet{darling1968some}, who were among the first to propose an anytime-valid test for FSD based on the KS statistic. Alternatively, we can adapt existing methods for time-uniform CDF estimation~\citep{howard2022sequential,manole2023martingale} to obtain simple, KS-type sequential tests for FSD. 
We briefly describe how to obtain such tests, and show how they ultimately fall short in terms of both applicability and power compared to our proposed methods.

\subsection{The original sequential KS test of Darling and Robbins}
\label{app:KS-darling-robbins}

For two independent i.i.d.\ samples
$X_1, X_2, \dots \sim F_X$ and $Y_1, Y_2, \dots \sim F_Y$, \citet{darling1968some} propose to test for FSD by
comparing the one-sided empirical KS statistic
\begin{equation*}
    \hat{D}^\textsf{KS}_t = \sup_{z \in \mathcal{Z}}
  \big[\widehat{F}_{X,t}(z) - \widehat{F}_{Y,t}(z)\big], \quad t\in \N,
\end{equation*}
against a deterministic, shrinking boundary $f(t)/t$, $t\in \N$. Here $f$ is any
continuous, positive, nondecreasing and concave function on $[m,\infty)$, for
some user-chosen $m \in \mathbb{N}$, such that $f(x) \le x$, $f(x)/x$
decreases strictly to $0$, and, for $\alpha\in (0,1)$,
\begin{equation} \label{eq:DR-boundary-condition}
  \sum_{t \ge m} \exp\left(-\frac{f^2(t)}{t+1}\right) \le \alpha .
\end{equation}
They propose the sequential test
\begin{equation}
  \phi_t = \mathbf{1}\!\left(\hat{D}^\textsf{KS}_t > f(t)/t\right), \quad t \ge m,
  \label{eq:DR-test}
\end{equation}
and show that~\eqref{eq:DR-test} defines a valid level-$\alpha$ sequential test for $\calH_0 : Y \preceq_1 X$, with asymptotic power one under any distribution in the alternative $\calH_1 : Y \not\preceq_1 X$; see their Theorem~1.
Type-I error control follows from the exact combinatorial null distribution of the two-sample KS statistic given by \citet{GnedenkoKorolyuk1951}, together with a union bound over all $t \geq m$. Consequently, the procedure is anytime-valid, although its validity is not derived from a (super)martingale or e-process construction.

As a concrete instantiation, their Remark~4(d) suggests the family
\begin{equation*}
f(x) = [(x+1)(a \log x + \log b)]^{1/2},
\qquad a > 1, b \ge 1,
\end{equation*}
and shows that, for $\alpha = 0.05$, choosing $a = 2$, $b = 4$, and $m = 6$ ensures that condition~\eqref{eq:DR-boundary-condition} is satisfied. The resulting rejection boundary is
\begin{equation*}
     \frac{f(t)}{t}
  = \frac{\sqrt{(t+1)\,(2\log t + \log 4)}}{t}
  \sim \sqrt{\frac{2\log t}{t}} .
\end{equation*}
Comparing with the boundary $2\omega_t(A,\alpha) \sim
2A\sqrt{\log\log t / t}$ of the CDF-band construction in the next subsection, the two are of comparable width for moderate $t$, 
but the Darling--Robbins boundary is asymptotically wider by a factor of order
$\sqrt{\log t / \log\log t}$. This is unavoidable for their construction:
condition \eqref{eq:DR-boundary-condition} forces $f(x) \gtrsim \sqrt{x\log x}$,
as they observe in their Remark~4(f), where they also note that boundaries of
the LIL order $\sqrt{x \log\log x}$ should exist by the results of
\citet{chung1949estimate}. This gap was closed later by the finite-LIL bounds of \citet{howard2022sequential} discussed
below.

We remark that the boundary $f(t)/t$ depends on $t$ alone and is therefore not adaptive to the underlying distributions, which may result in low power, as we illustrate in Section~\ref{sec:empirical_comparison_ks_tests}. Moreover, their validity argument requires two independent samples of equal size and therefore cannot exploit paired observations that are correlated. Finally, because their proof relies on rank-based combinatorial arguments for the empirical CDF process, the approach does not naturally extend to higher-order stochastic dominance.

\subsection{KS-type sequential tests based on time-uniform CDF bands}\label{sec:general_ks_test_from_tvcs}

\citet[][Appendix B.2]{howard2022sequential} propose a time-uniform bound on the empirical process $(\hat{F}_t(\cdot) - F(\cdot))_{t\in \N}$, referred to as a ``finite law-of-iterated-logarithm (LIL) bound,'' that yields a sequential test for FSD. 
We summarize the construction in the following proposition, which parallels Proposition~\ref{ppn:affirming_sd_tvcs_witness_set}.

\begin{proposition}[KS-type sequential test for FSD]\label{ppn:testing_sd_from_tvcs}
    For $\alpha \in (0,1)$,
let $\{\hat{L}_{X,t}(z)\}_{t \in \N, z \in \calZ}$ be a $1-\alpha/2$-LCS for $F_X$ and $\{\hat{U}_{Y,t}(z)\}_{t \in \N, z \in \calZ}$ a $1-\alpha/2$-UCS for $F_Y$.
    Then,
    \begin{equation*}
        \phi_t = \indicator{ \sup_{z \in \calZ} \insquare{\hat{L}_{X,t}(z) - \hat{U}_{Y,t}(z)} > 0} , \quad t\in \N,
    \end{equation*}
    is a valid level-$\alpha$ sequential test for $\calH_0: Y \fsd X$ against $\calH_1: Y \notfsd X$. 
\end{proposition}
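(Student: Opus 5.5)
The argument mirrors part (a) of Proposition~\ref{ppn:affirming_sd_tvcs_witness_set}, with the inequalities reversed. I would first define the two ``good'' events
\begin{equation*}
E^X = \incurly{\forall t\geq 1,\,\forall z\in\calZ: \hat{L}_{X,t}(z)\leq F_X(z)}, \qquad E^Y = \incurly{\forall t\geq 1,\,\forall z\in\calZ: F_Y(z)\leq \hat{U}_{Y,t}(z)}.
\end{equation*}
By the defining coverage property of a $(1-\alpha/2)$-level LCS and UCS, these satisfy $\prob{E^X}\geq 1-\alpha/2$ and $\prob{E^Y}\geq 1-\alpha/2$ for every $\Psymb\in\frakB$, and in particular for every $\Psymb\in\calH_0$. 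Coverage holds for each CDF marginally, so no independence between the $X$- and $Y$-streams is needed; this is why paired, correlated data are allowed.

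Next, fix $\Psymb\in\calH_0$, so that $F_X(z)\leq F_Y(z)$ for all $z\in\calZ$. On $E^X\cap E^Y$, for every $t$ and every $z$,
\begin{equation*}
\hat{L}_{X,t}(z)-\hat{U}_{Y,t}(z)\leq F_X(z)-F_Y(z)\leq 0 .
\end{equation*}
Taking the supremum over $z$ gives $\sup_z[\hat{L}_{X,t}(z)-\hat{U}_{Y,t}(z)]\leq 0$, so $\phi_t=0$ for all $t$ simultaneously. The strict inequality in the definition of $\phi_t$ is what lets the boundary case of equality be excluded.

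Hence $\{\exists t: \phi_t=1\}\subseteq (E^X\cap E^Y)^c$. A union bound then gives
\begin{equation*}
\Psymb(\exists t\geq 1:\phi_t=1)\leq \prob{(E^X)^c}+\prob{(E^Y)^c}\leq \alpha,
\end{equation*}
which is exactly Ville-error control at level $\alpha$, and hence a valid level-$\alpha$ sequential test.

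There is no real obstacle. The only point that needs care is that the time-uniform bands are simultaneous over all $z\in\calZ$ and all $t$, so the supremum over $z$ and the quantifier over $t$ both sit inside the good event. This avoids any further union bound over thresholds or times, and measurability of the supremum is not needed beyond containment in the good event's complement.
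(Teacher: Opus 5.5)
Your proposal is correct and matches the paper's intended argument. The paper gives no separate proof and only says the result parallels Proposition~\ref{ppn:affirming_sd_tvcs_witness_set}. Part (a) of that proposition uses the same steps you do: the good events $E^X$, $E^Y$, the observation that the test cannot reject on $E^X\cap E^Y$ under the null, and a union bound. You simply reverse the inequalities to fit $\calH_0: Y \fsd X$.
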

The test will reject as soon as the LCS and UCS cross at some point $z \in \calZ$. 
As a concrete example, \citet[][Theorem 2]{howard2022sequential} derive the LCS $\hat{L}_{X,t}(z) = \hat{F}_{X,t}(z) - \omega_t(A, \alpha)$ and UCS $\hat{U}_{Y,t}(z) = \hat{F}_{Y,t}(z) + \omega_t(A, \alpha)$, where $\omega_t$ is the uniform width function
\begin{equation*}
    \omega_t(A, \alpha) = A\sqrt{\frac{\log\log(et/t_{\text{min}}) + C(A, \alpha)}{t}}, 
\end{equation*}
$C(A, \alpha)$ is a constant that is calculated numerically, $A \geq 1/2$ is a tuning parameter (set to $0.85$ in their paper), and $t_{\text{min}}$ is a user-specified minimum time after which the bound starts to hold.
Note that the width merely depends on $t$ and not on the data, so the resulting test is not adaptive to the underlying distributions. 
The corresponding level-$\alpha$ test~\citep{darling1968some} is simply
\begin{equation*}
    \phi_t = \indicator{\sup_{z \in \calZ} \insquare{\hat{F}_{X,t}(z) - \hat{F}_{Y,t}(z)} > 2\omega_t(A, \alpha)}, \quad t\in \N,
\end{equation*}
involving the empirical one-sided KS statistic on the left-hand side. 

Interestingly, \citet[Corollary S3]{howard2022sequential} instead propose 
\begin{equation}\label{eq:infimum_test_tvCS}
    \phi_t = \indicator{\inf_{z \in \calZ} \insquare{\hat{F}_{X,t}(z) - \hat{F}_{Y,t}(z)} > 2\omega_t(A, \alpha)}=\indicator{ \inf_{z \in \calZ} \insquare{\hat{L}_{X,t}(z) - \hat{U}_{Y,t}(z)} > 0},
\end{equation}
for the FSD null $\calH_0: Y \sd X$. However, as discussed in the previous section, this test is actually valid for the strictly larger \emph{non-FSD} null $\calH_0': X \not\prec Y$, and is therefore expected to be conservative when used for testing the FSD null. 
It achieves no power against any non-FSD alternative where neither $X$ dominates $Y$ nor $Y$ dominates $X$ (whenever the two CDFs cross). 
In fact, even when $Y$ strictly dominates $X$ but $F_Y$ has a small contact region with $F_X$, the infimum difference $\inf_{z \in \calZ} [F_X(z) - F_Y(z)]$ will be zero, so the test~\eqref{eq:infimum_test_tvCS} will not retain power (as long as the LCS and UCS are valid and uniformly approach these CDFs). 
This problem persists when we instead treat~\eqref{eq:infimum_test_tvCS} as a test for the non-FSD null: as we have demonstrated in the previous section, it can only achieve power when $F_X$ and $F_Y$ are sufficiently separated. 

While we focused here on \citet{howard2022sequential}'s sequential test for SD (or non-SD), the analogous construction can work with any CDF-CS, such as those developed in \citet{mineiro2023nonstationarity} and \citet{clerico2026uniform}.
A final remark is that again it is unclear how these methods generalize beyond FSD.

\subsection{A KS-type sequential test based on reverse submartingales}\label{sec:ks_test_from_reverse_submtg}

In another line of work, building on the theory of \emph{reverse submartingales}, \citet{manole2023martingale} derive lower and upper confidence sequences for the two-sided KS statistic
\begin{equation*}
    D^\textsf{KS,2} = \lVert F_X - F_Y \rVert_{\infty} = \sup_{z \in \calZ} |F_X(z) - F_Y(z) |
\end{equation*}
based on the empirical CDFs from i.i.d.~samples; see Corollary 14 in their paper. 
To adapt their result to FSD testing, as discussed in Section~\ref{sec:related_work_testing_SD}, we may rewrite the FSD null as $ \calH_0 =\{\Psymb \in \frakB \mid  D^\mathsf{KS}(\Psymb) \leq 0\}$ with the one-sided KS statistic
\begin{equation*}
    D^\mathsf{KS} = D^\mathsf{KS}(\Psymb) = 
    \sup_{z \in \calZ}\, [\Psymb_X(z) - \Psymb_Y(z)]\,.
\end{equation*}
Following their exposition, we directly proceed to the more general case of unpaired samples with $X_1, \dotsc, X_t \sim F_X$ and $Y_1, \dotsc, Y_s \sim F_Y$.
From their proofs, we can readily extend their results on the two-sided KS statistic to obtain a time-uniform upper confidence sequence $(U_{ts})_{t,s\in \N}$ at level $\alpha\in (0,1)$ for the one-sided process 
\begin{equation*}
    M_{ts} = \sup_{z \in \calZ} \left\{\hat{F}_{X}^t(z)-\hat{F}_{Y}^s(z)\right\}- D^\mathsf{KS}(\Psymb), \quad t\in \N, s\in \N,
\end{equation*}
where $\hat{F}_{X}^t$ and $\hat{F}_{Y}^s$ denote the empirical CDFs of the samples $(X_i)_{i=1}^t$ and $(Y_i)_{i=1}^s$, respectively. This UCS $(U_{ts})_{t,s\in \N}$ for $(M_{ts})_{t,s\in \N}$ directly yields the LCS $$
L_{ts}=\sup_{z \in \calZ} \left\{\hat{F}_{X}^t(z)-\hat{F}_{Y}^s(z)\right\}-U_{ts}, \quad t\in \N,s\in \N,$$ 
for the unknown quantity of interest $D^\mathsf{KS}(\Psymb)$, satisfying
\begin{equation}\label{eq:Manole_Ramdas_lower_confidence_sequence}
    \Psymb(L_{ts} \leq D^\mathsf{KS}(\Psymb), \forall t, s \geq 1) \geq 1-\alpha, \quad \textrm{for all }\Psymb \in \frakB\,.
\end{equation}
The deduced sequential test $\psi_{ts} = \indicator{L_{ts} > 0}, t,s\in \N,$ which rejects $\calH_0$ as soon as $L_{ts}$ exceeds zero, time-uniformly controls type-I error at level $\alpha$ by the coverage guarantee~\eqref{eq:Manole_Ramdas_lower_confidence_sequence}, that is the probability that we ever reject under $\calH_0$ is bounded by $\alpha$. 

Although theoretically appealing, this approach is not competitive in terms of power as the confidence sequences are generally very conservative (see Section~\ref{sec:empirical_comparison_ks_tests}).

\section{Additional experiments and details}\label{app:additional_experiments}

\subsection{Additional details on simulations}\label{app:additional_details_sims}

All of the simulations presented in the main text and below are run with Python code 
\if\arxiv1
{(available at \url{https://github.com/yjchoe/BettingOnBets}).}
\else{(provided with the paper).}
\fi
For the classical baseline methods (BD03, LMW05, and LSW10) described in Figure~\ref{fig:comparison_classical_tests}, we use the publicly available \texttt{PySDTest}~\citep{lee2023pysdtest} package for the implementation of these baseline methods. 
In each case, we fix the number of bootstrap/subsampling trials to be $n_\text{boot}=500$, and use the resampling set size of $b=40$ for the subsampling methods. 
For the contact set estimation in LSW10~\citep{linton2010improved}, we use the default hyperparameter of $c=0.75$, which controls how quickly the size of the estimated contact set grows as the sample size increases. 
This value is different from the original paper's recommendation of $c=3.0$ or $c=4.0$, but we have found that the package default performs better in terms of mean rejection times for this simulation.

\subsection{The toy example with antimonotonicity}\label{sec:sim_anticorr}

This simulation is based on Example~\ref{ex:anticorr}, where the data distribution has a finite support and $X$ and $Y$ are maximally antimonotonic. 
We simulate $\{(X_t, Y_t)\}_{t \in \N}$ as jointly i.i.d.~with
\begin{align*}
    \prob{X=0, Y=1} &= 1/2 \quad \text{and} \quad 
    \prob{X=2/3, Y=1/3} = 1/2.
\end{align*}
The combined support of $X$ and $Y$ is $\calZ = \{0, 1/3, 2/3, 1\}$, and Figure~\ref{subfig:anticorr_cdfs} plots their marginal CDFs. 
By design, we have $X \fsdstrict Y$, and thus $Y \notfsd X$; we also have $\rho(X, Y)=-1$. 
While conceptually simple, the example highlights robustness to the dependence structure between $X$ and $Y$. 
The antimonotonic dependence is the most challenging case, in the sense that we observe a larger $X$ than $Y$ half of the time. 
Note that the support $\calZ$ is finite and known; see Remark~\ref{remark:finite_support} for our previous discussion on adaptivity. 

\begin{figure}[t]
    \centering
    \def\imageheight{4cm} 

    \begin{subfigure}[t]{0.3\textwidth}
        \centering
        \includegraphics[height=\imageheight, keepaspectratio]{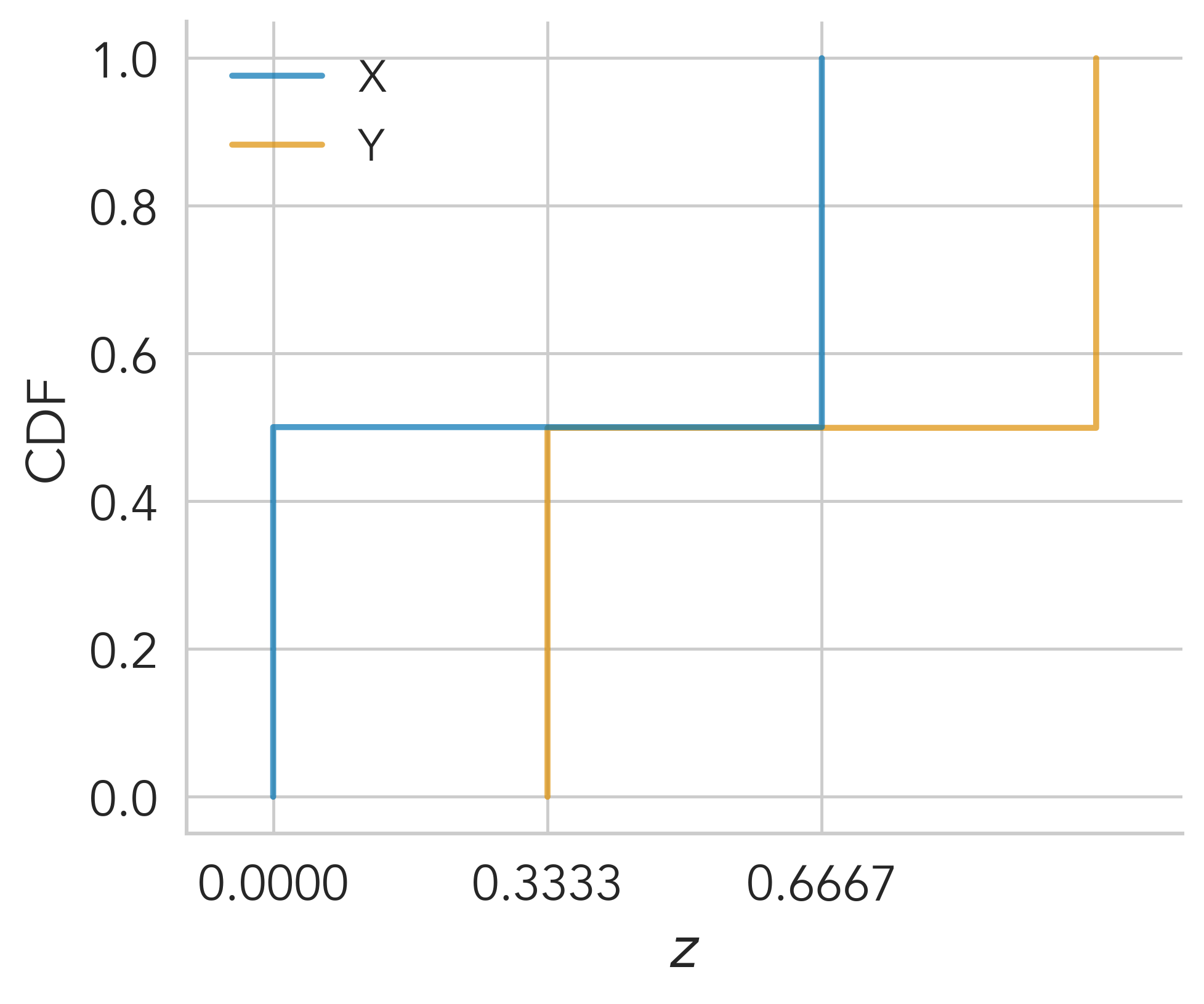}
        \caption{$F_X$ and $F_Y$.}
        \label{subfig:anticorr_cdfs}
    \end{subfigure}
    \begin{subfigure}[t]{0.34\textwidth}
        \centering
        \includegraphics[height=\imageheight, keepaspectratio]{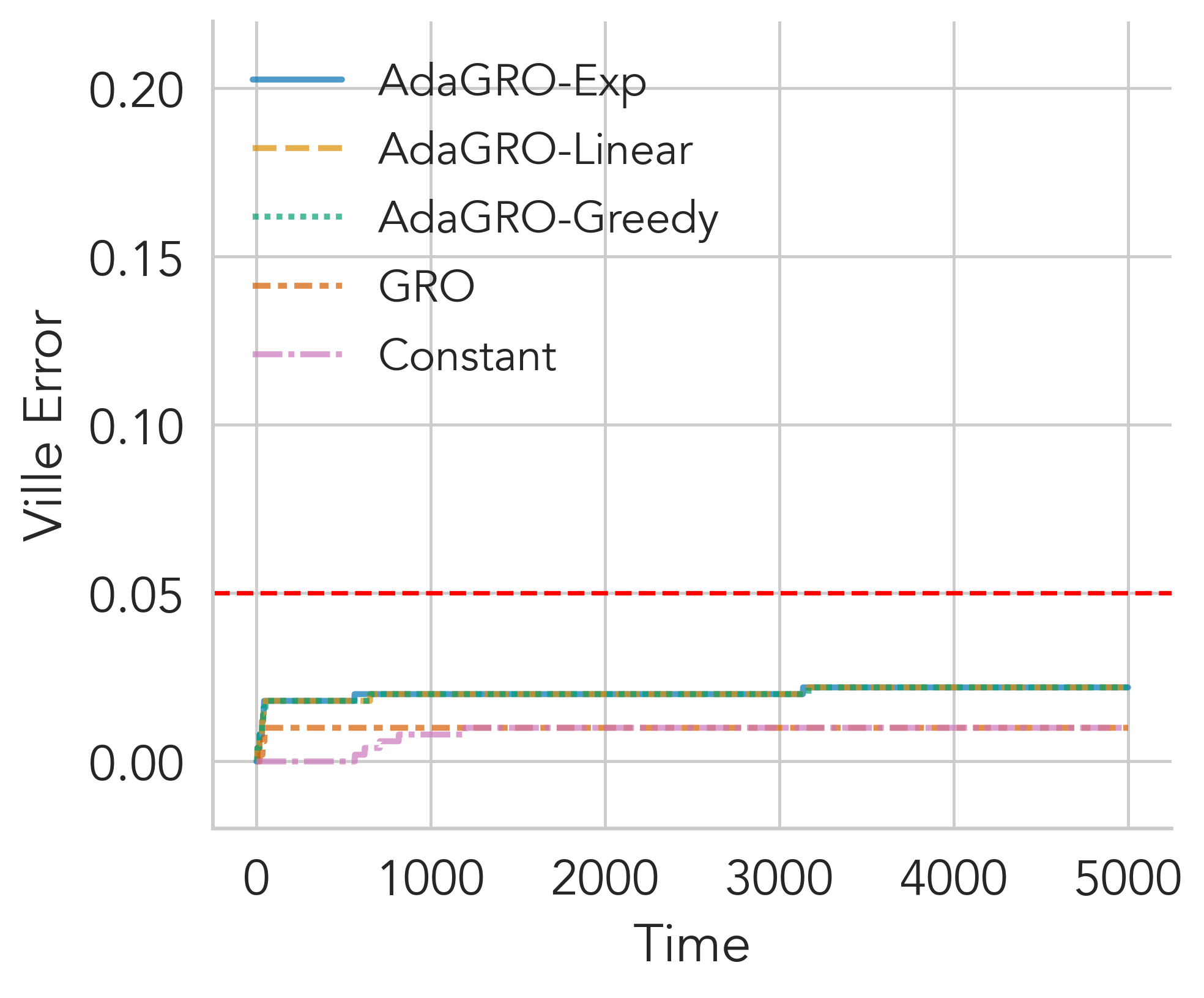}
        \caption{Ville error for $\calH_0: X \fsd Y$.}
        \label{subfig:anticorr_ville}
    \end{subfigure}
    \begin{subfigure}[t]{0.34\textwidth}
        \centering
        \includegraphics[height=\imageheight, keepaspectratio]{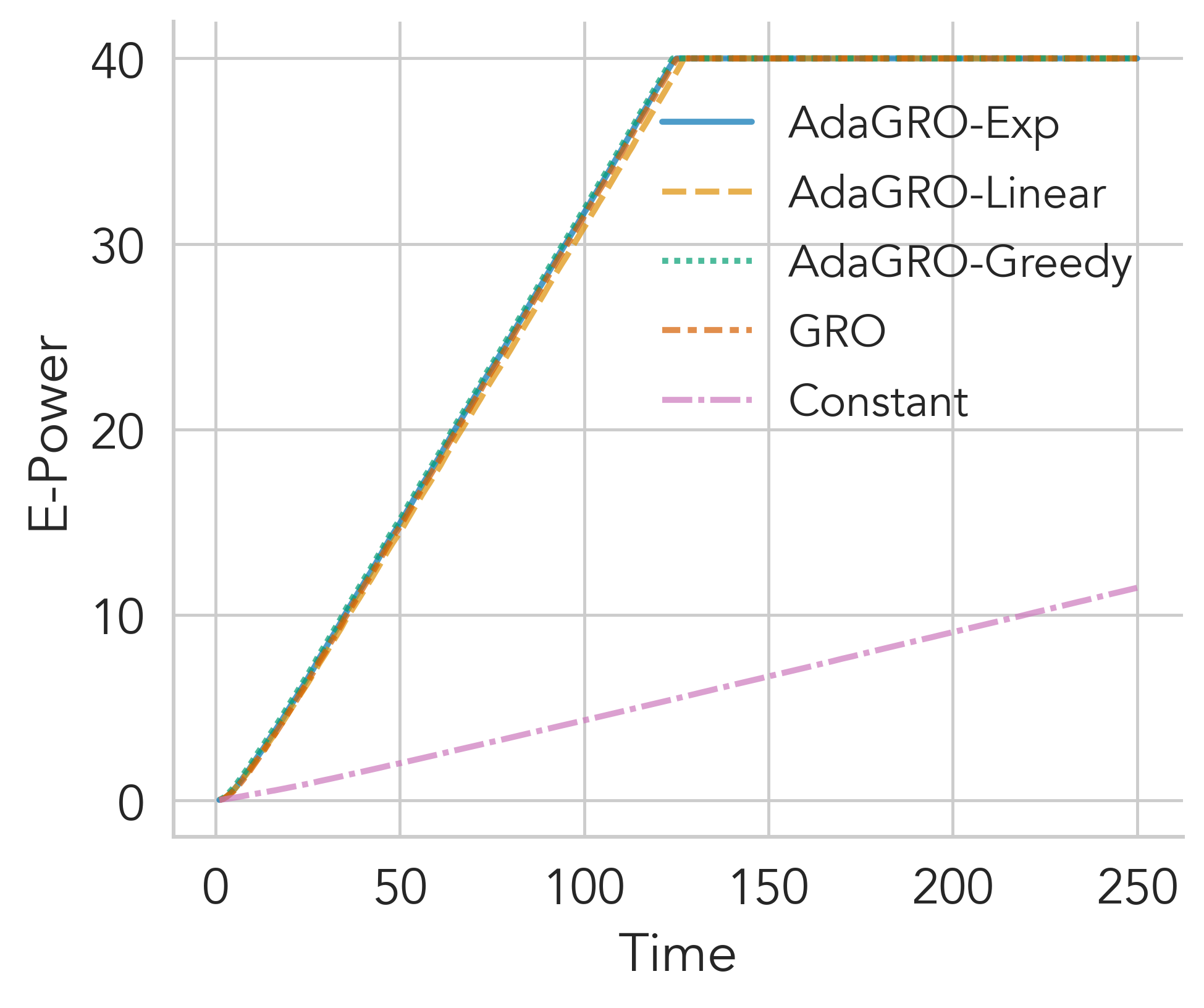}
        \caption{E-power against $\calH_0: Y \fsd X$.}
        \label{subfig:anticorr_epower}
    \end{subfigure}
    \caption{\textit{All GRO e-processes grow quickly under antimonotonicity while maintaining anytime-validity.}  
    Plots show simulations with finite support (4 outcomes) and maximal antimonotonicity ($\rho(X,Y)=-1$). Each line is averaged over $500$ repeated simulations. 
    The Ville error plot is drawn for $t = 1, \dotsc, 5,000$, while the e-power plot is truncated at $t=500$ and at $40$ for better visualization. 
    Over time, a linearly growing e-power corresponds to an exponentially growing e-process.}
    \label{fig:anticorr}
\end{figure}

Over 500 repeated simulations, we plot the mean Ville error for $\calH_0: X \fsd Y$ (Figure~\ref{subfig:anticorr_ville}) and the mean e-power against $\calH_0: Y \fsd X$ (Figure~\ref{subfig:anticorr_epower}) of the five e-processes. 
For all e-processes, the mean Ville error does not climb above $\alpha=0.05$ (red horizontal line), illustrating their anytime-validity. 
Further, the mean e-power grows very quickly for all GRO e-processes (adaptive or not), reaching $\geq 30$ within $t=100$ data points and much quicker than the constant bet e-process. 

For this toy example, we do not see the added benefits of using adaptive methods. 
As mentioned in Remark~\ref{remark:finite_support}, with finite and known support, the adaptive GRO e-process with predictable weights can at best improve the non-adaptive GRO e-process by a constant factor in e-power terms. 
In this example, two out of three thresholds are ``profitable'' bets ($z=0$ and $z=2/3$), and thus even the non-adaptive GRO e-process and other adaptive baselines perform just as well in terms of e-power. 
Without such knowledge \emph{a priori}, we still recommend using the adaptive GRO e-process with exponential weights for its flexibility.

\subsection{Comparison with KS-type sequential tests for FSD}\label{sec:empirical_comparison_ks_tests}

We now compare adaptive and non-adaptive GRO e-processes (Section~\ref{sec:FSD}) against the aforementioned KS-type sequential tests for FSD: \citet[DR68]{darling1968some}, \citet[HR22]{howard2022sequential}, and \citet[MR23]{manole2023martingale}, as discussed in Section~\ref{app:ks_test}, alongside the recently developed CDF bands by \citet[CFR26]{clerico2026uniform} which improve upon HR22. 
We focus on two setups from our main simulations (Section~\ref{sec:sim_kuniform}), one where there is substantial contact between $F_X$ and $F_Y$ ($z_0=0.2$) and another where there is no contact ($z_0=1.0$). 
Since KS-type tests do not explicitly yield e-processes, we compare the distributions of the stopping times $\tau_\alpha = \inf\{t \in \N: \phi_t = 1\}$ at level $\alpha=0.05$, over repeated simulations under each alternative (200 repeats of maximum length $T=5,000$ each). 
Smaller rejection times indicate higher power. 
For CFR26's method, for computational reasons, we use 20 Newton steps for the optimization step at each sample size, rather than the suggested 50, noting the results were identical even when using just 10 Newton steps. 

\begin{figure}[t]
    \centering
    \begin{subfigure}[b]{0.42\textwidth}
        \centering
        \includegraphics[width=\textwidth]{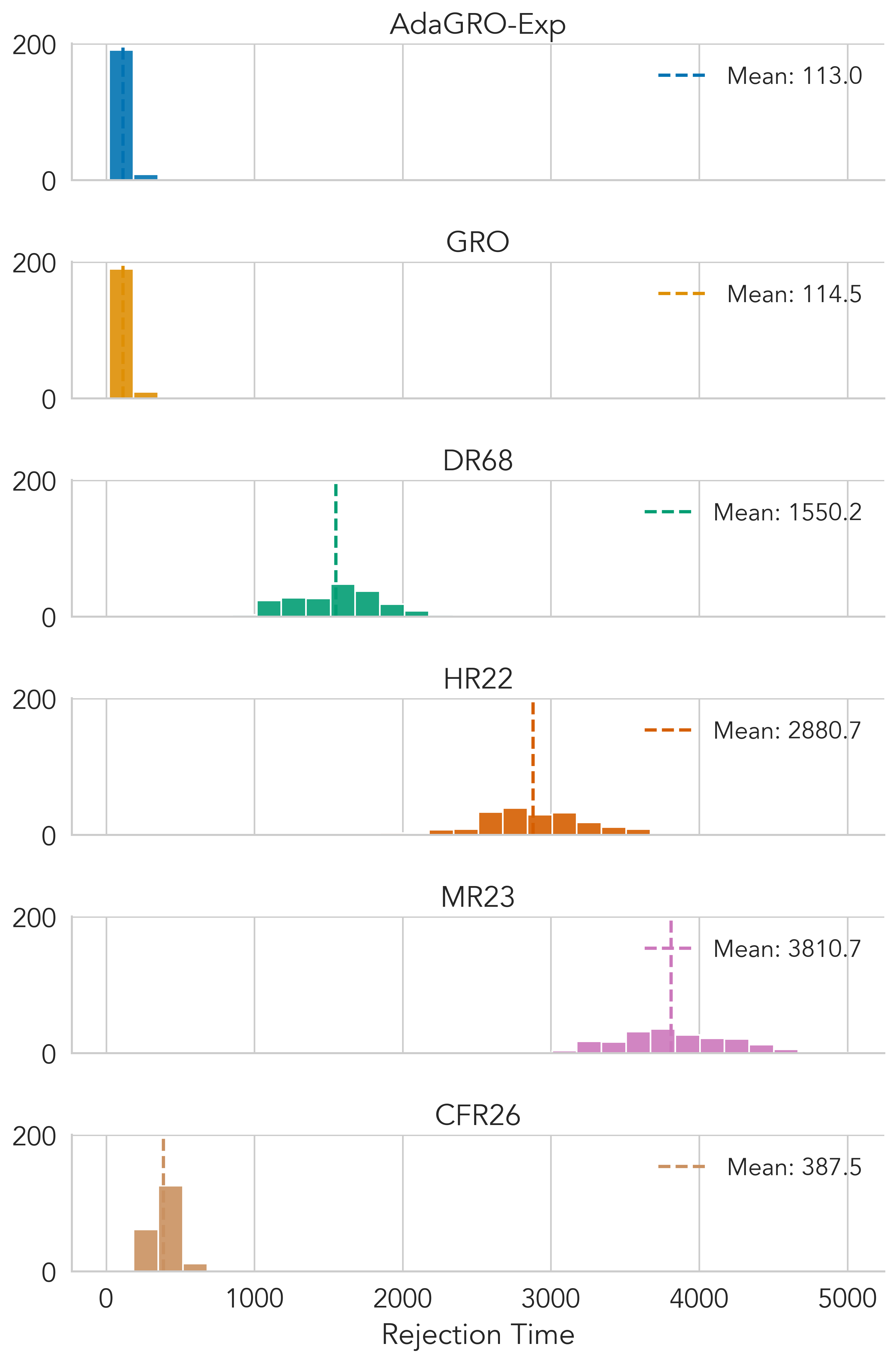}
        \caption{Contact between CDFs ($z_0=0.2$).}
        \label{fig:ks_tests_comparison_z0.2}
    \end{subfigure}
    ~
    \begin{subfigure}[b]{0.42\textwidth}
        \centering
        \includegraphics[width=\textwidth]{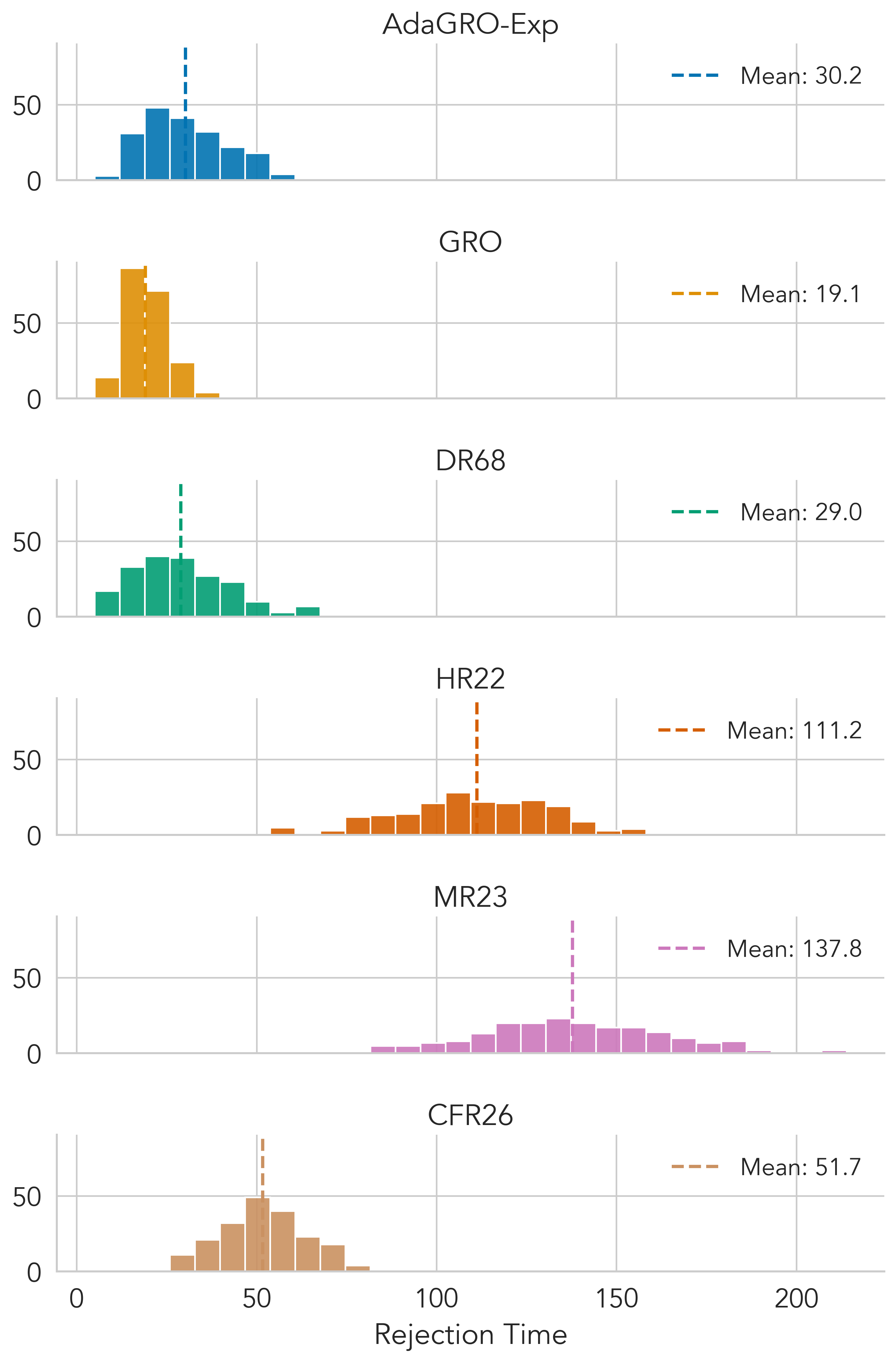}
        \caption{No contact between CDFs ($z_0=1.0$).}
        \label{fig:ks_tests_comparison_z1.0}
    \end{subfigure}
    \caption{Histograms and means of the stopping times $\tau_\alpha$, at level $\alpha=0.05$, for the sequential tests induced by adaptive and non-adaptive GRO e-process for FSD, alongside various KS-type sequential tests for FSD (DR68, HR22, MR23, and CFR26). 
    The simulation setups are identical to two setups from Section~\ref{sec:sim_kuniform}: one where there is substantial contact between $F_X$ and $F_Y$ ($z_0=0.2$) and another where there is no contact ($z_0=1.0$). 
    Results are from 200 repeated simulations of length $T=5,000$ each.}
    \label{fig:ks_tests_comparison}
\end{figure}

In Figure~\ref{fig:ks_tests_comparison}, we plot both the histograms and means of the stopping times $\tau_\alpha$ for the KS-type sequential tests (DR68, HR22, MR23, and CFR26) alongside the sequential e-tests induced by adaptive and non-adaptive GRO e-processes (AdaGRO-Exp and GRO, respectively).
We find that, regardless of whether there is contact between $F_X$ and $F_Y$, sequential e-tests (AdaGRO-Exp and GRO) are generally more powerful than KS-type sequential tests, and the difference is large when there is contact (and thus, a smaller non-dominance region). 

In the first case, where there is substantial contact (Fig.~\ref{fig:ks_tests_comparison_z0.2}), both GRO e-processes require 113.0 and 114.5 observations to reject on average, respectively, while the best-performing KS-type test (CFR26) requires more than three times the sample size (387.5) to reject in the same setup on average. 
The other baselines (DR68, HR22, and MR23) are not competitive, as they require much more on average (from 1,550 to 3,811 data points). 
When there is no contact (Fig.~\ref{fig:ks_tests_comparison_z1.0}), the mean stopping times for AdaGRO-Exp and GRO (30.2 and 19.1, respectively) are still substantially smaller than all baselines, except for DR68 (29.0) which is now comparable with the sequential e-tests. 
This is because we are in a particularly favorable case where the CDF difference is strictly positive everywhere except $z=1$ and adaptivity (as in Section~\ref{sec:sim_normal}) is not necessary. 
Even in this regime, the non-adaptive GRO e-test still achieves a smaller mean stopping time with less spread, and the adaptive GRO e-test achieves a comparable stopping time distribution. 
Other KS-type baselines are less competitive (111.2, 137.8, and 51.7 for HR22, MR23, and CFR26, respectively).

\subsection{Testing second- and third-order SD}\label{app:sim_kuniform_ksd}

In this simulation, we test for 2-SD and 3-SD in the same scenarios as the main simulation from Section~\ref{sec:sim_kuniform}. 
Since $X \fsdstrict Y$ implies $X \ssdstrict Y$ and $X \tsdstrict Y$, we should achieve e-power against both the 2-SD null, $\calH_0^{[2]}: Y \ssd X$, and the 3-SD null, $\calH_0^{[3]}: Y \tsd X$. 
Here, instead of the GRO e-process (Theorem~\ref{thm:Test-supermartingale-for-the-FSD_null}), we use the UP e-process (Theorem~\ref{thm:Test-supermartingale-for-the-kSD_null}) using the lower bound of $a=0$. 
Recall that, for each threshold $z$, the UP e-process is only asymptotically optimal. 
Thus, we will see an interaction between the suboptimality of UP bets in the early rounds and the adaptivity of predictable weights. 
The choice of predictable weights is kept the same as in the GRO e-processes for testing FSD, with 21 equidistant thresholds in $[0,1]$ at the start and then updated adaptively based on the observed quantiles of data. 

\begin{figure}[t]
    \centering
    \begin{subfigure}[t]{\textwidth}
        \centering
        \includegraphics[width=\textwidth, keepaspectratio]{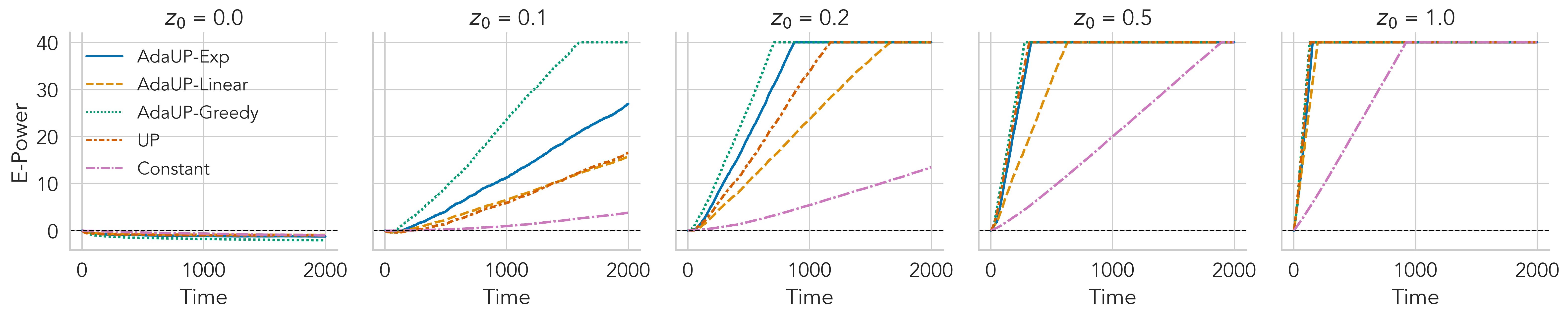}
        \caption{E-power against $\calH_0: Y \ssd X$ (2-SD).}
        \label{subfig:kuniform_epower_2sd}
    \end{subfigure}
    \begin{subfigure}[t]{\textwidth}
        \centering
        \includegraphics[width=\textwidth, keepaspectratio]{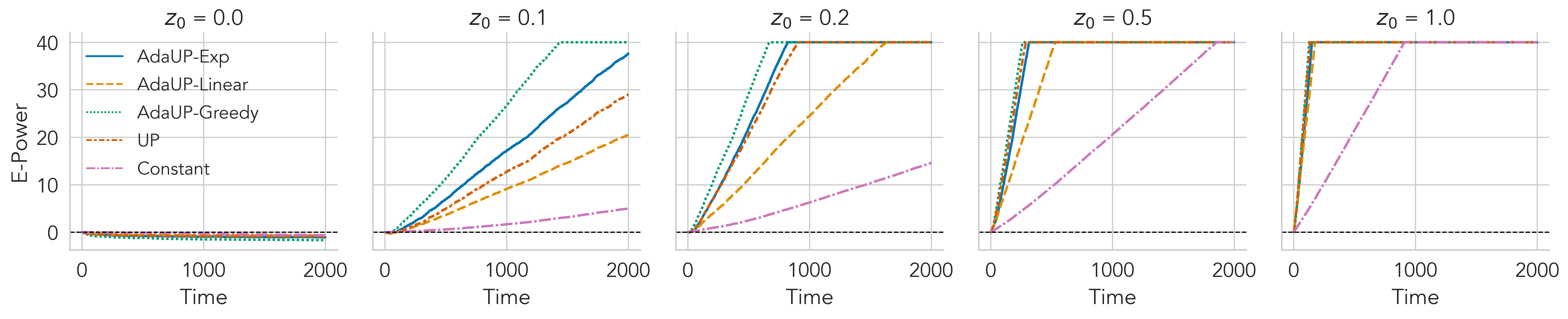}
        \caption{E-power against $\calH_0: Y \tsd X$ (3-SD).}
        \label{subfig:kuniform_epower_3sd}
    \end{subfigure}
    \caption{\textit{For higher-order SD testing, adaptive UP e-processes grow quickly relative to non-adaptive UP or constant-bet counterparts, particularly when the contact set between the CDFs is large.}
    These simulations have the same setup as for Figure~\ref{fig:kuniform}, but for testing second- and third-order SD. The e-power is truncated at 40 for ease of visualization.}
    \label{fig:kuniform_ksd}
\end{figure}

Figure~\ref{fig:kuniform_ksd} summarizes the e-power against the 2-SD null (Figure~\ref{subfig:kuniform_epower_2sd}) and the 3-SD null (Figure~\ref{subfig:kuniform_epower_3sd}). 
All UP e-processes grow in e-power even with large contact sets, although this time, the adaptive UP e-processes with \emph{greedy} weights have the highest e-power across time. 
Intriguingly, in the $z_0=0.1$ case (large contact set), the adaptive UP e-process with greedy weights now has substantial advantage over the one with exponential weights, which itself has nontrivial advantage over the non-adaptive UP e-process. 
One possible explanation is that the UP e-process is not necessarily optimal during the early rounds, and that the effect of greedy weights adapting more quickly to ``profitable thresholds'' is more pronounced. 

\begin{figure}[t]
    \centering
    \begin{subfigure}[t]{\textwidth}
        \centering
        \includegraphics[width=\textwidth, keepaspectratio]{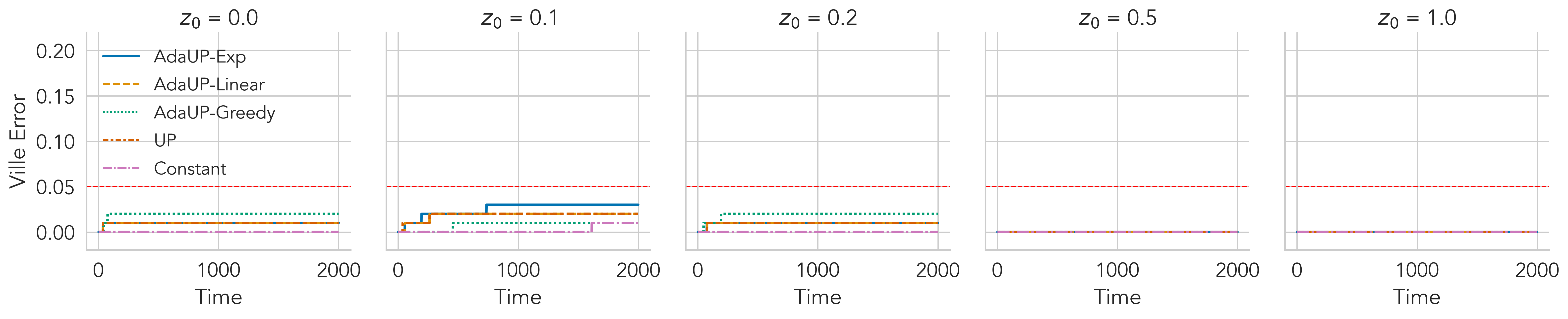}
        \caption{Ville error for testing $\calH_0: X \ssd Y$ (2-SD).}
        \label{subfig:kuniform_ville_errors_2sd}
    \end{subfigure}
    \begin{subfigure}[t]{\textwidth}
        \centering
        \includegraphics[width=\textwidth, keepaspectratio]{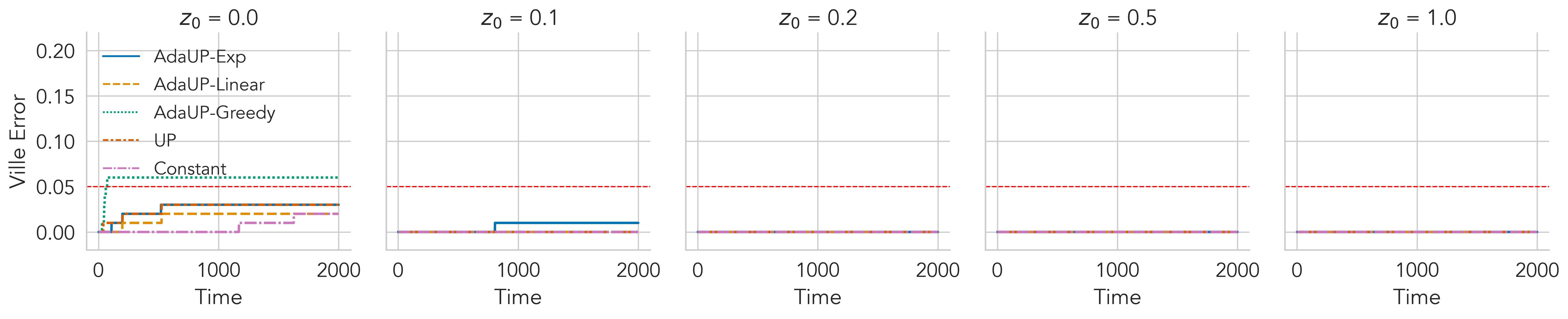}
        \caption{Ville error for testing $\calH_0: X \tsd Y$ (3-SD).}
        \label{subfig:kuniform_ville_errors_3sd}
    \end{subfigure}
    \caption{All UP e-processes (approximately) control the Ville error at level $\alpha=0.05$ across all values of $z_0$. The Ville error at time $t$ is calculated as the cumulative type I error up to time $t$, averaged over $100$ repeated simulations.} 
    \label{fig:kuniform_ksd_validity}
\end{figure}

Next, in Figure~\ref{fig:kuniform_ksd_validity}, we further verify that all UP e-processes for testing 2-SD and 3-SD control (or get close to controlling) the Ville error at level $\alpha=0.05$ across all values of $z_0$. 
Note that the $z_0=0.0$ case is the ``hard'' case, where the two CDFs are identical and it is the easiest to make false rejections; at $t=2,000$, the greedy variant incurs a Ville error of $0.06$ with a 95\% Wald interval (over 100 repeated simulations) of $(0.014, 0.107)$. We see that the Ville error is zero for the ``easy'' cases where $z_0 \geq 0.5$, for both 2-SD and 3-SD.

Thus, the UP e-process is anytime-valid, and it empirically achieves a positive growth rate when testing higher-order SD, with adaptive strategies based on greedy or exponential weights providing further power gains when the violation region is relatively small.

\section{Additional details and results for 3TTO testing}\label{app:real_data_exp}

\subsection{Data pre-processing and visualization}\label{app:baseball_data}

All data sources are downloaded directly from the publicly available MLB Statcast database (\url{https://baseballsavant.mlb.com/statcast_search}). 
For each pitcher, we downloaded the at-bat-level data for all available regular season games from 2016 (or whenever the pitcher made his MLB debut) to 2025. 

For each at-bat, there are a total of 18 unique outcomes, many of which are rare. 
Each of these outcomes is consolidated into the following five categories: 
Out (\texttt{strikeout}, \texttt{force\_out}, \texttt{field\_out}, \texttt{grounded\_into\_double\_play}, and \texttt{strikeout\_double\_play}), 
Walk (\texttt{walk} and \texttt{hit\_by\_pitch}), 
1B (\texttt{single}), 
2B/3B (\texttt{double} and \texttt{triple}), and
HR (\texttt{home\_run}). 
The following events are excluded from the analysis: \texttt{intent\_walk}, \texttt{sac\_fly}, \texttt{sac\_bunt}, \texttt{field\_error}, \texttt{fielders\_choice}, \texttt{fielders\_choice\_out}, and \texttt{catcher\_interf}. 
For simplicity, if either of the 1TTO or 3TTO outcomes for an at-bat falls into the excluded categories, we exclude that paired outcome from the analysis. 

\begin{figure}[t]
    \centering
    \includegraphics[width=0.5\textwidth]{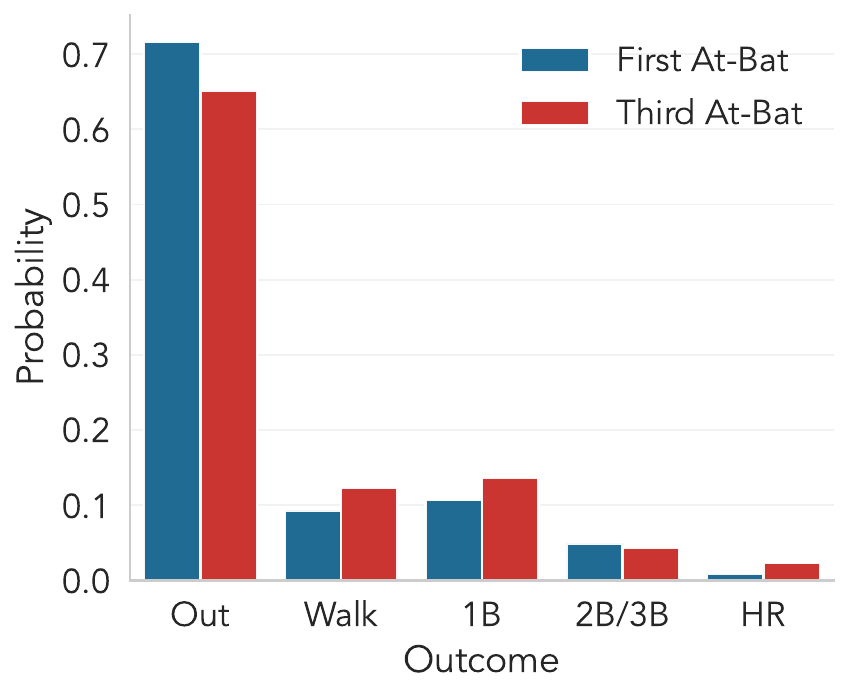}
    \caption{A histogram view of at-bat outcomes in 1TTO and 3TTO for Blake Snell.}
    \label{fig:baseball_tto_snell_dist}
\end{figure}

Figure~\ref{fig:baseball_tto_snell_dist} shows the histograms of outcomes at 1TTO and 3TTO for Blake Snell, which are the two data streams we compare in our main analysis in Section~\ref{sec:real_data_exp}. 
The histograms show that there is a notable drop in the number of outs in 3TTO and a corresponding increase in the number of walks, singles (1B), and home runs (HR) in 3TTO.

\subsection{What ``batter upside'' means and a fine-grained analysis}\label{app:batter_upside}

The baseball example illustrates an effective use of the SD framework for evaluating pitcher performance, given that the usual at-bat outcomes are ordinal. 
The null hypothesis $\calH_0: Y \fsd X$, where $X$ is the random outcome at 1TTO and $Y$ is the random outcome at 3TTO, has an intuitive interpretation in the problem context. 
The intersection representation for this null hypothesis can be written as: $\calH_0 = \cap_{z=0}^3 \calH_0(z)$, where $\calH_0(z) = \incurly{ \Psymb: \Psymb(X > z) \geq \Psymb(Y > z)}$.
Then, rejecting the null hypothesis corresponds to saying that \emph{at least one} of these statements ($\calH_0(z)^c$ for some $z=0,1,2,3$) is true: 
in their third time facing this pitcher (3TTO), relative to their first time (1TTO),
\begin{itemize}
    \item The batter is more likely to get \emph{on base (OB)}: walk, 1B, 2B, 3B, or HR.  
    \item The batter is more likely to get a \emph{hit (H)}: 1B, 2B, 3B, or HR. 
    \item The batter is more likely to get an \emph{extra-base hit (XBH)}: 2B, 3B, or HR. 
    \item The batter is more likely to get a \emph{home run (HR)}.  
\end{itemize}
These statements respectively correspond to $\Psymb(X > z) < \Psymb(Y > z)$ for $z=0,1,2,3$; if there is strong evidence for any of these statements, then the opposing batter has an upside in 3TTO relative to 1TTO, signaling to the manager that they need to consider taking the pitcher out. 
In practice, of course, a more rigorous analysis would take exogenous features into account in the decision, such as the quality of the pitcher's replacement; the pitcher's stamina and health status in the season; performance in 2TTO; and so on.

\begin{figure}[t]
    \centering
    \includegraphics[width=0.85\textwidth]{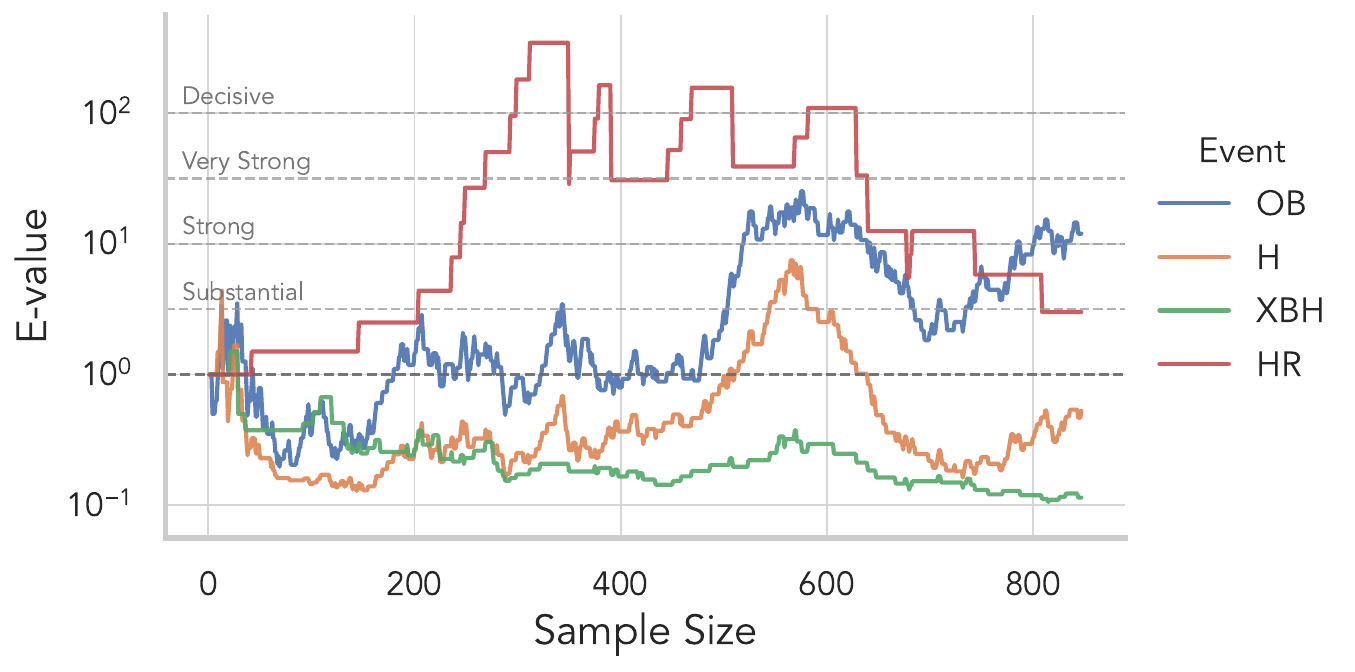}
    \caption{GRO e-processes against individual component nulls, $\calH_0(z) = \{\Psymb: \Psymb(X \leq z) \leq \Psymb(Y \leq z)\} = \{\Psymb: \Psymb(X > z) \geq \Psymb(Y > z)\}$, where $X$ and $Y$ are outcomes in 1TTO and 3TTO, respectively. 
    For $z=0,1,2,3$, each e-process quantifies evidence for the batter's upside in 3TTO over 1TTO for the cumulative events of getting on base (OB), a hit (H), an extra-base hit (XBH), and a home run (HR), respectively.}
    \label{fig:baseball_tto_snell_indiv_z}
\end{figure}

In Figure~\ref{fig:baseball_tto_snell_indiv_z}, we plot the GRO e-processes for these individual statements in their null form, $\calH_0(z)$ for $z=0,1,2,3$. 
The ``event'' of interest is the cumulative outcome for $X$ (and for $Y$): OB for $X>0$, H for $X>1$, XBH for $X>2$, and HR for $X>3$ (i.e., $X=4$). 
The plot suggests that, at the end of the 2020 regular season ($t=395$), there was very strong evidence ($E \approx 31.6$) that Snell's opponents had a higher probability of hitting a home run in their 3TTO than in 1TTO; in contrast, there was little such evidence for other cumulative events (OB, H, and XBH). 
After 2020, there is further strong evidence that the batters are more likely to get on base in 3TTO than in 1TTO. 
On the other hand, there is little evidence throughout the monitoring time frame that Snell's opponents are more likely to get XBHs in 3TTO than in 1TTO. 
These individual GRO e-processes further provide fine-grained evidence for the batter's upside in an intuitive manner.

\subsection{Do other top pitchers have a relative downside in 3TTO?}\label{app:other_pitchers}

\begin{table}[t]
    \centering
    \begin{tabular}{l|rr|rr|rr|rr}
        \toprule
        \multirow{2}{*}{\bf Pitcher} & \multirow{2}{*}{\bf GS} & \multirow{2}{*}{\bf IP} & \multirow{2}{*}{\bf ERA} & \multirow{2}{*}{\bf fWAR} &
        \multicolumn{2}{c}{\bf End of 2020} & \multicolumn{2}{|c}{\bf End of 2025} \\
         &  &  &  &  & $\tau_{2020}$ & $E_{\tau_{2020}}$ & $\tau_{2025}$ & $E_{\tau_{2025}}$ \\
        \midrule
        Max Scherzer & 243 & 1495 & 2.98 & 40.9 &
        904 & 0.80 & 1412 & 2.84 \\
        Aaron Nola & 272 & 1638 & 3.84 & 36.8 &
        669 & 0.52 & 1515 & {8.67} \\
        Gerrit Cole & 244 & 1491 & 3.21 & 36.4 & 
        794 & 0.51 & 1434 & 1.92 \\
        Clayton Kershaw & 209 & 1242 & 2.67 & 31.9 & 
        650 & 0.33 & 1027 & 0.26 \\
        Blake Snell & 222 & 1158 & 3.15 & 26.4 & 
        395 & \bf 17.41 & 847 & \bf 57.84 \\
        Yu Darvish & 214 & 1233 & 3.82 & 23.1 &  
        534 & 0.74 & 1085 & 1.40 \\
        Kyle Hendricks & 256 & 1468 & 3.85 & 22.4 & 
        709 & 0.20 & 1296 & 0.33 \\
        Marcus Stroman & 237 & 1334 & 3.85 & 20.9 & 
        707 & \bf 186.71 & 1227 & \bf 1087.79 \\
        Jack Flaherty & 184 & 984 & 3.82 & 15.9 & 
        303 & \bf 107.44 & 777 & \bf 12.77 \\
        Matthew Boyd & 187 & 1010 & 4.47 & 15.0 & 
        598 & 0.49 & 892 & 0.71 \\
        \bottomrule
    \end{tabular}
    \caption{Stopped e-values for batter upside in 3TTO over 1TTO for some of the top pitchers in MLB, using regular season data from 2016 to 2025. For each starting pitcher, we show the stopped e-values at the end of the 2020 regular season ($\tau_{2020}$), before the Snell incident, and at the end of 2025 ($\tau_{2025}$). Some other statistics are also shown, including games started (GS), innings pitched (IP), earned run average (ERA), and the Fangraphs wins above replacement (fWAR) metric. E-values larger than 10 are highlighted in bold.}
    \label{tbl:evalues_top_pitchers}
\end{table}

Despite exhibiting the 3TTO downside (or batter upside), Blake Snell is considered one of the top pitchers in MLB: out of all starting pitchers who pitched from 2016 to 2025 (ten regular seasons), Snell ranked 16th in terms of earned run average (ERA), a traditional measure of pitcher performance, and 12th in terms of Fangraphs wins above replacement (fWAR) metric, a popular sabermetric measure.\footnote{Source: Fangraphs, \url{https://www.fangraphs.com/}}
Given the now-pervasive belief that the 3TTO effect is a common phenomenon for pitchers, we ask whether other top pitchers also exhibit evidence of batter upside in 3TTO relative to 1TTO.
Because managers are less likely to take out top pitchers for their 3TTO, we are generally less likely to encounter the aforementioned issues of selection bias. 

In Table~\ref{tbl:evalues_top_pitchers}, we report the GRO e-processes (specifically, AdaGRO-Exp) for testing the 3TTO downside for ten MLB pitchers, using regular season data from 2016 to 2025. 
To provide a compact summary, we report the stopped e-values at the end of the 2020 regular season ($\tau_{2020}$), right before the 2020 World Series, and at the end of 2025 ($\tau_{2025}$), the last full season at the time of writing. 
We select 10 high-performing pitchers (all with fWAR $\geq 15.0$), including Snell, over 2016--2025. 
In addition to the e-values, we also report other standard metrics for starting pitchers, including games started (GS), innings pitched (IP), ERA, and fWAR. 
The table is sorted by fWAR, which estimates how many wins a pitcher contributes to his team above a replacement-level pitcher.

The results are somewhat heterogeneous across these top pitchers. 
There is no strong evidence for the 3TTO downside observed for other best-performing pitchers in terms of fWAR (Scherzer, Nola, Cole, and Kershaw), making Snell an exception. 
Yet, among pitchers with lower fWAR than Snell, some also do not have observed 3TTO downside (Darvish, Hendricks, and Boyd), while others exhibit strong (if not decisive) evidence for the 3TTO downside (Stroman and Flaherty). 
We remark that we only show a sample of 10 pitchers, mainly to contextualize our findings in the main section, and a more comprehensive analysis across all MLB pitchers is left for future work.

\section{Extensions}\label{sec:extensions}

In this section, we elaborate on extensions to general integral stochastic orders and to multivariate, non-i.i.d., and unpaired data.

\subsection{General integral stochastic orders}\label{app:integral_stochastic_orders}

The e-process constructions in Theorems~\ref{thm:Test-supermartingale-for-the-FSD_null} and \ref{thm:Test-supermartingale-for-the-kSD_null} extend to testing SD relations under arbitrary classes of utility generators, also known as \emph{integral stochastic orders}~\citep{denuit2002smooth,feng2025integral}. 
Given an arbitrary class of utility functions $\calU = \{u: \calZ \to \R\}$, we may define $Y \sd_{\calU} X$ if and only if $\ex{u(Y)} \leq \ex{u(X)}$ for all $u \in \calU$. Below, we elaborate on two particularly important examples: the \emph{increasing convex order} and the \emph{infinite-order} SD.

\textbf{Increasing convex orders.} 
By complete analogy to the construction for 2-SD (increasing concave order), we can also construct e-variables and test supermartingales for testing SD in the increasing convex order (for risk-seeking DMs, in the expected-utility framework). We now require an upper bound instead of a lower bound on the support, that is $\calZ=(-\infty,b]$ for some $b<\infty$.
The associated utility generators are
\begin{equation*}\label{eq:utility_generator_icx}
    u_z^{\mathsf{icx}}(x) = \frac{(x - z)_+}{b - z}, \quad x \in \calZ,
\end{equation*}
for $z\in (-\infty,b)$, and $u_b^{\mathsf{icx}}(x)\equiv0$.
Essentially, increasing convex order relies on the survival function $\bar{F}(z) = 1 - F(z)$ in its definition, leading to the above generator; see \citet[][Theorem 1.5.7]{muller2002comparison} for a proof. 
The rest of the construction and proof is completely analogous to that for Theorem~\ref{thm:Test-supermartingale-for-the-kSD_null} in the $k=2$ case (even the analogous $k$-th icx order can be handled similarly). 

\textbf{Infinite-order SD (Laplace transform order).} 
Another popular integral stochastic order is the \emph{infinite-order SD}, also known as the \emph{Laplace transform order}~\citep{fishburn1980continuaunbounded}, defined as
\begin{equation*}\label{eq:infinite_sd}
    Y \sd_{\infty} X \quad \text{if and only if} \quad \ex{e^{-rX}} \leq \ex{e^{-rY}}, \, \forall r \in \R_{\geq 0},
\end{equation*}
assuming that the expectations are well-defined. This order is of particular interest in economics for modeling \emph{constant absolute risk aversion (CARA)} and in decision theory for modeling the preference of DMs to ``combine good lotteries with bad ones''~\citep[e.g.,][]{eeckhoudt2010simplifying}. 

Under the assumption of a lower bound, e.g.\ $\mathcal{Z}=[0,\infty)$, all given expectations exist and the infinite-order SD is generated by the exponential utility functions, $u_r^{[\infty]}(x) = 1-e^{-rx} \in [0,1],\, x \in \calZ$, for $r \geq 0$. In this case, the building-block e-variables and e-process construction from Theorem~\ref{thm:Test-supermartingale-for-the-kSD_null} directly extend with the utility function class indexed over $r \geq 0$. 

\textbf{Multivariate stochastic orders. }
Given the variety of notions of multivariate stochastic dominance proposed in the literature, see, e.g.,\ \citet[Section 6]{shaked2007stochastic}, we limit ourselves to a brief overview and a sketch of how our methods could be extended to obtain sequential SD tests for multivariate data. 

As in the univariate case, we say that a random vector $\bY$ is dominated by $\bX$, or $\bY\preceq_\calU \bX$ if $\Ex{}{u(\bY)} \leq \Ex{}{u(\bX)}$ for all $u\in \calU$, where $\calU = \{u:\R^d \to \R\}$ is a multivariate utility class. 
For example, the \emph{usual multivariate stochastic order} is obtained for the utility class $\calU=\{\indicator{\boldsymbol{x} \in U}\mid U\subseteq \R^d\textrm{ an upper set}\}$. 
In this case, all utility functions are bounded, and our results extend for the building-block e-variables $1+\lambda (\indicator{\bY \in U}-\indicator{\bX \in U})$. 

\subsection{Non-i.i.d.\ data}\label{sec:non_iid_data}

We continue to assume that the data arrive in pairs $(X_t,Y_t), t\in \mathbb{N}$, but no longer require that both $(X_t)_{t\in \N}$ and $(Y_t)_{t\in \N}$ are identically distributed.
Inspired by \citet{mineiro2023nonstationarity}, who construct time-uniform CDF bounds for non-stationary data, we define the \emph{strong FSD null (for non-i.i.d.~data)} as 
\begin{equation}\label{eq:def_strong_FSD_null_for_non_stationary_data}
    \calH_0^\textrm{s} = \incurly{
        \Psymb \in \frakB \mid \Psymb(X_t\leq z\mid \calF_{t-1})\leq \Psymb(Y_t\leq z\mid \calF_{t-1}),\; \forall z\in \calZ,\forall t\in \N
    } \subseteq \calH_0\,.
\end{equation}
In words, the null states that $X_t$ stochastically dominates $Y_t$ at each time step $t$, conditioned on the past. 
Here, the methods developed in Section~\ref{sec:FSD} readily extend to this setting, once again due to their martingale-based construction.  
In particular, for any $z \in \calZ$ and predictable $\lambda\in [0,1]$, $S_t(\lambda, z)$ defined in \eqref{eq:def_coin_betting_e-variable} remains a sequential e-variable for $\calH_0^{\textrm{s}}$, and the construction based on predictable mixtures over $\calZ$ (Theorem~\ref{thm:Test-supermartingale-for-the-FSD_null}) continues to apply. 

By conditioning, the inequality in \eqref{eq:def_strong_FSD_null_for_non_stationary_data} is read in an almost-sure sense, and for some given data streams it may happen that $\calH_0^\textrm{s}=\emptyset$. 
That is, if there is no continuity within each data stream and $\Psymb_{X_t}$ and $\Psymb_{Y_t}$ may change arbitrarily over time, inference for $\calH_0^\textrm{s}$ may become meaningless. 
Even in this setting, it may still hold that, conditional on the past, $X$ dominates $Y$ \emph{on average} at each $t$. 
This motivates the \emph{weak FSD null (for non-i.i.d.~data)}:
\begin{equation}\label{eq:def_weak_FSD_null_for_non_stationary_data}
     \calH_0^\textrm{w} =\left\{
        \Psymb \in \frakB \;\middle|\; \frac{1}{t}\sum_{\ell=1}^t\Psymb(X_\ell\leq z\mid \calF_{\ell-1})\leq \frac{1}{t}\sum_{\ell=1}^t\Psymb(Y_\ell\leq z\mid \calF_{\ell-1}),\; \forall z\in \calZ, \forall t\in \N
    \right\}\,.
\end{equation}

The difference between the strong null~\eqref{eq:def_strong_FSD_null_for_non_stationary_data} and the weak null~\eqref{eq:def_weak_FSD_null_for_non_stationary_data} mirrors that of the strong and weak nulls for comparing sequential forecasters in \citet{henzi2022valid} and \citet{choe2023comparing}, respectively. 
(The strong-versus-weak terminology itself originates from the classical distinction between Fisherian and Neymanian tests of individual versus average treatment effects.)
The question of what stochastic dominance exactly means for non-stationary data remains context-dependent. 
For the weak hypothesis, the building-block e-variables in Section~\ref{sec:FSD} would have to be replaced with e-variables of exponential form, such as empirical-Bernstein e-variables~\citep{howard2021time} employed by \citet{mineiro2023nonstationarity}, since $\calH_0^\textrm{w}$ only imposes a condition on the averaged conditional mean differences. 
Interestingly, unlike in the case of forecast comparison, the strong and weak nulls do \emph{not} coincide even when we assume weak stationarity of each data stream. 
In other words, the weak null is strictly larger than the strong null even under weak stationarity.

\subsection{Unpaired data}\label{sec:unpaired_data}

If the data streams $(X_t)_{t\in \N}$ and $(Y_t)_{t\in \N}$ do not arrive in pairs, we may instead compute the building-block e-variables from Section \ref{sec:FSD} on whole batches.
Suppose, at $t\in \N$, we observe $\boldsymbol{X}_t \in \R^{b_t}$ and $\boldsymbol{Y}_t \in \R^{b_t'}$, for $b_t,b_t'\geq 1$.
Then, for any $z\in \calZ$ and predictable $\lambda \in [0,1]$, the quantity $1+\lambda(\hat{F}_{{\boldsymbol{X}_t}}(z)-\hat{F}_{{\boldsymbol{Y}_t}}(z))$ is a sequential e-variable for the SD null~\eqref{eq:def_FSD_null} with respect to the batch filtration $\calF_t= \sigma(\{(\boldsymbol{X}_\ell ,\boldsymbol{Y}_\ell ), \ell \leq t\})$, and the sequential construction of Theorem~\ref{thm:Test-supermartingale-for-the-FSD_null} directly applies. 
This further applies to higher-order SD, where for $z\in \calZ$, $\hat{F}_{{\boldsymbol{X}_t}}(z)$ is replaced by $\bar{u}_{z}^{[k]}(\boldsymbol{X}_t)=\frac{1}{b_t}\sum_{j=1}^{b_t} u^{[k]}_z(X_{j,t})$ and analogously for $\boldsymbol{Y}_t$.